\title{%
Answering \fomodtext queries under updates\\ on bounded degree databases%
\thanks{This is the full version of the conference contribution
  \cite{BKS-ICDT17}.}%
}%
  \author{Christoph Berkholz, Jens Keppeler, Nicole Schweikardt \\
    Humboldt-Universität zu Berlin \\
    \texttt{\{berkholz,keppelej,schweika\}@informatik.hu-berlin.de}}
       \newtheorem{theorem}{Theorem}[section] 
        \newtheorem{lemma}[theorem]{Lemma}
        \newtheorem{corollary}[theorem]{Corollary}
        \theoremstyle{definition}
        \newtheorem{claim}[theorem]{Claim}
\newcommand{\nc}[1]{\newcommand{#1}}
\newcommand{\rnc}[1]{\renewcommand{#1}}
\rnc{\leq}{\ensuremath{\leqslant}}
\rnc{\geq}{\ensuremath{\geqslant}}
\rnc{\le}{\leq}
\rnc{\ge}{\geq}
\nc{\isdef}{\ensuremath{:=}}
\nc{\deff}{\isdef}
\nc{\defi}{\isdef}
\nc{\set}[1]{\ensuremath{\{#1\}}}
\nc{\setsize}[1]{\ensuremath{|#1|}}
\nc{\Setsize}[1]{\ensuremath{\big|#1\big|}}
\nc{\Set}[1]{\ensuremath{\big\{#1\big\}}}
\nc{\setc}[2]{\set{#1 \, : \, #2}}
\nc{\Setc}[2]{\Set{#1 \, : \, #2}}
\nc{\Setcc}[2]{\left\{#1 \, : \, #2\right\}}
\nc{\aufgerundet}[1]{\ensuremath{\lceil #1 \rceil}}
\nc{\abgerundet}[1]{\ensuremath{\lfloor #1 \rfloor}}
\nc{\dcup}{\ensuremath{\dot\cup}}
\nc{\ov}[1]{\ensuremath{\overline{#1}}}
\nc{\NN}{\ensuremath{\mathbb{N}}}
\nc{\NNpos}{\ensuremath{\NN_{\scriptscriptstyle\geq 1}}}
\nc{\RR}{\ensuremath{\mathbb{R}}}
\nc{\RRpos}{\ensuremath{\RR_{\scriptscriptstyle\geq 0}}}
\nc{\fdom}{\ensuremath{\textup{dom}}} %
\nc{\fcodom}{\ensuremath{\textup{codom}}} %
\nc{\und}{\ensuremath{\wedge}}
\nc{\Und}{\ensuremath{\bigwedge}}
\nc{\oder}{\ensuremath{\vee}}
\nc{\Oder}{\ensuremath{\bigvee}}
\nc{\nicht}{\ensuremath{\neg}}
\nc{\impl}{\ensuremath{\to}}
\nc{\gdw}{\ensuremath{\leftrightarrow}}
\nc{\free}{\ensuremath{\textrm{\upshape free}}}
\nc{\qr}{\ensuremath{\textrm{\upshape qr}}}
\nc{\ar}{\ensuremath{\operatorname{ar}}}
\nc{\Structure}[1]{\ensuremath{\mathcal{#1}}}
\nc{\A}{\Structure{A}}
\nc{\B}{\Structure{B}}
\nc{\C}{\Structure{C}}
\nc{\isom}{\ensuremath{\cong}}
\nc{\querycont}{\ensuremath{\sqsubseteq}}
\nc{\eval}[2]{\ensuremath{#1(#2)}}
\nc{\semantik}[1]{\ensuremath{\left\llbracket#1\right\rrbracket}}
\nc{\CanDB}[1]{\ensuremath{\A_{#1}}} %
\nc{\CanTup}[1]{\ensuremath{t_{#1}}} %
\newcommand{\notmodels}{\ensuremath{\nvDash}}
\newcommand{\queryphi}{\varphi}
\newcommand{\varv}{v}
\newcommand{\varu}{u}
\newcommand{\vary}{y}
\newcommand{\varz}{z}
\newcommand{\relS}{S} %
\newcommand{\relT}{C} %
\newcommand{\relE}{E} %
\newcommand{\relEtilde}{\tilde{E}} %
\newcommand{\relK}{K} %
\newcommand{\graph}{\mathcal G}
\newcommand{\graphH}{\mathcal H}
\newcommand{\setV}{V} %
\newcommand{\setS}{S}
\newcommand{\equivd}{\equiv_d}
\newcommand{\Odc}{\ensuremath{d^{\bigoh(c^2)}}\xspace}
\newcommand{\GofD}{\graph_{\DB}}
\newcommand{\GofDnew}{\graph_{\DBnew}}
\newcommand{\GofDold}{\graph_{\DBold}}
\newcommand{\graphStart}{\graph_{0}}
\newcommand{\sphere}[2]{\ensuremath{\text{sph}_{#1}(#2)}}
\newcommand{\connsphere}[2]{\ensuremath{\text{conn-sph}_{#1}(#2)}}
\newcommand{\x}{\ov{x}}
\newcommand{\z}{\ov{z}}
\renewcommand{\a}{\ov{a}}
\renewcommand{\b}{\ov{b}}
\nc{\LogicFont}[1]{\ensuremath{\mathsf{#1}}\xspace}
\nc{\FO}{\LogicFont{FO}}
\nc{\FOmod}{\ensuremath{\LogicFont{FO}{+}\LogicFont{MOD}}\xspace}
\nc{\fomod}{\FOmod}
\nc{\fomodtext}{FO{+}MOD\xspace}
\nc{\mymod}{\ensuremath{\textsf{mod}}}
\nc{\existsmc}[2]{\ensuremath{\exists^{#1\textsf{\,mod\,}#2\,}}} %
\nc{\existsim}{\ensuremath{\existsmc{i}{m}}}
\nc{\existsgeq}[1]{\ensuremath{\exists^{\geq #1}}} %
\nc{\existsk}{\ensuremath{\existsgeq{k}}} %
\nc{\existsm}{\ensuremath{\existsgeq{m}}} %
\nc{\existsi}{\ensuremath{\existsgeq{i}}} %
\newcommand{\isomorph}{\ensuremath{\cong}}
\newcommand{\dist}{\ensuremath{\textit{dist}}}
\nc{\Vars}{\ensuremath{\textrm{\upshape vars}}}
\nc{\atoms}{\ensuremath{\textrm{\upshape atoms}}}
\nc{\Adom}{\ensuremath{\textrm{\upshape adom}}}
\nc{\adom}[1]{\ensuremath{\Adom(#1)}} %
\nc{\dom}[1]{\ensuremath{\textrm{\upshape dom}(#1)}} %
\newcommand{\poly}{\operatorname{\textit{poly}}}
\newcommand{\bigoh}{\mathcal{O}}
\newcommand{\bigOh}{\bigoh}
\newcommand{\littleoh}{o}
\newcommand{\parent}{\pointerfont{parent}}
\nc{\arrayfont}[1]{\ensuremath{\texttt{#1}}}
\newcommand{\query}{\ensuremath{\varphi}}
\newcommand{\ccount}{\ensuremath{\mathit{count}}}
\newcommand{\size}[1]{\ensuremath{|\!|#1|\!|}}
\nc{\card}[1]{\ensuremath{|#1|}}
\newcommand{\true}{\ensuremath{\mathtt{true}}\xspace}
\newcommand{\false}{\ensuremath{\mathtt{false}}\xspace}
\nc{\DBold}{\ensuremath{{\DB_{\textit{old}}}}}
\nc{\DBnew}{\ensuremath{{\DB_{\textit{new}}}}}
\newcommand{\UpdateSet}{\ensuremath{{U}}}
\newcommand{\structG}{\ensuremath{\mathcal{G}}}
\newcommand{\indexSetSmall}{\ensuremath{\zeta}}
\newcommand{\tupleSetSmall}[1]{\ensuremath{\mathcal{S}_{#1}}}
\newcommand{\colourG}[1]{\ensuremath{C_{#1}^{\structG}}}
\newcommand{\edgeG}{\ensuremath{E^{\structG}}}
\newcommand{\skipp}[1]{\ensuremath{\operatorname{\textsf{skip}}_{#1}}}
\newcommand{\void}{\ensuremath{\operatorname{\textsf{void}}}}
\newcommand{\firstelement}{\ensuremath{\operatorname{\textsf{first}}}}
\newcommand{\successor}{\ensuremath{\operatorname{\textsf{succ}}}}
\newcommand{\rtdelayd}{\ensuremath{\ell^{3}d}}
\newcommand{\edgeE}[2]{\ensuremath{E_{#1}^{#2}}}
\nc{\arrayE}{\arrayfont{E}}
\nc{\edgeEi}{\ensuremath{E_i}}
\nc{\setU}{\ensuremath{U}}
\newcommand{\potenzmengeof}[1]{2^{#1}}
\newcommand{\ComplexityClass}[1]{\ensuremath{\textsf{\upshape #1}}}
\newcommand{\FPT}{\ComplexityClass{FPT}\xspace}
\newcommand{\AWstar}{\ensuremath{\ComplexityClass{AW}[*]}\xspace}
\newcommand{\AWstarcompl}{\mbox{\ensuremath{\ComplexityClass{AW}[*]}-complete}\xspace}
\nc{\insertp}{\textsc{Insert}}
\nc{\cleanup}{\textsc{cleanUp}}
\nc{\cleanups}{\textsc{cleanUp'}}
\nc{\myparagraph}[1]{\subparagraph*{#1.}}
\nc{\Yes}{\texttt{yes}}
\nc{\No}{\texttt{no}}
\nc{\Dom}{\ensuremath{\textbf{dom}}}
\nc{\Var}{\ensuremath{\textbf{var}}}
\nc{\schema}{\ensuremath{\sigma}}
\nc{\DB}{\ensuremath{D}} %
\nc{\DBStrich}{\ensuremath{D'}} %
\nc{\DBstart}{\ensuremath{{\DB_0}}} %
\nc{\DBempty}{\ensuremath{{\DB_{\emptyset}}}} %
\nc{\DS}{\ensuremath{\mathtt{D}}} %
\rnc{\phi}{\queryphi}
\nc{\UpdateFont}[1]{\ensuremath{\textsf{#1}}}
\nc{\Delete}{\UpdateFont{delete}}
\nc{\Insert}{\UpdateFont{insert}}
\nc{\Update}{\UpdateFont{update}}
\nc{\AlgoFont}[1]{\ensuremath{\textbf{#1}}}
\nc{\PREPROCESS}{\AlgoFont{preprocess}}
\nc{\INIT}{\AlgoFont{init}}
\nc{\UPDATE}{\AlgoFont{update}}
\nc{\ENUMERATE}{\AlgoFont{enumerate}}
\nc{\COUNT}{\AlgoFont{count}}
\nc{\ANSWER}{\AlgoFont{answer}}
\nc{\TEST}{\AlgoFont{test}}
\nc{\EOE}{\texttt{EOE}\xspace} %
\nc{\preprocessingtime}{\ensuremath{t_p}}
\nc{\inittime}{\ensuremath{t_i}}
\nc{\delaytime}{\ensuremath{t_d}}
\nc{\updatetime}{\ensuremath{t_u}}
\nc{\updatetimeStrich}{\ensuremath{t'_u}}
\nc{\answertime}{\ensuremath{t_a}}
\nc{\countingtime}{\ensuremath{t_c}}
\nc{\testingtime}{\ensuremath{t_t}}
\nc{\phiBTypical}{\ensuremath{\phi'_{\relS\text{-}\relE\text{-}\relT}}}
\nc{\phiJTypical}{\ensuremath{\phi_{\relS\text{-}\relE\text{-}\relT}}}
\nc{\phiET}{\ensuremath{\phi_{\relE\text{-}\relT}}}
\nc{\restrict}[2]{\ensuremath{{#1}_{|#2}}}
\nc{\extend}[3]{\ensuremath{{#1}\frac{#3}{#2}}}
\nc{\valuation}{\ensuremath{\beta}}
\nc{\assign}{\ensuremath{\alpha}}
\nc{\assignb}{\ensuremath{\beta}}
\nc{\assignStrich}{\ensuremath{\assign{}'}}
\nc{\emptyassign}{\ensuremath{\emptyset}}
\nc{\Assign}[2]{\ensuremath{\frac{#2}{#1}}}
\nc{\vroot}{\ensuremath{\varv_{\textsl{root}}}}
\nc{\pointerfont}[1]{\textit{#1}}
\nc{\varitem}[1]{\ensuremath{v^{#1}}}
\nc{\assitem}[1]{\ensuremath{\assign^{#1}}}
\nc{\constitem}[1]{\ensuremath{a^{#1}}}
\nc{\parentitem}[1]{\ensuremath{\parent^{#1}}}
\nc{\childitem}[2]{\ensuremath{\pointerfont{child}^{#1}_{#2}}}
\nc{\llist}[2]{\ensuremath{\mathcal{L}_{#2}^{#1}}}
\nc{\nextlistitem}[1]{\ensuremath{\pointerfont{next-listitem}^{#1}}}
\nc{\prevlistitem}[1]{\ensuremath{\pointerfont{prev-listitem}^{#1}}}
\nc{\countitem}[1]{\ensuremath{\ccount^{#1}}}
\nc{\Null}{\ensuremath{0}}
\nc{\arrayA}{\arrayfont{A}}
\nc{\ITEMS}{\mathcal{I}}
\nc{\GG}[1]{\ensuremath{{G}^{#1}}} %
\nc{\GGD}{\GG{D}} 
\nc{\GGT}{\GG{T}} 
\nc{\emptyword}{\ensuremath{\varepsilon}}
\nc{\emptytuple}{\ensuremath{()}}
\nc{\neighb}[3]{\ensuremath{N_{#1}^{#2}(#3)}} %
\nc{\nbset}[2]{\ensuremath{N_{#1}^{#2}}}
\nc{\nrD}[1]{\ensuremath{\neighb{r}{\DB}{#1}}} %
\nc{\nrDStrich}[1]{\ensuremath{\neighb{r}{\DBStrich}{#1}}} %
\nc{\nrT}[1]{\ensuremath{\neighb{r}{T}{#1}}} %
\nc{\Neighb}[3]{\ensuremath{\mathcal{N}_{#1}^{#2}(#3)}} %
\nc{\nb}[2]{\ensuremath{\mathcal{N}_{#1}^{#2}}}
\nc{\NrD}[1]{\ensuremath{\Neighb{r}{\DB}{#1}}} %
\nc{\NrDStrich}[1]{\ensuremath{\Neighb{r}{\DBStrich}{#1}}} %
\nc{\NrT}[1]{\ensuremath{\Neighb{r}{T}{#1}}} %
\nc{\Types}[3]{\ensuremath{\mathcal{T}_{#1}^{\schema,#2}(#3)}} %
\nc{\Typesrdk}{\Types{r}{d}{k}}
\nc{\Typesrd}[1]{\Types{r}{d}{#1}}
\nc{\Typesd}[2]{\Types{#1}{d}{#2}}
\nc{\isotypes}[2]{\ensuremath{\mathcal{T}_{#1}^{#2}}}
\nc{\Typeslist}[3]{\ensuremath{\mathcal{L}_{#1}^{\schema,#2}(#3)}}
\nc{\Typeslistrdk}{\Typeslist{r}{d}{k}}
\nc{\Typeslistrd}[1]{\Typeslist{r}{d}{#1}}
\nc{\Typeslistd}[2]{\Typeslist{#1}{d}{#2}}
\nc{\type}{\ensuremath{\tau}}
\nc{\inducedSubStr}[2]{\ensuremath{#1[#2]}}
\nc{\sgn}[1]{\ensuremath{\textup{sgn}(#1)}}
\nc{\sgnDB}[1]{\ensuremath{\textup{sgn}^{\DB}(#1)}}
\nc{\mycount}{\ensuremath{n}}
\nc{\Ngraph}{\ensuremath{N^{\graph}}}
\nc{\ellEnum}{\ensuremath{s}}
\begin{document}

\maketitle

\makeatletter{}%

\begin{abstract}
We investigate the query evaluation problem for fixed queries over fully dynamic
databases, where tuples can be inserted or deleted.
The task is to design a dynamic algorithm that
immediately reports the new result of a fixed query after every database update. 

We consider queries in first-order
logic (FO) and its extension with modulo-counting quantifiers
(FO+MOD), and show that they
can be efficiently evaluated under updates, provided that the dynamic database does
not exceed a certain degree bound.

In particular, we construct a data structure that allows to
answer a Boolean FO+MOD query and
to compute the size of the 
result of a non-Boolean query
within constant time after every database update.  Furthermore, after
every update we are able to immediately enumerate the new query result
with constant delay between the output tuples.
The time needed to build the data structure is linear in the size of
the database.

Our results extend earlier work on the evaluation
of first-order queries on static databases of bounded degree and rely
on an effective Hanf normal form for FO+MOD recently obtained by
Heimberg, Kuske, and Schweikardt (LICS 2016).
\end{abstract}

\makeatletter{}%
\section{Introduction}\label{sec:introduction}

Query evaluation is a fundamental task in databases, and a vast amount of
literature is devoted to the complexity of this problem.
In this paper we study query evaluation on relational databases in the
``dynamic setting'', 
where the database may be updated by inserting or deleting tuples.
In this setting, an evaluation
algorithm receives a query $\query$ and an initial database $\DB$ and
starts with a preprocessing phase that computes a suitable data
structure to represent the result of evaluating $\query$ on $\DB$.
After every database update, the data structure is updated so that it
represents the result of evaluating $\query$ on the updated database.
The data structure shall be designed in such a way that it  quickly
provides the query result, preferably in constant time (i.\,e.,
independent of the database size).
We focus on the following flavours of query evaluation. 
\begin{itemize}
\item \emph{Testing:} Decide whether a given tuple $\ov{a}$ is contained in $\query(\DB)$.
\item \emph{Counting:} Compute  $\setsize{\query(\DB)}$ (i.e., the
  number of tuples that belong to $\query(\DB)$).
\item \emph{Enumeration:} Enumerate $\query(\DB)$ with a
  bounded delay between the output tuples.
\end{itemize}
Here, as usual, $\query(\DB)$ denotes the $k$-ary
relation obtained by evaluating a $k$-ary query
$\query$ on a relational database $\DB$.
For \emph{Boolean} queries, all three tasks boil down to
\begin{itemize}
\item \emph{Answering:} Decide if $\phi(\DB)\neq\emptyset$.
\end{itemize}

Compared to the \emph{dynamic descriptive complexity} framework
introduced by Patnaik and Immerman \cite{Patnaik.1997}, which focuses
on the \emph{expressive power} of first-order logic on dynamic databases and
has led to a rich body of literature (see \cite{Schwentick.2016} for
a survey), we are interested in the 
\emph{computational complexity} of query evaluation.
The query language studied in this paper is $\FOmod$,
the extension of first-order logic $\FO$ with modulo-counting
quantifiers of the form  $\existsim x\,\psi$, expressing
that the number of witnesses $x$ that satisfy $\psi$ is congruent to $i$
modulo $m$.
$\FOmod$ can be viewed as a subclass of SQL that properly extends the
relational algebra.

Following \cite{BKS_enumeration_PODS17}, we say that a query evaluation
algorithm is efficient if the update time is either constant or
at most polylogarithmic ($\log^cn$) in the size of the database.
As a consequence, efficient query evaluation in the dynamic setting is only
possible if the static problem (i.e., the setting without database
updates) can be solved in linear or pseudo-linear
($n^{1+\varepsilon}$) time.
Since this is not always possible, we provide a short
overview on known results about first-order query evaluation
on static databases and then proceed by discussing our results in the dynamic setting.

\myparagraph{First-order query evaluation on static databases} 
The problem
of deciding whether a given database $\DB$ satisfies a 
$\FO$-sentence $\query$ is \AWstarcompl (parameterised by $\size{\query}$) and
it is therefore generally believed that the evaluation problem cannot be solved
in time $f(\size{\query})\size{\DB}^{c}$ for any computable $f$ and
constant $c$ (here, $\size{\query}$ and $\size{\DB}$ denote the size
of the query and the database, respectively).  For this reason, a long line of research focused on increasing classes of sparse instances
ranging from databases of \emph{bounded degree} \cite{Seese.1996} (where
every domain element occurs only in a constant number of tuples in the
database) to classes that are \emph{nowhere dense} \cite{Grohe.2014}.
In particular, Boolean first-order queries can be evaluated on
classes of databases of bounded degree in linear time
$f(\size{\query})\size{\DB}$, where the constant factor
$f(\size{\query})$ is 3-fold exponential in $\size{\query}$
\cite{Seese.1996,FrickGrohe_APAL2004}.  As a matter of fact, Frick and Grohe
\cite{FrickGrohe_APAL2004} showed that the 3-fold exponential blow-up
in terms of the query size is unavoidable assuming $\FPT\neq\AWstar$.

Durand and Grandjean \cite{DurandGrandjean_BoundedDegree} and Kazana
and Segoufin \cite{KazanaSegoufin_BoundedDegree} considered the
task of enumerating the result of a $k$-ary first-order query on
bounded degree databases 
and showed that after a linear time preprocessing
phase the query result can be enumerated with constant
delay.
This result was later extended to classes of databases of bounded
expansion \cite{Kazana.2013}.
Kazana and Segoufin \cite{Kazana.2013} also showed that counting the
number of result 
tuples of a $k$-ary first-order query on databases of bounded expansion (and
hence also on databases of bounded degree) can be done in time
$f(\size{\query})\size{\DB}$.
In \cite{DBLP:conf/pods/DurandSS14} an analogous result was obtained
for classes of databases of low degree (i.\,e., degree at most
$\size{\DB}^{o(1)}$) and pseudo-linear time 
$f(\size{\query})\size{\DB}^{1+\varepsilon}$; the paper also presented
an algorithm for enumerating the query results with constant delay
after pseudo-linear time preprocessing.

\myparagraph{Our contribution} We extend the known linear time
algorithms for first-order logic on classes of databases of bounded degree to the
more expressive query language \FOmod. 
Moreover, and more importantly, we lift the tractability to the
dynamic setting 
and show that
the result of $\FO$ and $\FOmod$-queries
can be maintained with constant update
time. In particular, we obtain the following results. Let
$\query$ be a fixed $k$-ary $\FOmod$-query and $d$ a fixed degree bound on the
databases under consideration.  Given an initial database $\DB$, we
construct in linear
time $f(\size{\query},d)\size{\DB}$ a data structure that can be updated in
constant time $f(\size{\query},d)$ when
a tuple is inserted into or
deleted from a relation of $\DB$.
After each update the data structure allows to 
\begin{itemize}
\item immediately answer  $\query$ on $\DB$ if $\query$ is a Boolean
  query (Theorem~\ref{thm:AnsweringBooleanQueries}),
\item test for a given tuple $\ov{a}$ whether $\ov{a}\in\query(\DB)$ in
   time $\bigoh(k^2)$ (Theorem~\ref{thm:testing}),
\item immediately output the number of result tuples $\setsize{\query(\DB)}$ (Theorem~\ref{thm:counting}), and
\item enumerate all tuples $(a_1,\ldots,a_k)\in\query(\DB)$ with
  $\bigoh(k^3)$ delay (Theorem~\ref{thm:enumeration-improved}).
\end{itemize}
For fixed $d$, the parameter function
$f(\size{\query},d)$
is 3-fold exponential
in terms of the query size, which is (by Frick
and Grohe \cite{FrickGrohe_APAL2004}) optimal assuming
$\FPT\neq\AWstar$.

\myparagraph{Outline}
Our dynamic query evaluation algorithm crucially relies on the locality
of \FOmod and in particular an effective Hanf normal form for \FOmod
on databases of bounded degree recently obtained by
Heimberg, Kuske, and Schweikardt \cite{HKS_Hanf_LICS16}.
After some basic definitions in Section~\ref{section:Preliminaries} we
briefly state their result in Section~\ref{section:HNF} and obtain a
dynamic algorithm for Boolean $\FOmod$-queries in Section~\ref{section:BooleanQueries}. 
After some preparations for non-Boolean queries in
Section~\ref{section:TypesAndSpheres}, we present the algorithm for
testing in Section~\ref{section:testing}.
In Section~\ref{section:dbtograph} we reduce the task of counting and
enumerating $\FOmod$-queries in the dynamic setting to the problem of counting
and enumerating independent sets in graphs of bounded degree.
We use this reduction to provide efficient dynamic counting and
enumeration algorithms in Section~\ref{section:counting} and
\ref{section:enumerating}, respectively, and we conclude in Section~\ref{sec:conclusion}.

\myparagraph{Acknowledgements}
Funded by the Deutsche Forschungsgemeinschaft (DFG, German Research
Foundation) -- SCHW~837/5-1.

\makeatletter{}%
\section{Preliminaries}\label{section:Preliminaries}

We write $\NN$ for the set of non-negative integers and let 
$\NNpos\deff\NN\setminus\set{0}$ and $[n]\deff\set{1,\ldots,n}$ for
all $n\in\NNpos$.
By $\potenzmengeof{M}$ we denote the power set of a set $M$.
For a partial function $f$ we write $\fdom(f)$ and $\fcodom(f)$ for the
domain and the codomain of $f$, respectively.

\myparagraph{Databases}
We fix a countably infinite set $\Dom$, the \emph{domain} of potential
database entries. Elements in $\Dom$ are called \emph{constants}.
A \emph{schema} is a finite set $\schema$ of relation symbols, where
each $R\in\schema$ is equipped with a fixed \emph{arity} $\ar(R)\in\NNpos$. 
Let us fix a schema $\schema=\set{R_1,\ldots,R_{|\schema|}}$.
A \emph{database} $\DB$ of schema $\schema$ ($\schema$-db, for short), 
is of the form $\DB=(R_1^\DB,\ldots,R_{|\schema|}^\DB)$, where each 
$R_i^\DB$ is a finite subset of $\Dom^{\ar(R_i)}$.
The \emph{active domain} $\adom{\DB}$ of $\DB$ is the smallest subset
$A$ of $\Dom$ such that $R_i^\DB\subseteq A^{ar(R_i)}$ for each $R_i$
in $\schema$.

The \emph{Gaifman graph} of a $\schema$-db $\DB$ is the undirected
simple graph $\GGD=(V,E)$ with vertex set $V\deff\adom{\DB}$,
where there is an edge between vertices $u$ and $v$ whenever $u\neq v$
and there are
$R\in\schema$ and $(a_1,\ldots,a_{\ar(R)})\in R^{\DB}$ such that 
$u,v\in\set{a_1,\ldots,a_{\ar(R)}}$.
A $\schema$-db $\DB$ is called \emph{connected} if its Gaifman graph
$\GGD$ is connected; the \emph{connected components} of $\DB$ are the
connected components of $\GGD$.
The \emph{degree} of a database $\DB$ is the degree of its Gaifman
graph $\GGD$, i.e., the maximum number of neighbours of a node of $\GGD$.
Throughout this paper we fix a number $d\in\NN$ and restrict
attention to databases of degree at most $d$.

\myparagraph{Updates}
We allow to update a given database of schema $\schema$ by inserting or deleting
tuples as follows (note that both types of commands may change the
database's active domain and the database's degree).
A \emph{deletion} command is of the form
\Delete\,$R(a_1,\ldots,a_r)$
for $R\in\schema$, $r=\ar(R)$, and $a_1,\ldots,a_r\in \Dom$. When
applied to a $\schema$-db $\DB$, it results in the updated $\schema$-db
$\DB'$ with $R^{\DB'}= R^{\DB}\setminus\set{(a_1,\ldots,a_r)}$ and
$S^{\DB'}= S^{\DB}$ for all $S\in\schema\setminus\set{R}$.
\\
An \emph{insertion} command is of the form
\Insert\,$R(a_1,\ldots,a_r)$
for $R\in\schema$, $r=\ar(R)$, and $a_1,\ldots,a_r\in \Dom$. 
When applied to a $\schema$-db $\DB$ in the unrestricted setting, it
results in the updated $\schema$-db 
$\DB'$ with $R^{\DB'}= R^{\DB}\cup\set{(a_1,\ldots,a_r)}$ and
$S^{\DB'}= S^{\DB}$ for all $S\in\schema\setminus\set{R}$.
In this paper, we restrict attention to databases of degree at most
$d$. Therefore, when applying an insertion command to a $\schema$-db $\DB$ of degree
$\leq d$,
the command is carried out only if the resulting
database $\DB'$ still has degree $\leq d$; otherwise $\DB$ remains
unchanged and instead of carrying out the insertion command, an error
message is returned.

\myparagraph{Queries}
We fix a countably infinite set $\Var$ of \emph{variables}.
We consider the extension $\FOmod$ of first-order logic $\FO$ with
modulo-counting quantifiers. For a fixed schema $\schema$, the set
$\FOmod[\schema]$ is built from atomic formulas of the form
$x_1{=}x_2$ and $R(x_1,\ldots,x_{\ar(R)})$, for $R\in\schema$ and
variables $x_1,x_2,\ldots,x_{\ar(R)}\in\Var$, and is closed under Boolean
connectives $\nicht$, $\und$, existential first-order quantifiers
$\exists x$, and modulo-counting quantifiers $\existsim x$, for
a variable $x\in\Var$ and integers $i,m\in\NN$ with $m\geq 2$ and $i<m$.
The intuitive meaning of a formula of the form $\existsim x\,\psi$ is
that the number of witnesses $x$ that satisfy $\psi$ is congruent $i$
modulo $m$.
As usual, $\forall x$, $\oder$, $\impl$, $\gdw$ will be used as
abbreviations when constructing formulas. 
It will be convenient to add the quantifier
$\existsm x$, for $m\in \NNpos$; a formula of
the form $\existsm x\,\psi$ expresses that the number of witnesses
$x$ which satisfy $\psi$ is $\geq m$. This quantifier is just syntactic
sugar an does not increase the expressive power of $\FOmod$.

The \emph{quantifier rank} $\qr(\phi)$ of a $\FOmod$-formula $\phi$ is
the maximum nesting depth of quantifiers that occur in $\phi$.
By $\free(\phi)$ we denote the set of all \emph{free variables} of
$\phi$, i.e., all variables $x$ that have at least one occurrence in
$\phi$ that is not within a quantifier of the form $\exists x$,
$\existsm x$, or
$\existsim x$.
A \emph{sentence} is a formula $\phi$ with $\free(\phi)=\emptyset$.

An \emph{assignment} for $\phi$ in a $\schema$-db $\DB$ is a partial mapping 
$\assign$ from $\Var$ to $\adom{\DB}$, where $\free(\phi)\subseteq
\fdom(\assign)$.
We write $(\DB,\assign)\models\phi$
to indicate that $\phi$ is satisfied when evaluated in $\DB$ with
respect to \emph{active domain semantics} while interpreting every
free occurrence of a variable $x$ with the constant $\assign(x)$.
Recall from \cite{AHV-Book} that ``active domain semantics'' means
that quantifiers are evaluated with respect to the database's active
domain. In particular, $(\DB,\assign)\models \exists x\,\psi$ iff there
exists an $a\in\adom{\DB}$ such that $(\DB,\extend{\alpha}{x}{a})\models \psi$, where
$\extend{\alpha}{x}{a}$ is the assignment $\assignStrich$ with $\assignStrich(x)=a$
and $\assignStrich(y)=\assign(y)$ for all $y\in\fdom(\assign)\setminus\set{x}$.
Accordingly,
$(\DB,\alpha)\models \existsm x\;\psi$ \ iff \
$\big|\setc{\,a\in\adom{\DB}}{(\DB,\alpha{\textstyle
     \frac{a}{x}})\models\psi\,}\big| \; \geq \; m$, \ and \
$(\DB,\alpha)\models \existsim x\;\psi$ \ iff \  
$\big|\setc{\,a\in\adom{\DB}}{(\DB,\alpha{\textstyle
    \frac{a}{x}})\models\psi\,}\big| \; \equiv \; i \textup{ mod } m$\,.

A \emph{$k$-ary $\FOmod$ query of schema $\schema$} is of the form $\phi(x_1,\ldots,x_k)$
where $k\in\NN$, $\phi\in\FOmod[\schema]$, and $\free(\phi)\subseteq
\set{x_1,\ldots,x_k}$.
We will often assume that the tuple $(x_1,\ldots,x_k)$ is clear from
the context and simply write $\phi$ instead of $\phi(x_1,\ldots,x_k)$
and $\big(\DB,(a_1,\ldots,a_k)\big)\models\phi$
instead of
$\big(\DB,\textstyle{\frac{a_1,\ldots,a_k}{x_1,\ldots,x_k}}\big)\models\phi$, 
where $\Assign{x_1,\ldots,x_k}{a_1,\ldots,a_k}$ denotes the assignment
$\assign$ with $\assign(x_i)=a_i$ for all $i\in [k]$.
When evaluated in a $\schema$-db $\DB$,
the $k$-ary query $\phi(x_1,\ldots,x_k)$ yields the $k$-ary relation 
\[
  \phi(\DB)
  \quad\deff\quad
  \big\{\,
    (a_1,\ldots,a_k)\,\in\,\adom{\DB}^k \ : \ 
    \big(\DB,\textstyle{\frac{a_1,\ldots,a_k}{x_1,\ldots,x_k}}\big)\
    \models\ \phi\,
  \big\}\,.
\] 

\emph{Boolean} queries are $k$-ary queries with $k=0$. As usual, for
Boolean queries we will write 
$\phi(\DB)=\No$ instead of $\phi(\DB)=\emptyset$, and 
$\phi(\DB)=\Yes$ instead of $\phi(\DB)\neq\emptyset$; and we write 
$\DB\models \phi$ to indicate that $(\DB,\assign)\models\phi$ for any
assignment $\alpha$.

\myparagraph{Sizes and Cardinalities}
The \emph{size} $\size{\schema}$ of a schema $\schema$ is 
the sum of the arities of its relation symbols.
The size $\size{\query}$ of an $\FOmod$ query $\query$ of schema $\schema$ is 
the length of $\query$ when viewed as a word over the alphabet 
$\schema\cup\Var\cup\NN\cup\set{\,=,\und,\nicht,\exists,{}^{\mymod},{}^{\geq},(,)\,}$.
For a $k$-ary query $\query(x_1,\ldots,x_k)$ and a $\sigma$-db $\DB$, the 
\emph{cardinality of the query result} is the number $|\query(\DB)|$ of
tuples in $\query(\DB)$.
The \emph{cardinality} $\card{\DB}$ of a $\schema$-db $\DB$ is defined
as the number of tuples stored in $\DB$, i.e.,
$\card{\DB}\deff\sum_{R\in\schema} |R^{\DB}|$. 
The \emph{size} $\size{\DB}$ of $\DB$ is defined as
$\size{\schema}+|\Adom(\DB)|+\sum_{R\in\schema} \ar(R){\cdot}
|R^D|$ and corresponds to the size of a reasonable encoding of $\DB$.

\myparagraph{Dynamic Algorithms for Query Evaluation}
We adopt the framework for dynamic algorithms for query evaluation of 
\cite{BKS_enumeration_PODS17}; the next paragraphs are taken almost
verbatim from \cite{BKS_enumeration_PODS17}.
Following \cite{Cormen.2009}, we use Random Access Machines (RAMs)
with $\bigoh(\log n)$ word-size and a uniform cost 
measure to analyse our algorithms.
We will assume that the RAM's memory is initialised to $\Null$. In
particular, if an algorithm uses an array, we will assume
that all array entries are initialised to $\Null$, and this initialisation
comes at no cost (in real-world computers this can be achieved by using the
\emph{lazy array initialisation technique}, cf.\ e.g.\ \cite{MoretShapiro}). 
A further assumption is that for every fixed
dimension $k\in\NNpos$ we have available an unbounded number of
$k$-ary arrays $\arrayA$ such that for given $(n_1,\ldots,n_k)\in\NN^k$
the entry $\arrayA[n_1,\ldots,n_k]$ 
at position $(n_1,\ldots,n_k)$ can be accessed in constant
time.\footnote{While this can be accomplished easily in the RAM-model,  
for an implementation on real-world
computers one would probably have to resort to replacing our use of
arrays by using suitably designed hash functions.}
For our purposes it will be convenient to assume that $\Dom=\NNpos$.

Our algorithms will take as input a $k$-ary $\FOmod$-query
$\query(x_1,\ldots,x_k)$, a parameter
$d$, and a $\schema$-db $\DBstart$ of degree $\leq d$.
For all query evaluation problems considered in this paper, we aim at
routines $\PREPROCESS$ and $\UPDATE$ which achieve the following.
 
Upon input of $\query(x_1,\ldots,x_k)$ and $\DBstart$, $\PREPROCESS$ builds a data
structure $\DS$ which represents $\DBstart$ (and which is designed in
such a way that it supports the evaluation of $\query$ on $\DBstart$).
Upon input of a command $\Update\ R(a_1,\ldots,a_r)$ (with
$\Update\in\set{\Insert,\Delete}$), 
calling $\UPDATE$  modifies the data structure $\DS$ such that it
represents the updated database $\DB$.
The \emph{preprocessing time} $\preprocessingtime$ is the
time used for performing $\PREPROCESS$;
the \emph{update time} $\updatetime$ is the time used for performing
an $\UPDATE$. In this paper, $\updatetime$ will be independent of the size
of the current database $\DB$.
By $\INIT$ we denote the particular case of the routine $\PREPROCESS$
upon input of a query $\query(x_1,\ldots,x_k)$ and the \emph{empty} database
$\DBempty$ (where $R^{\DBempty}=\emptyset$ for all $R\in\schema$).
The \emph{initialisation time} $\inittime$
is the time used for performing $\INIT$.
In all dynamic algorithms presented in this paper, the $\PREPROCESS$ routine
for input of $\query(x_1,\ldots,x_k)$ and $\DBstart$ 
will carry out the $\INIT$ routine for $\query(x_1,\ldots,x_k)$ and then
perform a sequence of $\card{\DBstart}$ update operations to
insert all the tuples of $\DBstart$ into the data structure.
Consequently, $\preprocessingtime = \inittime +  \card{\DBstart}\cdot\updatetime$.

In the following, $\DB$ will always denote the database that is
currently represented by the data structure $\DS$.

To solve the \emph{enumeration problem under updates}, 
apart from the routines $\PREPROCESS$ and $\UPDATE$,
we aim at a routine $\ENUMERATE$ such that
calling $\ENUMERATE$ invokes an enumeration of all tuples
(without repetition) that belong to the query result $\query(\DB)$.
The \emph{delay} $\delaytime$ is the maximum time
used during a call of $\ENUMERATE$
\begin{itemize}
\item until the output of the first tuple (or the end-of-enumeration
  message $\EOE$, 
  if $\phi(\DB)=\emptyset$),
\item between the output of two consecutive tuples, and 
\item between the output of the last tuple and the end-of-enumeration
  message $\EOE$.
\end{itemize}

To \emph{test} if a given tuple belongs to the query result,
instead of $\ENUMERATE$ we aim at a routine $\TEST$ which
upon input of a tuple $\ov{a}\in\Dom^k$ checks whether $\ov{a}\in
\query(\DB)$.
The \emph{testing time} $\testingtime$ is the time used for
performing a $\TEST$.
To solve the \emph{counting problem under updates}, instead of
$\ENUMERATE$ or $\TEST$ we aim at a routine $\COUNT$ which outputs the cardinality
$|\query(\DB)|$ of the query result.
The \emph{counting time} $\countingtime$ is the time used for
performing a $\COUNT$.
To \emph{answer} a \emph{Boolean} query under updates,
instead of $\ENUMERATE$, $\TEST$, or $\COUNT$ we aim at a routine $\ANSWER$
that produces the answer $\Yes$ or $\No$ of $\query$ on $\DB$.
The \emph{answer time} $\answertime$ is the time used for
performing $\ANSWER$.
Whenever speaking of a \emph{dynamic algorithm}, we mean an algorithm
that has routines $\PREPROCESS$ and $\UPDATE$ and, depending on the
problem at hand, at least one of the routines 
$\ANSWER$,
$\TEST$, 
$\COUNT$,
and $\ENUMERATE$.

Throughout the paper, we often adopt the view of \emph{data complexity} 
and suppress factors that may depend on the query $\query$ or the
degree bound $d$, but not on the database $\DB$. 
E.g., ``linear preprocessing time'' means 
$\preprocessingtime\leq f(\query,d)\cdot\size{\DBstart}$ and 
``constant update time'' means  $\updatetime\leq f(\query,d)$, for a
function $f$ with codomain $\NN$. 
When writing $\poly(n)$ we mean $n^{\bigOh(1)}$.

\makeatletter{}%

\section{Hanf Normal Form for \fomodtext}\label{section:HNF}

Our algorithms for evaluating $\FOmod$ queries rely on a decomposition
of $\FOmod$ queries into \emph{Hanf normal form}. To describe this
normal form, we need some more notation.

Two formulas $\phi$ and $\psi$ of schema $\schema$ are called
\emph{$d$-equivalent} (in symbols: $\phi\equivd\psi$) if
for all $\schema$-dbs $\DB$ \emph{of degree $\leq d$} and all
assignments $\alpha$ for $\phi$ and $\psi$ in $\DB$ we have
\ $(D,\alpha)\models\phi$ $\iff$ $(D,\alpha)\models \psi$.

For a $\schema$-db $\DB$ and a set $A\subseteq\adom{\DB}$ we
write $\inducedSubStr{\DB}{A}$ to denote the restriction of $\DB$ to
the domain $A$, i.e.,
$R^{\inducedSubStr{\DB}{A}}=\setc{\ov{a}\in R^\DB}{\ov{a}\in
  A^{\ar(R)}}$, for all $R\in\schema$.
For two $\schema$-dbs $\DB$ and $\DBStrich$ and two $k$-tuples
$\ov{a}=(a_1,\ldots,a_k)$ and $\ov{a}'=(a'_1,\ldots,a'_k)$ of elements
in $\adom{\DB}$ and $\adom{\DBStrich}$, resp., we write
\,$\big(\DB,\ov{a}\big) \isom \big(\DBStrich,\ov{a}'\big)$\,
to indicate that there is an
isomorphism\footnote{An \emph{isomorphism} $\pi\colon \DB\to\DBStrich$
is a bijection from $\adom{\DB}$ to $\adom{\DBStrich}$ 
with $(b_1,\ldots,b_r)\in R^{\DB} \iff (\pi(b_1),\ldots,\pi(b_r))\in R^{\DBStrich}$
for all $R\in\schema$, for $r\deff\ar(R)$, 
and for all $b_1,\ldots,b_r\in\adom{\DB}$.}
$\pi$ from $\DB$ to $\DBStrich$ that maps $a_i$ to
$a'_i$ for all $i\in[k]$.

The \emph{distance} $\dist^\DB(a,b)$ between two elements
$a,b\in\adom{\DB}$ is the minimal length (i.e., the number of edges)
of a path from $a$ to $b$ in $\DB$'s Gaifman graph $\GGD$ (if no such path exists, we let
$\dist^\DB(a,b)=\infty$; note that $\dist^\DB(a,a)=0$).
For $r\geq 0$ and $a\in\Adom(\DB)$, the \emph{$r$-ball} around
$a$ in $\DB$ is the set
$\nrD{a}\deff\setc{b\in\Adom(\DB)}{\dist^{\DB}(a,b)\leq r}$.
For a $\schema$-db $\DB$ and a tuple $\ov{a}=(a_1,\ldots,a_k)$ we let $\nrD{\ov{a}}\deff
\bigcup_{i\in[k]} \nrD{a_i}$.
The \emph{$r$-neighbourhood} around $\ov{a}$ in $\DB$ is defined as the
$\schema$-db $\NrD{\ov{a}}\deff \inducedSubStr{\DB}{\nrD{\ov{a}}}$.

For $r\geq 0$ and $k\geq 1$, a \emph{type $\type$ (over $\schema$)
  with $k$ centres and radius $r$} (for short: \emph{$r$-type with $k$ centres}) is of the form
$(T,\ov{t})$, where $T$ is a $\schema$-db,
$\ov{t}\in\adom{T}^k$, and $\adom{T}=\nrT{\ov{t}}$. The elements in
$\ov{t}$ are called the \emph{centres} of $\type$.
For a tuple $\ov{a}\in\adom{\DB}^k$, the \emph{$r$-type of $\ov{a}$
  in $\DB$} is defined as the $r$-type with $k$ centres
$\big(\NrD{\ov{a}},\ov{a}\big)$. 

For a given $r$-type with $k$ centres $\tau=(T,\ov{t})$ it
is straightforward to construct a first-order formula
$\sphere{\tau}{\ov{x}}$ (depending on $r$ and $\tau$)
with $k$ free variables $\ov{x}=(x_1,\ldots,x_k)$ 
which expresses that the $r$-type of $\ov{x}$ is isomorphic to
$\tau$, i.e., for every $\schema$-db $\DB$
and all $\ov{a}=(a_1,\ldots,a_k)\in\adom{\DB}^k$ we have 
\ $\big(\DB,\ov{a}\big) \models  \sphere{\tau}{\ov{x}}
   \iff 
   \big(\NrD{\ov{a}},\ov{a}\big) \isom \big(T,\ov{t}\big)$.
The formula $\sphere{\tau}{\ov{x}}$ is called a \emph{sphere-formula}
(over $\schema$ and $\ov{x}$);
the numbers $r$ and $k$ are called \emph{locality radius} and
\emph{arity}, resp., of the sphere-formula. 

A \emph{Hanf-sentence} (over $\schema$) is a sentence of the form
\,$\existsm x\; \sphere{\tau}{x}$\, or \,$\existsim x\;
\sphere{\tau}{x}$,\, where $\tau$ is an
$r$-type (over $\schema$) with 1 centre, for some $r\geq 0$. 
The number $r$ is called  \emph{locality radius} of the Hanf-sentence.
A formula in \emph{Hanf normal form} (over $\schema$) is a Boolean
combination\footnote{Throughout this paper, whenever we speak of
  \emph{Boolean combinations} we mean \emph{finite} Boolean
  combinations.} of
sphere-formulas and Hanf-sentences (over $\schema$).
The \emph{locality radius} of a formula $\psi$ in Hanf normal form
is the maximum of the locality radii of the Hanf-sentences and
the sphere-formulas that occur in $\psi$.
The formula is \emph{$d$-bounded} if all types $\tau$
that occur in sphere-formulas or Hanf-sentences of $\psi$ are
$d$-bounded, i.e., $T$ is of degree $\leq d$, where $\tau=(T,\ov{t})$.
Our query evaluation algorithms for $\FOmod$ rely on the following
result by Heimberg, Kuske, and Schweikardt \cite{HKS_Hanf_LICS16}.

\begin{theorem}[\cite{HKS_Hanf_LICS16}]\label{thm:HNF}
There is an algorithm which receives as input a degree bound $d\in\NN$
and a $\FOmod[\schema]$-formula $\phi$, and constructs a
$d$-equivalent formula $\psi$ in Hanf normal form (over $\schema$)
with the same free variables as $\phi$.
For any $d\geq 2$, the formula $\psi$ is $d$-bounded and has locality
radius $\leq 4^{\qr(\phi)}$,
and the algorithm's runtime is
$2^{d^{2^{\bigOh(\size{\phi}+\size{\schema})}}}$. 
\end{theorem}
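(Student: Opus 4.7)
The plan is to prove the theorem by induction on the structure of the input $\FOmod[\schema]$-formula $\phi$, maintaining the invariant that every subformula $\phi'$ is $d$-equivalent to a formula in Hanf normal form whose locality radius is at most $4^{\qr(\phi')}$. For the base case, every atomic formula is already $d$-equivalent to a Boolean combination of sphere-formulas of locality radius $0$: for instance, $R(\ov{x})$ is the disjunction of $\sphere{\tau}{\ov{x}}$ over the finitely many $0$-types $\tau$ with $\ar(R)$ centres that make $R$ true on the centres. Negation and conjunction present no difficulty because the class of Hanf-normal-form formulas is closed under Boolean combinations by definition, and the locality radius of the result is just the maximum of the radii of the conjuncts.

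The real work lies in the quantifier step. Assume inductively that $\psi(x,\ov{y})$ has been rewritten as a Boolean combination of sphere-formulas and Hanf-sentences of locality radius at most $r$; the goal is to transform $\existsim x\,\psi$ into Hanf normal form with locality radius at most $4r$. After moving to disjunctive normal form, I may assume each disjunct has the shape $\sphere{\sigma}{\ov{y}}\wedge \sphere{\tau}{x,\ov{y}} \wedge \chi$, where $\chi$ is a Boolean combination of Hanf-sentences; since $\chi$ does not mention $x$ or $\ov{y}$ it can be pulled out of the quantifier. For the remaining part I would split the witnesses $x$ according to whether $\dist^\DB(x,\ov{y})\leq 2r$ (\emph{close}) or $>2r$ (\emph{far}). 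In the close case, $x$ lies in the $3r$-ball around $\ov{y}$, so the exact number of close witnesses realising any given $r$-type with $x$ as an additional centre is determined by the isomorphism type of the $3r$-neighbourhood of $\ov{y}$ and hence expressible by a Boolean combination of sphere-formulas $\sphere{\cdot}{\ov{y}}$ of radius at most $3r+r=4r$. In the far case, the $r$-balls around $x$ and $\ov{y}$ are disjoint, so the relevant $r$-type of $(x,\ov{y})$ degenerates into an independent $r$-type of $x$ alone plus the $r$-type $\sigma$ of $\ov{y}$, and the number of far witnesses realising a given $r$-type $\nu$ around $x$ equals the \emph{global} count of $\nu$ (captured by Hanf-sentences of radius $r$) minus the already-accounted close occurrences of $\nu$ (read off from the $3r$-type of $\ov{y}$). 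Summing close and far contributions and taking the result modulo $m$ then yields a Boolean combination of sphere-formulas around $\ov{y}$ of radius at most $4r$ and Hanf-sentences of radius at most $r$.

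The main obstacle is the modulo quantifier. For ordinary $\exists$ it suffices to detect a single witness, but $\existsim$ forces one to track exact counts per sphere-type for both close and far witnesses and to combine them by modular arithmetic — which is exactly why the target normal form must include Hanf-sentences $\existsim x\,\sphere{\nu}{x}$ for every residue $i<m$ and every $d$-bounded $r$-type $\nu$. For the runtime and size bound, each quantifier at worst quadruples the locality radius, so after $\qr(\phi)$ nestings the radius is at most $4^{\qr(\phi)}\leq 2^{\bigOh(|\phi|)}$; a $d$-bounded $r$-ball contains $\bigOh(d^r)$ elements and the number of such $r$-types over $\schema$ is bounded by $2^{d^{\bigOh(r\cdot|\schema|)}}$, which together with the exponential bound on $r$ produces the triple-exponential formula size and the stated runtime $2^{d^{2^{\bigOh(|\phi|+|\schema|)}}}$.
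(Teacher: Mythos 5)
First, note that the paper does not prove this statement at all: Theorem~\ref{thm:HNF} is imported verbatim from Heimberg, Kuske, and Schweikardt \cite{HKS_Hanf_LICS16} and used as a black box. So there is no in-paper proof to compare against; your outline has to be judged against the strategy of the cited work, and in fact it follows essentially that strategy (structural induction, reduction of Boolean combinations of sphere-formulas to mutually exclusive complete sphere-formulas, and a close/far decomposition of the witnesses in the quantifier step, with the modulo-counting quantifier handled by exact per-type bookkeeping and modular arithmetic over the close and far contributions). The radius and runtime accounting at the end is also the right shape: $r\leq 4^{\qr(\phi)}=2^{\bigOh(\size{\phi})}$ and $2^{d^{\bigOh(r\cdot\size{\schema})}}$ many $r$-types compose to the stated triple-exponential bound.

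There is, however, one step that is false as written: your close/far threshold. You declare $x$ \emph{far} from $\ov{y}$ when $\dist^\DB(x,\ov{y})>2r$ and claim that then the joint $r$-type of $(x,\ov{y})$ ``degenerates into an independent $r$-type of $x$ plus the $r$-type of $\ov{y}$.'' At distance exactly $2r{+}1$ the two $r$-balls are indeed disjoint as sets, but the induced substructure on their union contains an edge between their boundaries, so the joint type does \emph{not} factor into a disjoint sum; your far-case count would then be wrong for these witnesses. The correct threshold is $\dist^\DB(x,\ov{y})\leq 2r{+}1$ versus $>2r{+}1$ — exactly the constant that appears throughout the paper, e.g.\ in Lemma~\ref{lem:basic_facts}\eqref{eq:Nconn2:lem:basic_facts} and in the formula $\connsphere{\tau}{\x}$ of equation~\eqref{eq:decomposition}. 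The fix is purely mechanical (close witnesses then live in the $(2r{+}1)$-ball of $\ov{y}$ and their joint types are read off the $(3r{+}1)$-neighbourhood of $\ov{y}$, which still keeps the radius recurrence within $4^{\qr(\phi)}$). A second, smaller gap is that ``pulling $\chi$ out of the quantifier'' and summing counts over disjuncts only works after you (i) case-split on the truth values of the Hanf-sentences occurring in the disjuncts, as in Lemma~\ref{lemma:dynamicHanf}, and (ii) reduce to pairwise inconsistent complete sphere-formulas so that the witness sets are disjoint and the counts genuinely add; you gesture at this but it is where most of the actual bookkeeping for $\existsim$ lives, including the need for threshold Hanf-sentences $\existsgeq{k}$ with $k$ up to the maximum possible close count when handling plain $\exists$.
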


The first step of all our query evaluation algorithms is to use
Theorem~\ref{thm:HNF} to transform
a given query $\phi(\ov{x})$ into a $d$-equivalent query
$\psi(\ov{x})$ in Hanf normal form.
The following lemma summarises easy facts that are useful for
evaluating the sphere-formulas that occur in $\psi$.

\begin{lemma}\label{lem:basic_facts}
  Let $d\geq 2$ and
  let $\DB$ be a $\sigma$-db of degree $\leq d$.
  Let $r\geq 0$, $k\geq 1$, and $\ov{a}=(a_1,\ldots,a_k)\in\adom{\DB}$.
  \begin{enumerate}[(a)]
  \item\label{eq:Nsize:lem:basic_facts}
    $\Setsize{N_{r}^{\DB}(\ov{a})} \ \leq \
    k\sum^r_{i=0}d^i \ \leq \ kd^{r+1}$.
  \item\label{eq:Ncompute:lem:basic_facts}
    Given $\DB$ and $\ov{a}$, the $r$-neighbourhood $\NrD{\ov{a}}$ can
    be computed in time 
    $\bigl(k d^{r+1}\bigr)^{\bigoh(\size{\sigma})}$.
  \item\label{eq:Nconn2:lem:basic_facts}
    $\NrD{a_1,a_2}$ is connected if and only if
    $\dist^\DB(a_1,a_2)\leq 2r+1$.
  \item\label{eq:Nconnk:lem:basic_facts}
    If $\NrD{\ov{a}}$ is connected, then
    $\nrD{\ov{a}}\subseteq \neighb{r+(k-1)(2r+1)}{\DB}{a_i}$,
    for all $i\in[k]$.
  \item\label{eq:Nisom:lem:basic_facts}
    Let $\DBStrich$ be a $\schema$-db of degree $\leq d$ and let
    $\ov{b}=(b_1,\ldots,b_k)\in\adom{\DBStrich}$. 

    It can be tested in time
    \ $(k d^{r+1})^{\bigOh(\size{\sigma}+ k d^{r+1})} \ \leq \ 
       2^{\bigOh(\size{\schema}k^2d^{2r+2})}$ \
    whether \\
    $\big(\NrD{\ov{a}},\ov{a}\big) \ \isomorph\
    \big(\NrDStrich{\ov{b}},\ov{b}\big)$.
  \end{enumerate}
\end{lemma}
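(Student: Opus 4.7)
The plan is to dispatch the five parts via combinatorial arguments exploiting the bounded-degree hypothesis, handling (a), (b) by BFS-style counts, (c), (d) by shortest-path arguments in the Gaifman graph, and (e) by brute-force enumeration of bijections.

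For (a), I would argue inductively that in a graph of maximum degree $d$ the number of vertices at distance exactly $j$ from any fixed vertex is at most $d^j$, so each single-centre ball $\nrD{a_i}$ has size at most $\sum_{j=0}^{r}d^j\leq d^{r+1}$ (for $d\geq 2$), and the union over the $k$ centres yields the claim. For (b), my plan is to run BFS of depth $r$ from each centre in $\GGD$; each discovered vertex is expanded by iterating over the $\sigma$-tuples incident to it (using indices built alongside $\DB$), producing the next frontier, after which a final pass collects those tuples of $\DB$ lying entirely in $\nrD{\ov{a}}$ to assemble $\NrD{\ov{a}}$. Since the total neighbourhood has size at most $kd^{r+1}$ by (a), charging the work per vertex gives the claimed $(kd^{r+1})^{\bigoh(\size{\sigma})}$ runtime.

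For (c), the easy direction $\dist^{\DB}(a_1,a_2)\leq 2r+1 \Rightarrow$ connected is handled by exhibiting a shortest path $v_0=a_1,\ldots,v_\ell=a_2$ of length $\ell\leq 2r+1$: each $v_i$ lies in $\nrD{a_1}$ when $i\leq r$ and in $\nrD{a_2}$ when $\ell-i\leq r$, so the entire path stays inside $\nrD{a_1,a_2}$. For the converse, I would take any $a_1$--$a_2$ path inside $\NrD{a_1,a_2}$, locate the last vertex $v$ on it lying in $\nrD{a_1}$, and observe that its successor $v'$ then lies in $\nrD{a_2}$; the Gaifman-edge $\{v,v'\}$ witnessed in $\DB$ yields $\dist^{\DB}(a_1,a_2)\leq r+1+r=2r+1$ by the triangle inequality. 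Part (d) follows by iterating this argument along a spanning traversal of the $k$ centres: consecutive centres on such a traversal are at distance $\leq 2r+1$ in $\DB$, yielding $\dist^{\DB}(a_1,a_i)\leq (k-1)(2r+1)$, from which $\nrD{\ov{a}}\subseteq \neighb{r+(k-1)(2r+1)}{\DB}{a_i}$ is immediate.

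For (e), after computing both neighbourhoods via (b), I would enumerate all bijections from the vertex set of $\NrD{\ov{a}}$ to that of $\NrDStrich{\ov{b}}$ that send $a_i\mapsto b_i$ for every $i\in[k]$, checking relation-compatibility of each; by (a) the vertex sets have size $\leq kd^{r+1}$, giving at most $(kd^{r+1})!$ bijections and the stated overall bound. I expect the main technical obstacle to be the converse direction of (c), where one must be careful about connectedness of the induced substructure $\NrD{a_1,a_2}$ rather than of the ambient Gaifman graph $\GGD$; the ``last vertex in $\nrD{a_1}$'' trick sidesteps this by producing a single witnessing Gaifman-edge in $\DB$ whose existence certifies the distance bound regardless of which tuples survive in the restriction.
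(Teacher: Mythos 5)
Your proposal is correct and, for the only part the paper actually proves, takes the same route: the paper dismisses (a)--(d) as straightforward and handles (e) exactly as you do, by computing both neighbourhoods and brute-forcing the isomorphism test --- it loops over all $n^n$ mappings fixing $a_i\mapsto b_i$ rather than only the $n!$ bijections, which yields the same bound $n^{\bigOh(n+\size{\sigma})}$ for $n=kd^{r+1}$. Your arguments for (a), (b), (d) and the converse direction of (c) are the intended ones.

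One concrete caveat, though: you locate the induced-substructure subtlety in the \emph{converse} direction of (c), but there it is harmless --- the Gaifman graph of $\NrD{a_1,a_2}$ has only \emph{fewer} edges than $\GGD$, so a path witnessing connectivity is in particular a path in $\GGD$ and your ``last vertex in $\nrD{a_1}$'' step goes through. The direction where the subtlety actually bites is the \emph{forward} one: showing that every vertex of a shortest $a_1$--$a_2$ path lies in $\nrD{a_1,a_2}$ does not yet show that the path's edges survive in $\NrD{a_1,a_2}=\inducedSubStr{\DB}{\nrD{a_1,a_2}}$, because a Gaifman edge of $\DB$ persists in the restriction only if some witnessing tuple lies entirely inside the ball. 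A short computation shows every edge of the path is safe except possibly the middle edge $\{v_r,v_{r+1}\}$ when the path has length exactly $2r{+}1$: for a relation of arity $\geq 3$ its witnessing tuple may contain a further element at distance $r{+}1$ from both $a_1$ and $a_2$, hence outside the ball, in which case that Gaifman edge disappears from the induced substructure and connectivity can genuinely fail. So either one reads ``connected'' in (c) and (d) as connectivity of the induced subgraph of $\GGD$ on the vertex set $\nrD{a_1,a_2}$ --- under which your shortest-path argument is airtight, and which is what the paper implicitly uses downstream --- or the forward implication needs an extra step (e.g.\ observing that all elements of each witnessing tuple are themselves within distance $r$ of one of the centres, which is exactly what fails in the middle). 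This is a defect of the definitions as much as of your proof, but it is the one place where ``straightforward'' hides something.
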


\begin{proof}
Parts \eqref{eq:Nsize:lem:basic_facts}--\eqref{eq:Nconnk:lem:basic_facts} are straightforward.
Concerning Part~\eqref{eq:Nisom:lem:basic_facts}, a brute-force
approach is to loop through all mappings from 
$\nrD{\ov{a}}$ to $\nrDStrich{\ov{b}}$ that map $a_i$ to $b_i$ for
every $i\in [k]$ and check whether this 
mapping is an isomorphism. 
Each such check can be accomplished in time $n^{\bigOh(\size{\sigma})}$ for
$n\deff k d^{r+1}$,
and the number of mappings that have to be checked is $\leq
n^n$. Thus, the isomorphism test is accomplished in time 
$n^{\bigOh(n+\size{\sigma})} = (k d^{r+1})^{\bigOh(\size{\sigma}+ k d^{r+1})}$.
\end{proof}

The time bound stated in 
part~\eqref{eq:Nisom:lem:basic_facts} of Lemma~\ref{lem:basic_facts}
is obtained by a brute-force approach. When using Luks' 
polynomial time isomorphism test for bounded degree
graphs \cite{DBLP:journals/jcss/Luks82}, the time bound of
Lemma~\ref{lem:basic_facts}\eqref{eq:Nisom:lem:basic_facts} can be
improved to  $\bigl(k d^{r+1}\bigr)^{\poly(d\size{\sigma})}$.
However, the asymptotic overall runtime of our 
algorithms for evaluating $\FOmod$-queries won't
improve when using Luks algorithm instead of the brute-force isomorphism
test of Lemma~\ref{lem:basic_facts}\eqref{eq:Nisom:lem:basic_facts}.

\makeatletter{}%

\section{Answering Boolean \fomodtext Queries Under Updates}\label{section:BooleanQueries}

In \cite{FrickGrohe_APAL2004}, Frick and Grohe showed that in the
static setting (i.e., without database updates), Boolean $\FO$-queries $\phi$
can be answered on databases $\DB$ of degree $\leq d$ in time 
$2^{d^{2^{\bigOh(\size{\phi})}}}{\cdot} \size{\DB}$.
Our first main theorem extends their result to $\FOmod$-queries and
the dynamic setting.

\begin{theorem}\label{thm:AnsweringBooleanQueries}
There is a dynamic algorithm that receives a schema $\schema$, a
degree bound $d\geq 2$, a Boolean $\FOmod[\schema]$-query $\phi$, and a
$\schema$-db $\DBstart$ of degree $\leq d$, and computes within 
$\preprocessingtime= f(\phi,d)\cdot\size{\DBstart}$
preprocessing time a data structure that can be updated in time
$\updatetime= f(\phi,d)$ and allows to
return the query result $\phi(\DB)$ with answer time
$\answertime= \bigOh(1)$.
The function $f(\phi,d)$ is of the form 
$2^{d^{2^{\bigOh(\size{\phi})}}}$.

If $\phi$ is a $d$-bounded Hanf-sentence of locality radius $r$,
then $f(\phi,d)= 2^{\bigOh(\size{\schema} d^{2r+2})}$,
and the initialisation
time is $\inittime=\bigOh(\size{\phi})$.
\end{theorem}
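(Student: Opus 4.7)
The plan is to first invoke Theorem~\ref{thm:HNF} to rewrite $\phi$ into a $d$-equivalent Hanf normal form $\psi$ with locality radius $r\leq 4^{\qr(\phi)}$. Because $\phi$ is Boolean, $\psi$ is a Boolean combination of finitely many Hanf-sentences $\chi_1,\ldots,\chi_\ellEnum$, each of the form $\existsm x\,\sphere{\tau_j}{x}$ or $\existsim x\,\sphere{\tau_j}{x}$ for some $d$-bounded $1$-centre $r$-type $\tau_j$. Since $\psi$ has constant size (depending only on $\phi$ and $d$), once I can maintain the truth of each $\chi_j$ under updates, I can cache and return the truth of $\psi$ in a single bit, which gives $\answertime=\bigOh(1)$.

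For every $d$-bounded $1$-centre $r$-type $\tau$ occurring in $\psi$ I maintain a counter $n_\tau(\DB)\in\NN$ equal to $|\set{a\in\adom{\DB}\,:\,(\NrD{a},a)\isom\tau}|$; from these counters the truth value of each $\chi_j$ is a trivial comparison or mod-test. To keep the counters up to date I also store, via lazy array initialisation, an entry $\mathtt{type}[a]$ for every $a\in\Dom$, pointing into a precomputed list $L$ of isomorphism representatives of all $d$-bounded $1$-centre $r$-types. By Lemma~\ref{lem:basic_facts}(a), $|L|\leq 2^{\bigOh(\size{\schema}d^{2r+2})}$, and $L$ is built once during $\INIT$ using pairwise isomorphism tests from Lemma~\ref{lem:basic_facts}(e). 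Upon a command $\Update\,R(a_1,\ldots,a_{\ar(R)})$, let $\DBStrich$ denote the updated database; the set of elements whose $r$-type may change is $A=\nrD{\ov{a}}\cup\nrDStrich{\ov{a}}$, and by Lemma~\ref{lem:basic_facts}(a,b) both $|A|$ and the time to compute $A$ depend only on $r$, $d$, and $\schema$. For each $b\in A$ I look up the old type $\tau^{\old}_b=\mathtt{type}[b]$, decrement $n_{\tau^{\old}_b}$, compute $\Neighb{r}{\DBStrich}{b}$, identify its new type $\tau^{\new}_b\in L$ via Lemma~\ref{lem:basic_facts}(e), set $\mathtt{type}[b]\deff\tau^{\new}_b$, increment $n_{\tau^{\new}_b}$, and finally refresh the cached truth value of $\psi$.

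The dominant cost of an update is $|A|$ type-identifications, each of cost $|L|\cdot 2^{\bigOh(\size{\schema}d^{2r+2})}=2^{\bigOh(\size{\schema}d^{2r+2})}$; substituting $r\leq 4^{\qr(\phi)}$ gives $\updatetime\leq 2^{d^{2^{\bigOh(\size{\phi})}}}$, while $\preprocessingtime$ is the one-off Hanf conversion from Theorem~\ref{thm:HNF} (of the same order) plus $\card{\DBstart}$ updates. In the special case where $\phi$ is already a $d$-bounded Hanf-sentence of radius $r$ the HNF step is skipped, a single counter suffices, and $\INIT$ merely sets it to $0$ in time $\bigOh(\size{\phi})$, yielding $f(\phi,d)=2^{\bigOh(\size{\schema}d^{2r+2})}$ as claimed. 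The main point I expect to require care is the correct identification of $A$: since an update can simultaneously shorten and lengthen Gaifman-distances, $A$ must be taken as the \emph{union} of the old and new $r$-neighbourhoods of $\ov{a}$, not just one, and the lazy-initialised $\mathtt{type}$-array is what lets fresh active-domain elements (and those leaving $\adom{\DB}$) be absorbed at no cost proportional to $\size{\DB}$.
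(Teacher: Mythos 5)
Your proposal follows essentially the same route as the paper's proof: convert $\phi$ to Hanf normal form via Theorem~\ref{thm:HNF}, maintain for each $1$-centre type occurring in $\psi$ a counter of the elements realizing it, use locality (Lemma~\ref{lem:basic_facts}) to bound the set of elements whose $r$-type an update can affect, and cache the recomputed truth value of the Boolean combination for $\bigOh(1)$ answering; your per-element $\mathtt{type}$ pointer and your explicit union $\nrD{\ov{a}}\cup\nrDStrich{\ov{a}}$ (which the paper replaces by the observation that one of the two neighbourhoods always contains the other) are immaterial implementation variations. The only inaccuracy is the claim $|L|\leq 2^{\bigOh(\size{\schema}d^{2r+2})}$, which does not follow from Lemma~\ref{lem:basic_facts}(a) --- the number of $d$-bounded $1$-centre $r$-types is $2^{(d^{r+1})^{\bigOh(\size{\schema})}}$ (cf.\ Lemma~\ref{lemma:isotypes}) and exceeds your bound for relation arities $\geq 3$ --- but this is harmless, since the quantity is still absorbed by $2^{d^{2^{\bigOh(\size{\phi})}}}$ in the general case, and in the single-Hanf-sentence case you correctly dispense with $L$ altogether.
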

\begin{proof}
W.l.o.g.\ we assume that all the symbols of $\schema$ occur in $\phi$
(otherwise, we remove from $\schema$ all symbols that do not occur in
$\phi$).
In the preprocessing routine, we first use Theorem~\ref{thm:HNF} to
transform $\phi$ into a $d$-equivalent sentence $\psi$ in Hanf normal
form; this takes time $2^{d^{2^{\bigOh(\size{\phi})}}}$.
The sentence $\psi$ is a Boolean combination of $d$-bounded
Hanf-sentences (over $\schema$) of locality radius at most $r\deff
4^{\qr(\phi)}$. 
Let $\rho_1,\ldots,\rho_s$ be the list of all types that occur in
$\psi$. Thus, every Hanf-sentence in $\psi$ is of the form
$\existsk x\; \sphere{\rho_j}{x}$ or 
$\existsim x\; \sphere{\rho_j}{x}$ for some $j\in[s]$ and
$k,i,m\in\NN$ with $k\geq 1$, $m\geq 2$, and $i<m$.
For each $j\in[s]$ let $r_j$ be the radius of 
$\sphere{\rho_j}{x}$.
Thus, $\rho_j$ is an $r_j$-type with 1 centre (over $\schema$).

For each $j\in[s]$ our data structure will store the number
$\arrayA[j]$ of all elements $a\in\adom{\DB}$ whose $r_j$-type is
isomorphic to $\rho_j$, i.e., $(\Neighb{r_j}{\DB}{a},a) \isom \rho_j$.
The initialisation for the empty database $\DBempty$ lets
$\arrayA[j]= 0$ for all $j\in[s]$.
In addition to the array $\arrayA$, our data structure
stores a Boolean value $\texttt{Ans}$ where $\texttt{Ans}=\phi(\DB)$
is the answer of the Boolean query $\phi$ on the current database
$\DB$. This way, the query can be answered in time
$\bigOh(1)$ by simply outputting $\texttt{Ans}$.
The initialisation for the empty database $\DBempty$ computes
$\texttt{Ans}$ as follows. Every Hanf-sentence of the form $\existsk
x\,\sphere{\rho_j}{x}$ in $\psi$ is replaced by the Boolean constant $\false$.
Every Hanf-sentence of the form $\existsim x\,\sphere{\rho_j}{x}$ is replaced by
$\true$ if $i=0$ and by $\false$ otherwise. The resulting formula, a
Boolean combination of the Boolean constants $\true$ and $\false$,
then is evaluated, and we let $\texttt{Ans}$ be the obtained result.
The entire initialisation takes time at most 
$\inittime= f(\phi,d)= 2^{d^{2^{\bigOh(\size{\phi})}}}$. If $\phi$ is
a Hanf-sentence, we even have $\inittime=\bigOh(\size{\phi})$.

To update our data structure upon a command
$\Update\,R(a_1,\ldots,a_k)$, for $k=\ar(R)$ and 
$\Update\in\set{\Insert,\Delete}$, we proceed as follows.
The idea is to remove from the data structure the
information on all the database elements  
whose $r_j$-neighbourhood (for some $j\in[s]$) is affected by
the update, and then to recompute
the information concerning all these elements
 on the updated database.

Let $\DBold$ be the database before the update is received and
let $\DBnew$ be the database after the update has been performed. 
We consider each $j\in[s]$.
All elements whose $r_j$-neighbourhood might have changed, belong to
the set 
\;$\UpdateSet_j\deff\neighb{r_j}{\DBStrich}{\ov a}$, \ where
$\DBStrich\deff\DBnew$ if the update command is $\Insert\; R(\ov{a})$,
and $\DBStrich\deff\DBold$ if the update command is $\Delete\; R(\ov{a})$.

To remove the old information from $\arrayA[j]$, we compute for each
$a \in \UpdateSet_j$ the neighbourhood
$T_a\deff \Neighb{r_j}{\DBold}{a}$, check whether
$(T_a,a)\isom\rho_j$, and if so, 
decrement the value $\arrayA[j]$. 
\\
To recompute the new information for $\arrayA[j]$, we compute for all 
$a \in \UpdateSet_j$ the neighbourhood
$T'_a\deff \Neighb{r_j}{\DBnew}{a}$, check whether
$(T'_a,a)\isom\rho_j$, and if so, 
increment the value $\arrayA[j]$. 

Using Lemma~\ref{lem:basic_facts} we obtain for each $j\in[s]$ that
$|\UpdateSet_j|\leq kd^{r_j+1}$.
For each $a\in \UpdateSet_j$, the
neighbourhoods $T_a$ and $T'_a$ can be computed in time
$\big(d^{r_j+1}\big)^{\bigOh(\size{\schema})}$, and testing for
isomorphism with $\rho_j$ can be done in time 
$\big(d^{r_j+1}\big)^{\bigOh(\size{\schema}+d^{r_j+1})}$. 
Thus, the update of $\arrayA[j]$ is done in time
$k{\cdot}\big(d^{r_j+1}\big)^{\bigOh(\size{\schema}+d^{r_j+1})}
 \leq
 2^{d^{2^{\bigOh(\size{\phi})}}}$ 
(note that $k\leq \size{\schema}\leq\size{\query}$ and 
$r_j\leq 4^{\qr(\query)}\leq 2^{\bigOh(\size{\query}))}$).

After having updated $\arrayA[j]$ for each $j\in[s]$, we recompute the 
query answer $\texttt{Ans}$ as follows.
Every Hanf-sentence of the form 
$\existsk x\,\sphere{\rho_j}{x}$ in $\psi$ is replaced by the Boolean constant $\true$ if 
$\arrayA[j]\geq k$, and by the Boolean constant $\false$ otherwise.
Every Hanf-sentence of the form $\existsim x\,\sphere{\rho_j}{x}$ is replaced by
$\true$ if $\arrayA[j]\equiv i\ \mymod\ m$, and by $\false$ otherwise. 
The resulting formula, a Boolean combination of the Boolean constants
$\true$ and $\false$, then is
evaluated, and we let $\texttt{Ans}$ be the obtained result.
Thus, recomputing $\texttt{Ans}$ takes time $\poly(\size{\psi})$.

In summary, the entire update time is 
$\updatetime=f(\phi,d)=2^{d^{2^{\bigOh(\size{\phi})}}}$.
In case that $\phi$ is a $d$-bounded Hanf-sentence of locality radius $r$, we even have 
$\updatetime=k{\cdot}\big(d^{r+1}\big)^{\bigOh(\size{\schema}+d^{r+1})}
\leq 2^{\bigOh(\size{\schema}d^{2r+2})}$.
This completes the proof of Theorem~\ref{thm:AnsweringBooleanQueries}.
\end{proof}

In \cite{FrickGrohe_APAL2004}, Frick and Grohe  obtained a
matching lower bound for answering Boolean $\FO$-queries of schema
$\schema=\set{E}$ 
on databases of degree at most $d\deff 3$ in the static setting. They used the
(reasonable) complexity theoretic assumption $\FPT\neq\AWstar$ and
showed that if this assumption is correct, then there is no algorithm
that answers Boolean $\FO$-queries $\phi$ on $\schema$-dbs $\DB$ of degree
$\leq 3$ in time
$2^{2^{2^{\littleoh(\size{\phi})}}}{\cdot}\poly(\size{\DB})$ in the
static setting
(see Theorem~2 in \cite{FrickGrohe_APAL2004}).
As a consequence, the same lower bound holds in the dynamic setting
and shows that in Theorem~\ref{thm:AnsweringBooleanQueries}, the
3-fold exponential dependency on the query size 
$\size{\phi}$ cannot be substantially lowered (unless $\FPT=\AWstar$):

\begin{corollary}
Let $\schema\deff\set{E}$ and let $d\deff 3$.
If $\FPT\neq\AWstar$, then 
there is no dynamic algorithm that receives a Boolean
$\FO[\schema]$-query $\phi$ and a 
$\schema$-db $\DBstart$, and computes within 
$\preprocessingtime\leq f(\phi){\cdot}\poly(\size{\DBstart})$
preprocessing time a data structure that can be updated in time
$\updatetime\leq f(\phi)$ and allows to
return the query result $\phi(\DB)$ with answer time
$\answertime\leq f(\phi)$, for a function $f$ with 
$f(\phi)=2^{2^{2^{\littleoh(\size{\phi})}}}$.
\end{corollary}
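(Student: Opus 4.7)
The plan is a straightforward reduction: any dynamic algorithm satisfying the stated bounds would yield a \emph{static} query evaluation algorithm running within the time bound that Frick and Grohe ruled out in Theorem~2 of \cite{FrickGrohe_APAL2004}. Specifically, I would argue by contradiction: assume a dynamic algorithm with routines $\PREPROCESS$, $\UPDATE$, and $\ANSWER$ exists with $\preprocessingtime\leq f(\phi)\cdot\poly(\size{\DBstart})$ and $\answertime\leq f(\phi)$, where $f(\phi)=2^{2^{2^{\littleoh(\size{\phi})}}}$.

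Given an arbitrary static input consisting of a Boolean $\FO[\set{E}]$-query $\phi$ and an $\set{E}$-db $\DB$ of degree $\leq 3$, the reduction simply runs $\PREPROCESS$ on $(\phi,\DB)$ and then immediately invokes $\ANSWER$ to return $\phi(\DB)$. No update operations are required, so the total runtime is
\[
  \preprocessingtime + \answertime
  \ \leq\  f(\phi)\cdot\poly(\size{\DB}) \ +\ f(\phi)
  \ =\  2^{2^{2^{\littleoh(\size{\phi})}}}\cdot\poly(\size{\DB}).
\]

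It remains to check that the hypotheses of the cited static lower bound align with ours. Frick and Grohe's Theorem~2 in \cite{FrickGrohe_APAL2004} rules out exactly such an algorithm for Boolean $\FO$-queries on databases of the schema $\set{E}$ of degree $\leq 3$, under the assumption $\FPT\neq\AWstar$; this matches our choice of $\schema\deff\set{E}$ and $d\deff 3$. Hence the existence of the hypothesised dynamic algorithm would contradict $\FPT\neq\AWstar$, proving the corollary.

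There is no real obstacle in this proof; the work is entirely carried out by the cited static lower bound, and the dynamic-to-static reduction is trivial since one may simply ignore all update operations. The only point worth mentioning explicitly in the write-up is that the tiny additive $f(\phi)$ term from $\answertime$ does not affect the asymptotic form of the static runtime, so the contradiction with Frick-Grohe's lower bound is immediate.
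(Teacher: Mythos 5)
Your proposal is correct and is exactly the argument the paper intends: the corollary is derived by observing that a dynamic algorithm with the stated bounds yields, via $\PREPROCESS$ followed by $\ANSWER$ (with no updates), a static algorithm of the form ruled out by Theorem~2 of Frick and Grohe. The paper states this only in passing, and your write-up merely makes the trivial dynamic-to-static reduction explicit.
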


\makeatletter{}%

\section{Technical Lemmas on Types and Spheres Useful for Handling Non-Boolean Queries}
\label{section:TypesAndSpheres}

For our algorithms for evaluating non-Boolean queries it will be
convenient to work with a fixed list of 
representatives of $d$-bounded $r$-types, provided by the following
straightforward lemma. 

\begin{lemma}\label{lemma:isotypes}
There is an algorithm which upon input of a schema $\schema$, a degree
bound $d\geq 2$, a radius $r\geq 0$, and a number $k\geq 1$,
computes a list $\Typeslistrdk= \tau_1,\ldots,\tau_\ell$ (for a
suitable $\ell\geq 1$) of $d$-bounded
$r$-types with $k$ centres (over $\schema$), such that 
for every $d$-bounded $r$-type $\tau$ with $k$ centres (over
$\schema$) there is exactly one $i\in[\ell]$ such that
$\tau\isom\tau_i$.
The algorithm's runtime is \
$2^{(kd^{r+1})^{\bigOh(\size{\schema})}}$.
Furthermore, upon input of a $d$-bounded $r$-type $\tau$ with $k$
centres (over $\schema$), the particular $i\in[\ell]$ with
$\tau\isom\tau_i$ can be computed in time
$2^{(kd^{r+1})^{\bigOh(\size{\schema})}}$. 
\end{lemma}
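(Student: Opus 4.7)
The plan is to enumerate all candidate $d$-bounded $r$-types with $k$ centres on a fixed finite universe, discard those that fail the structural conditions, and then use iterative isomorphism tests (via Lemma~\ref{lem:basic_facts}\eqref{eq:Nisom:lem:basic_facts}) to eliminate duplicates and obtain the list $\Typeslistrdk$.

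First I would set $n \deff kd^{r+1}$, which by Lemma~\ref{lem:basic_facts}\eqref{eq:Nsize:lem:basic_facts} upper-bounds $|\adom{T}|$ for every $d$-bounded $r$-type $(T,\ov{t})$ with $k$ centres. Hence every such $\type$ is isomorphic to one whose universe is a subset of $[n]$ and whose centre tuple is a fixed tuple $\ov{t}\in [n]^k$. I would then enumerate all $\schema$-structures on universe $[n]$ together with this fixed centre tuple: each relation symbol $R\in\schema$ of arity $\ar(R)$ contributes at most $2^{n^{\ar(R)}}$ possible interpretations, so the total number of candidates is bounded by $\prod_{R\in\schema} 2^{n^{\ar(R)}} \leq 2^{n^{\bigOh(\size{\schema})}}= 2^{(kd^{r+1})^{\bigOh(\size{\schema})}}$.

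Next, for each such candidate $(T,\ov{t})$ I would verify in time polynomial in $n^{\size{\schema}}$ the two defining conditions: (i) the Gaifman graph of $T$ has degree $\leq d$, and (ii) $\adom{T}= \nrT{\ov{t}}$, which can be tested by a BFS from each centre up to depth $r$. Candidates failing either condition are discarded. Among the surviving candidates I would build the list $\Typeslistrdk$ greedily: process the candidates in some order and, for each one, run the brute-force isomorphism test of Lemma~\ref{lem:basic_facts}\eqref{eq:Nisom:lem:basic_facts} against every type already in the list, appending the candidate only if no isomorphic copy is present. Each isomorphism test takes time $2^{\bigOh(\size{\schema} k^2 d^{2r+2})} \leq 2^{(kd^{r+1})^{\bigOh(\size{\schema})}}$, and since both the number of candidates and the eventual list length $\ell$ are bounded by $2^{(kd^{r+1})^{\bigOh(\size{\schema})}}$, the product of these three factors still fits in $2^{(kd^{r+1})^{\bigOh(\size{\schema})}}$.

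For the second claim, given a $d$-bounded $r$-type $\tau$ with $k$ centres, I would simply walk through the precomputed list and run the same brute-force isomorphism test against every $\tau_i$, returning the unique index $i$ for which the test succeeds; uniqueness of $i$ is guaranteed by the greedy construction, which by design never adds two mutually isomorphic entries. The dominant cost is again $\ell$ isomorphism tests, which fits in $2^{(kd^{r+1})^{\bigOh(\size{\schema})}}$. The main obstacle to beware of is accounting: one must ensure that the triple product (number of candidates)$\times$(list length)$\times$(cost per isomorphism test) does not blow up beyond $2^{(kd^{r+1})^{\bigOh(\size{\schema})}}$, but since each factor lies in that class and the class is closed under multiplication by constants in the exponent of $\size{\schema}$, this goes through.
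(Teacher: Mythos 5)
The paper gives no proof of this lemma at all --- it is declared ``straightforward'' --- so you are filling in details the authors skipped, and your overall plan (bound the universe size by $kd^{r+1}$ via Lemma~\ref{lem:basic_facts}\eqref{eq:Nsize:lem:basic_facts}, enumerate all candidate structures on $[n]$, filter by the degree bound and the condition $\adom{T}=\nrT{\ov{t}}$, then deduplicate with the brute-force isomorphism test of Lemma~\ref{lem:basic_facts}\eqref{eq:Nisom:lem:basic_facts}) is exactly the natural one, and your runtime accounting is sound: each of the three factors is $2^{(kd^{r+1})^{\bigOh(\size{\schema})}}$ and this class is closed under taking products of constantly many members.

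There is one genuine flaw in the enumeration step: you claim that every $d$-bounded $r$-type with $k$ centres is isomorphic to one whose centre tuple is \emph{a single fixed} tuple $\ov{t}\in[n]^k$. This is false for $k\geq 2$, because an isomorphism of types is a bijection mapping the $i$-th centre to the $i$-th centre, hence it preserves the equality pattern of the centre tuple; a type with $t_1=t_2$ is not isomorphic to any type whose fixed centre tuple has $t_1\neq t_2$ (and such types with coincident centres genuinely arise, e.g.\ as $r$-types of tuples $\ov{a}$ with repeated entries). With a single fixed centre tuple your list $\Typeslistrdk$ would therefore be incomplete. The repair is immediate: enumerate the centre tuple as well, say over all $\ov{t}\in[n]^k$ (or over all equality patterns on $[k]$ with distinct representatives), and additionally check $\ov{t}\in\adom{T}^k$ when filtering candidates. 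This multiplies the number of candidates by at most $n^k\leq 2^{(kd^{r+1})^{\bigOh(\size{\schema})}}$, so the stated runtime is unaffected. With that correction the argument, including the second claim about locating the index $i$ by scanning the list, goes through.
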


Throughout the remainder of this paper, $\Typeslistrdk$ will always
denote the list provided by Lemma~\ref{lemma:isotypes}.
The following lemma will be useful for evaluating Boolean combinations
of sphere-formulas.

\begin{lemma}\label{lem:normalform}
  Let $\schema$ be a schema, let $r\geq 0$, $k\geq 1$, $d\geq
  2$, and let $\Typeslistrdk=\tau_1,\ldots,\tau_\ell$.  
\\
  Let $\ov{x}=(x_1,\ldots,x_k)$ be a list of $k$ pairwise distinct variables.
  For every Boolean combination $\psi(\x)$ of $d$-bounded
  sphere-formulas of radius at most $r$ (over $\schema$), there is an
  ${I}\subseteq [\ell]$ such that
  \ \(
     \psi(\ov{x})
     \; \equivd \;
     \Oder_{i \in {I}} \sphere{\tau_i}{\x}.
  \)
\\
  Furthermore, given $\psi(\x)$, the set $I$ can be computed in time 
  $\poly(\size{\psi})\cdot 2^{(kd^{r+1})^{\bigOh(\size{\schema})}}$.
\end{lemma}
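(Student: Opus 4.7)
The plan is to exploit the fact that on $\sigma$-dbs of degree $\leq d$ the $\ell$ sphere-formulas $\sphere{\tau_1}{\x},\ldots,\sphere{\tau_\ell}{\x}$ induced by $\Typeslistrdk$ form a mutually exclusive and exhaustive classification of $k$-tuples: for any $\ov{a}\in\adom{\DB}^k$ the $r$-type $(\NrD{\ov{a}},\ov{a})$ is automatically $d$-bounded, so by Lemma~\ref{lemma:isotypes} it is isomorphic to \emph{exactly one} $\tau_i$. Consequently, any $d$-equivalent Boolean combination of these particular formulas collapses into a disjunction of the form $\Oder_{i\in I}\sphere{\tau_i}{\x}$.

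The main step is to reduce every $d$-bounded sphere-formula $\sphere{\tau}{\x}$ appearing in $\psi(\x)$, say of radius $r'\leq r$, to such a disjunction. Writing $\tau_i=(T_i,\ov{t}_i)$, the claim is that $\sphere{\tau}{\x}\equivd\Oder_{i\in I_\tau}\sphere{\tau_i}{\x}$ where $I_\tau\deff\Setc{i\in[\ell]}{(\Neighb{r'}{T_i}{\ov{t}_i},\ov{t}_i)\isom\tau}$. The justification is a distance-preservation observation: if $(\NrD{\ov{a}},\ov{a})\isom(T_i,\ov{t}_i)$ via some $\pi$, then because every shortest path of length $\leq r'\leq r$ from some $a_j$ to a point $b$ stays inside $\NrD{\ov{a}}$, we have $\Neighb{r'}{\NrD{\ov{a}}}{\ov{a}} = \Neighb{r'}{\DB}{\ov{a}}$, so $\pi$ restricts to an isomorphism $(\Neighb{r'}{\DB}{\ov{a}},\ov{a})\isom(\Neighb{r'}{T_i}{\ov{t}_i},\ov{t}_i)$. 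Hence $\ov{a}\models\sphere{\tau}{\x}$ precisely when $i\in I_\tau$, as claimed.

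Performing this substitution for every sphere-formula in $\psi(\x)$ yields a propositional combination $\psi'$ over the $\ell$ atoms $\sphere{\tau_1}{\x},\ldots,\sphere{\tau_\ell}{\x}$, which (being mutually exclusive and exhaustive on degree-$\leq d$ databases) reduces to $\Oder_{i\in I}\sphere{\tau_i}{\x}$, where $I$ consists of those $i\in[\ell]$ for which $\psi'$ evaluates to $\true$ under the assignment setting only $\sphere{\tau_i}{\x}$ to $\true$. For the runtime, Lemma~\ref{lemma:isotypes} gives $\ell\leq 2^{(kd^{r+1})^{\bigOh(\size{\sigma})}}$, and each isomorphism test in the computation of $I_\tau$ costs at most $2^{\bigOh(\size{\sigma}k^2d^{2r+2})}$ by Lemma~\ref{lem:basic_facts}\eqref{eq:Nisom:lem:basic_facts}; performed for the at most $\size{\psi}$ sphere-formulas in $\psi$ and followed by $\ell$ propositional evaluations of $\psi'$ in $\poly(\size{\psi})$ time each, the total is $\poly(\size{\psi})\cdot 2^{(kd^{r+1})^{\bigOh(\size{\sigma})}}$, matching the claim. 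The only genuinely non-mechanical point is the distance-preservation argument for passing from the $r$-type to the $r'$-type; everything else is bookkeeping.
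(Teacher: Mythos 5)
Your proof is correct and follows essentially the same route as the paper's. The key step --- rewriting each sphere-formula of radius $r'\leq r$ occurring in $\psi$ as the disjunction of those $\sphere{\tau_i}{\x}$ whose type restricts (to radius $r'$) to something isomorphic to the given one, justified by the observation that distances $\leq r$ from the centres are the same in $\NrD{\ov{a}}$ as in $\DB$ --- is exactly the paper's first step. Where the paper then eliminates negations and conjunctions syntactically (De Morgan, replacing $\nicht\,\sphere{\tau_j}{\x}$ by $\Oder_{i\neq j}\sphere{\tau_i}{\x}$, and intersecting index sets via distributivity), you instead evaluate the resulting propositional combination under the $\ell$ one-hot assignments; both arguments rest on the same mutual exclusivity and exhaustiveness of $\sphere{\tau_1}{\x},\ldots,\sphere{\tau_\ell}{\x}$ on degree-$\leq d$ databases, and yours is, if anything, the cleaner formulation. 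One omission worth fixing: the sphere-formulas occurring in $\psi$ need not mention all of $x_1,\ldots,x_k$; they may be of the form $\sphere{\rho}{\ov{x}'}$ for a proper subsequence $\ov{x}'$ of $\ov{x}$ of length $k'<k$ (this is precisely the situation in the intended application, where the Hanf normal form produces sphere-formulas of arity \emph{at most} $k$). Your definition of $I_\tau$ covers only arity exactly $k$; the paper handles the general case by comparing $\rho$ against the restriction of each $\tau_i$ to the $r'$-neighbourhood of the corresponding subsequence of its centres. Your distance-preservation argument extends verbatim to that setting, so the repair is routine, but as written your proof treats only a special case.
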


\begin{proof}
As a first step, we consider each sphere-formula $\zeta$ that occurs in $\psi$
and replace it by a $d$-equivalent disjunction of sphere-formulas
$\sphere{\tau_j}{\x}$ with $\tau_j$ in $\Typeslistrdk$:
\ if $\zeta$ has arity $k'\leq k$ and radius $r'\leq r$ and is of the
form $\sphere{\rho}{\ov{x}'}$ with
$\ov{x}'=x_{\nu_1},\ldots,x_{\nu_{k'}}$ for $1\leq
\nu_1<\cdots<\nu_{k'}\leq k$ and $\rho=(S,\ov{s})$ with
$\ov{s}=s_1,\ldots,s_{k'}$, then we replace $\zeta$ by the formula
\ $\zeta'\deff
 \Oder_{j\in J}\sphere{\tau_j}{\ov{x}},
$ \
where $J$ consists of all those $j\in[\ell]$ where
for $(T,\ov{t})=\tau_j$ with $\ov{t}=t_1,\ldots,t_k$ and for 
$\ov{t}'\deff t_{\nu_1},\ldots,t_{\nu_{k'}}$ we have
\ $
   \big( S,\, \ov{s}\big)
   \ \isom \ 
   \big( \, \inducedSubStr{T}{\neighb{r'}{T}{\ov{t}'}}, \, \ov{t}'\,\big)
$.
It is straightforward to see that $\zeta'$ and $\zeta$ are
$d$-equivalent. 

Let $\psi_1$ be the formula obtained from $\psi$ by replacing each
$\zeta$ by $\zeta'$. By the Lemmas~\ref{lemma:isotypes} and
\ref{lem:basic_facts}, $\psi_1$ can be constructed in time 
$\bigOh(\size{\psi}\cdot 2^{(kd^{r+1})^{\bigOh(\size{\schema})}})$.
Note that $\psi_1$ is a Boolean combination of formulas 
$\sphere{\tau_j}{\ov{x}}$ for $j\in [\ell]$.

In the second step, we repeatedly use de Morgan's law to push all
$\nicht$-symbols in $\psi_1$ directly in front of sphere-formulas.
Afterwards, we replace 
every subformula of the form $\nicht \sphere{\tau_j}{\x}$ by the
$d$-equivalent formula
\ $
  \Oder_{i\in [\ell]\setminus\set{j}} \sphere{\tau_i}{\x}
 $.  
Let $\psi_2$ be the formula obtained from $\psi_1$ by these transformations.
Constructing $\psi_2$ from $\psi_1$ takes time at most 
$\bigOh({\size{\psi_1}})\cdot 2^{(kd^{r+1})^{\bigOh(\size{\schema})}} = 
\bigOh(\size{\psi}\cdot 2^{(kd^{r+1})^{\bigOh(\size{\schema})}})$.
 
In the third step, we eliminate all the $\und$-symbols in $\psi_2$.
By the definition of the sphere-formulas $\tau_1,\ldots,\tau_\ell$ we have
\begin{equation}
  \label{eq:1}
  \sphere{\tau_i}{\x} \land \sphere{\tau_{i'}}{\x} \quad \equivd\quad
  \begin{cases} 
    \ \sphere{\tau_i}{\x}\text{,} & \text{if \ } i=i' \\
    \ \bot \text{,} & \text{if \ } i\neq i'
  \end{cases}
\end{equation}
where $\bot$ is an unsatisfiable formula.
Thus, by the distributive law we obtain for all $m\geq 1$ and all
$I_1,\ldots,I_m\subseteq[\ell]$ that
\[
  \Und_{j\in [m]}\Big(\Oder_{i\in I_j} \sphere{\tau_i}{\x}\Big)
  \quad\equivd\quad
  \Oder_{i_1\in I_1}\!\!\cdots\!\!\Oder_{i_m\in I_m} \Big( 
    \sphere{\tau_{i_1}}{\x}\und\cdots\und\,\sphere{\tau_{i_m}}{\x}
  \Big)
  \quad\equivd\quad
  \Oder_{i\in I}\sphere{\tau_i}{\x}
\]
for $I\deff I_1\cap\cdots\cap I_m$.
We repeatedly use this equivalence 
during a bottom-up traversal of the syntax-tree of $\psi_2$ to eliminate all
the $\und$-symbols in $\psi_2$. The resulting formula $\psi_3$ is 
obtained in time polynomial in the size of $\psi_2$.
Furthermore, $\psi_3$ is of the desired form
$\Oder_{i\in I}\sphere{\tau_i}{\x}$ for an $I\subseteq[\ell]$. 
The overall time for constructing $\psi_3$ and $I$ is
  $\poly(\size{\psi})\cdot 2^{(kd^{r+1})^{\bigOh(\size{\schema})}}$.
This completes the proof of Lemma~\ref{lem:normalform}.
\end{proof}

For evaluating a Boolean combination $\psi(\ov{x})$ of sphere-formulas \emph{and
Hanf-sentences} on a given $\schema$-db $\DB$, an obvious approach is
to first consider every Hanf-sentence $\chi$ that occurs in $\psi$,
to check if $\DB\models\chi$, and replace every occurrence of
$\chi$ in $\psi$ with $\true$ (resp., $\false$) if $\DB\models\chi$
(resp., $\DB\notmodels\chi$). The resulting formula $\psi'(\x)$ is
then transformed into a disjunction $\psi''(\x)\deff\Oder_{i\in
  I}\sphere{\tau_i}{\x}$ by Lemma~\ref{lem:normalform}, and the query
result $\psi(\DB)=\psi''(\DB)$ is obtained as the union of the query results 
$\sphere{\tau_i}{\DB}$ for all $i\in I$.

While this works well in the static setting (i.e., without database
updates), in the dynamic setting we have to take care of the fact that
database updates might change the status of a Hanf-sentence $\chi$ in
$\psi$,
i.e., an update operation might turn a database $\DB$ with
$\DB\models\chi$ into a database $\DBStrich$ with
$\DBStrich\notmodels\chi$ (and vice versa).
Consequently, the formula $\psi''(\x)$ that is equivalent to
$\psi(\x)$ on $\DB$ might be inequivalent to $\psi(\x)$ on $\DBStrich$.

To handle the dynamic setting correctly, at the end of each update
step we will use the following lemma 
(the lemma's proof is an easy consequence of
Lemma~\ref{lem:normalform}). 

\begin{lemma}\label{lemma:dynamicHanf}
  Let $\schema$ be a schema.
  Let $s\geq 0$ and let $\chi_1$, \ldots, $\chi_s$ be
  $\FOmod[\schema]$-sentences. 
  Let $r\geq 0$, $k\geq 1$, $d\geq
  2$, and let $\Typeslistrdk=\tau_1,\ldots,\tau_\ell$.  
  Let $\ov{x}=(x_1,\ldots,x_k)$ be a list of $k$ pairwise distinct
  variables.
  For every Boolean combination $\psi(\x)$ of the sentences
  $\chi_1,\ldots,\chi_s$ and of $d$-bounded
  sphere-formulas of radius at most $r$ (over $\schema$), and for
  every $J\subseteq [s]$ there is a set
  $I\subseteq [\ell]$ such that
  \[
     \psi_J(\ov{x})
     \quad \equivd \quad
     \Oder_{i \in {I}} \sphere{\tau_i}{\x},
  \]
  where $\psi_J$ is the formula obtained from $\psi$ by replacing
  every occurrence of a sentence $\chi_j$ with $\true$ if $j\in J$ and
  with $\false$ if $j\not\in J$ (for every $j\in[s]$).
\\
  Given $\psi$ and $J$, the set $I$ can be computed in time 
  $\poly(\size{\psi})\cdot 2^{(kd^{r+1})^{\bigOh(\size{\schema})}}$.
\end{lemma}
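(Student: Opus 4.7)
The plan is to reduce Lemma~\ref{lemma:dynamicHanf} directly to Lemma~\ref{lem:normalform} by observing that, once the values of the sentences $\chi_1,\ldots,\chi_s$ are fixed according to the chosen $J\subseteq[s]$, the formula $\psi_J(\x)$ is already a Boolean combination of $d$-bounded sphere-formulas of radius at most $r$ (together with the Boolean constants $\true$ and $\false$), and so the earlier lemma applies verbatim.

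Concretely, I would proceed as follows. First, given $\psi$ and $J$, construct $\psi_J(\x)$ by scanning $\psi$ and replacing every occurrence of $\chi_j$ by $\true$ if $j\in J$ and by $\false$ otherwise. This replacement has no effect on the sphere-formulas appearing in $\psi$, and takes time linear in $\size{\psi}$. The resulting formula $\psi_J(\x)$ is then a Boolean combination of $d$-bounded sphere-formulas of radius at most $r$ and the Boolean constants; after trivially eliminating the constants (using e.g.\ $\chi\und\true\equiv\chi$, $\chi\und\false\equiv\false$, $\chi\oder\true\equiv\true$, $\chi\oder\false\equiv\chi$, and $\nicht\true\equiv\false$, $\nicht\false\equiv\true$), it becomes a Boolean combination of $d$-bounded sphere-formulas of radius at most $r$ in the precise sense required by Lemma~\ref{lem:normalform}.

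Second, I would apply Lemma~\ref{lem:normalform} to this simplified formula to obtain an $I\subseteq[\ell]$ with
\[
  \psi_J(\x)\;\equivd\;\Oder_{i\in I}\sphere{\tau_i}{\x},
\]
which is the desired $d$-equivalent disjunction. Since the replacement and constant-elimination steps contribute only polynomial overhead in $\size{\psi}$, the overall runtime is bounded by the runtime of Lemma~\ref{lem:normalform}, namely $\poly(\size{\psi})\cdot 2^{(kd^{r+1})^{\bigOh(\size{\schema})}}$.

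There is no substantial obstacle: the only subtlety worth flagging is that the sentences $\chi_1,\ldots,\chi_s$ need not themselves be in Hanf normal form (the lemma allows arbitrary $\FOmod[\schema]$-sentences), but this is immaterial because $\psi_J$ is obtained by \emph{syntactically} replacing each $\chi_j$ by a Boolean constant before invoking Lemma~\ref{lem:normalform}, so we never need to analyse the internal structure of the $\chi_j$'s. In the intended use in later sections, one will separately maintain, under updates, the truth values of the finitely many sentences $\chi_j$ (using Theorem~\ref{thm:AnsweringBooleanQueries}), and then recompute the set $J$ — and hence the set $I$ — after each update; but that is outside the scope of the lemma itself.
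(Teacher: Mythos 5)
Your proposal is correct and matches the paper's intent exactly: the paper gives no explicit proof, noting only that the lemma ``is an easy consequence of Lemma~\ref{lem:normalform}'', and substituting the constants for the $\chi_j$ according to $J$ and then invoking Lemma~\ref{lem:normalform} is precisely that argument. The only detail worth adding is the degenerate case where $\psi_J$ collapses entirely to a constant, for which one takes $I=\emptyset$ (for $\false$) or $I=[\ell]$ (for $\true$, since every $k$-tuple in a degree-$\leq d$ database realises exactly one type from $\Typeslistrdk$).
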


To evaluate a single sphere-formula $\sphere{\tau}{\x}$ for a given
$r$-type $\tau$ with $k$ centres (over $\schema$),
it will be  useful to decompose $\tau$ into its
connected components as follows.
Let $\tau=(T,\ov{t})$ with $\ov{t}=(t_1,\ldots,t_k)$. 
Consider the Gaifman graph $\GGT$ of $T$ and
let $C_1,\ldots,C_c$ be the vertex sets of the $c$ connected
components of $\GGT$.
For each connected component $C_j$ of $\GGT$,
let $\ov{t}_j$ be the subsequence of $\ov{t}$ consisting of all
elements of $\ov{t}$ that belong to $C_j$, and let $k_j$ be the length
of $\ov{t}_j$.
Since $(T,\ov{t})$ is an $r$-type with $k$ centres, we have
$T=\NrT{\ov{t}}$, and thus $c\leq k$ and $k_j\geq 1$ for all $j\in[c]$.
To avoid ambiguity, we make sure that the list $C_1,\ldots,C_c$ is sorted in such a way that for all $j<j'$ we have $i<i'$ for the smallest $i$ with
 $t_i\in C_j$ and the smallest $i'$ with $t_{i'}\in C_{j'}$.

For each $C_j$ consider the $r$-type with $k_j$ centres $\rho_j=\big(T[C_j],\,\ov{t}_j\big)$.
Let $\nu_j$ be the unique integer
such that $\rho_j$ is isomorphic to the $\nu_j$-th element in
the list $\Typeslistrd{k_j}$, and let $\tau_{j,\nu_j}$ be the
$\nu_j$-th element in this list.

It is straightforward to see that the formula $\sphere{\tau}{\x}$ is
$d$-equivalent to the formula
\begin{equation}\label{eq:decomposition}    
 \connsphere{\tau}{\x}
  \quad \deff \quad
  \Und_{j\in [c]} \sphere{\tau_{j,\nu_j}}{\x_j}
  \ \und \ 
  \Und_{j\neq j'} \nicht \, \dist^{k_j,k_{j'}}_{\leq 2r+1}(\x_j,\x_{j'}),
\end{equation}
where $\x_j$ is the subsequence of $\x$ obtained from $\x$ in the same
way as $\ov{t}_j$ is obtained from $\ov{t}$, and
$\dist^{k_j,k_{j'}}_{\leq 2r+1}(\x_j,\x_{j'})$ is a 
formula of schema $\schema$ which expresses that for some variable $y$
in $\x_j$ and some variable $y'$ in $\x_{j'}$ the distance between 
$y$ and $y'$ is $\leq 2r{+}1$. I.e., for $\ov{a}=(a_1,\ldots,a_{k_j})$
and $\ov{b}=(b_1,\ldots,b_{k_{j'}})$ we have
$(\ov{a},\ov{b})\in\dist^{k_j,k_{j'}}_{\leq 2r+1}(\DB) \iff
\dist^{\DB}(\ov{a};\ov{b})\leq 2r{+}1$, where 
\begin{equation}\label{eq:dist-of-tuples}
\text{
  $\dist^{\DB}(\ov{a};\ov{b})\leq 2r{+}1$ means that
  $\dist^{\DB}(a_i,b_{i'})\leq 2r{+}1$ for some $i\in[k_j]$ and
  $i'\in[k_{j'}]$.
}
\end{equation}
Using the Lemmas~\ref{lem:basic_facts} and \ref{lemma:isotypes}, the
following lemma is straightforward.

\begin{lemma}\label{lem:decomposition}
There is an algorithm which upon input of a
schema $\schema$, numbers $r\geq 0$, $k\geq 1$, and $d\geq 2$, and
an $r$-type $\tau$ with $k$ centres (over $\schema$) computes the formula $\connsphere{\tau}{\x}$, along with the corresponding 
parameters $c$ and 
$k_{j}$, $\nu_{j}$, $\ov{x}_j$, $\tau_{j,\nu_j}$ for all
$j\in[c]$. 
\\
The algorithm's runtime is $2^{(kd^{r+1})^{\bigOh(\size{\schema})}}$.
\end{lemma}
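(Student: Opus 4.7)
The plan is to follow the construction described in the text preceding the lemma almost literally, and to bound each step using Lemmas~\ref{lem:basic_facts} and \ref{lemma:isotypes}.

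First, given $\tau=(T,\ov{t})$ with $\ov{t}=(t_1,\ldots,t_k)$, I would construct the Gaifman graph $\GGT$ explicitly from the relations of $T$. By Lemma~\ref{lem:basic_facts}\eqref{eq:Nsize:lem:basic_facts} the database $T$ has at most $kd^{r+1}$ elements, and the representation of $T$ (and hence $\GGT$) has size $(kd^{r+1})^{\bigOh(\size{\schema})}$, so this step is polynomial in that bound. Next, compute the connected components $C_1,\ldots,C_c$ of $\GGT$ by a standard linear-time graph traversal, and reorder them according to the smallest index $i$ with $t_i\in C_j$ so that the sorting convention required above \eqref{eq:decomposition} is satisfied. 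For each $j\in[c]$, extract the subsequence $\ov{t}_j$ of $\ov{t}$ consisting of those centres lying in $C_j$, together with its length $k_j$, and form the induced $r$-type $\rho_j\deff\big(T[C_j],\ov{t}_j\big)$ with $k_j$ centres.

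Second, for each $j\in[c]$ I would invoke the second part of Lemma~\ref{lemma:isotypes} on $\rho_j$ to obtain the unique index $\nu_j$ with $\rho_j\isom\tau_{j,\nu_j}$, where $\tau_{j,\nu_j}$ is the $\nu_j$-th element of $\Typeslistrd{k_j}$. Finally, the subsequence $\x_j$ of $\x$ is read off from the positions that define $\ov{t}_j$ inside $\ov{t}$, and the formula $\connsphere{\tau}{\x}$ is assembled exactly as in~\eqref{eq:decomposition}; the conjuncts $\nicht\,\dist^{k_j,k_{j'}}_{\leq 2r+1}(\x_j,\x_{j'})$ are standard disjunctions over distance-$\leq 2r{+}1$ expressions between pairs of variables from $\x_j$ and $\x_{j'}$, which are easy to write down.

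For the correctness of the output it suffices to note that the text preceding the lemma already argues that $\sphere{\tau}{\x}$ is $d$-equivalent to $\connsphere{\tau}{\x}$: on a degree-$\leq d$ database, any realiser of $\sphere{\tau}{\x}$ must have pairwise $r$-disjoint centre-components (otherwise the neighbourhood around $\ov{x}$ would be connected and different from $T$), and conversely, combining isomorphic copies of each $\rho_j$ on pairwise $(2r{+}1)$-distant component assignments yields an $r$-neighbourhood isomorphic to $(T,\ov{t})$. So the only work is to argue about runtime.

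For the runtime, building $\GGT$, extracting components, sorting them, and writing out each $\rho_j$ cost at most $(kd^{r+1})^{\bigOh(\size{\schema})}$. The dominating step is the $c\leq k$ calls to Lemma~\ref{lemma:isotypes} to identify each $\nu_j$; each costs $2^{(k_j d^{r+1})^{\bigOh(\size{\schema})}}\leq 2^{(kd^{r+1})^{\bigOh(\size{\schema})}}$. Assembling $\connsphere{\tau}{\x}$ from the $\sphere{\tau_{j,\nu_j}}{\x_j}$ and the $\bigOh(k^2)$ distance conjuncts is then only polynomial in the sizes already produced. The overall runtime therefore matches the stated $2^{(kd^{r+1})^{\bigOh(\size{\schema})}}$. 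The only step requiring any care is ensuring that the canonical isomorphism-type indices $\nu_j$ are well-defined, which is exactly the content of Lemma~\ref{lemma:isotypes}; no further obstacle appears.
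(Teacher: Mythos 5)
Your proposal is correct and follows essentially the same route the paper intends: the paper explicitly defines $\connsphere{\tau}{\x}$ and all its parameters in the text preceding the lemma and declares the lemma "straightforward" from Lemmas~\ref{lem:basic_facts} and \ref{lemma:isotypes}, which is precisely the component decomposition of $\GGT$ plus the canonical-index lookup that you spell out. Your runtime accounting (at most $c\leq k$ calls to the second part of Lemma~\ref{lemma:isotypes}, each within $2^{(kd^{r+1})^{\bigOh(\size{\schema})}}$, dominating the polynomial cost of building $\GGT$ and assembling the formula) matches the stated bound.
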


We define the \emph{signature of $\tau$} to be the tuple
$\sgn{\tau}$ built from the parameters $c$ and
$\big(k_j,\nu_j,\setc{\mu\in[k]}{x_\mu\text{ belongs to
  }\ov{x}_j}\big)_{j\in[c]}$ 
obtained from the above lemma.
The signature $\sgnDB{\ov{a}}$ of a tuple $\ov{a}$ in a database $\DB$
(w.r.t.\ radius $r$)
is defined as $\sgn{\rho}$ for $\rho\deff\big(\NrD{\a},\a\big)$.
Note that $\a\in\sphere{\tau}{\DB} \iff \sgnDB{\a}=\sgn{\tau}$.

\makeatletter{}%

\section{Testing Non-Boolean \fomodtext Queries Under Updates}
\label{section:testing}

This section is devoted to the proof of the following theorem.

\begin{theorem}\label{thm:testing}
There is a dynamic algorithm that receives a schema $\schema$, a
degree bound $d\geq 2$, a $k$-ary $\FOmod[\schema]$-query $\phi(\ov{x})$ (for
some $k\in\NN$), and a
$\schema$-db $\DBstart$ of degree $\leq d$, and computes within 
$\preprocessingtime= f(\phi,d)\cdot\size{\DBstart}$
preprocessing time a data structure that can be updated in time
$\updatetime= f(\phi,d)$ and allows to
test for any input tuple 
$\ov{a}\in\Dom^k$ whether $\ov{a}\in\phi(\DB)$ within testing time
$\testingtime= \bigOh(k^2)$.
The function $f(\phi,d)$ is of the form 
$2^{d^{2^{\bigOh(\size{\phi})}}}$.
\end{theorem}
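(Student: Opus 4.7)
The plan is to first use Theorem~\ref{thm:HNF} to transform $\phi(\x)$ into a $d$-equivalent formula $\psi(\x)$ in Hanf normal form -- a Boolean combination of $d$-bounded Hanf-sentences $\chi_1,\ldots,\chi_s$ and $d$-bounded sphere-formulas of radius at most some $r$ depending only on $\phi$. Let $\Typeslistrdk=\tau_1,\ldots,\tau_\ell$ be the list of $d$-bounded $r$-types with $k$ centres produced by Lemma~\ref{lemma:isotypes}. Lemma~\ref{lemma:dynamicHanf} then supplies, for every $J\subseteq[s]$, a set $I_J\subseteq[\ell]$ with $\psi_J(\x)\equivd\bigvee_{i\in I_J}\sphere{\tau_i}{\x}$, where $\psi_J$ arises from $\psi$ by replacing each $\chi_j$ with $\true$ iff $j\in J$. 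Since $s$ and $\ell$ depend only on $\phi$ and $d$, a Boolean table $A\colon 2^{[s]}\times[\ell]\to\set{0,1}$ with $A(J,i)=1\iff i\in I_J$ is precomputed once at initialisation. As $\sgn{\tau}$ determines the isomorphism class of $\tau$, the map $i\mapsto\sgn{\tau_i}$ is injective, so testing $\ov{a}\in\phi(\DB)$ reduces to determining $J\deff\setc{j\in[s]}{\DB\models\chi_j}$ and the unique $i\in[\ell]$ with $\sgn{\tau_i}=\sgnDB{\ov{a}}$, and returning $A(J,i)$.

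To track $J$ dynamically, I would run the data structure of Theorem~\ref{thm:AnsweringBooleanQueries} for each $\chi_j$, which keeps $\chi_j(\DB)$ available with constant update and query time. To enable a fast evaluation of $\sgnDB{\ov{a}}$, during preprocessing I would store for every $a\in\adom{\DB}$: the neighbourhood $N_R^{\DB}(a)$ for $R\deff r+(k-1)(2r+1)$ (which by Lemma~\ref{lem:basic_facts}\eqref{eq:Nconnk:lem:basic_facts} contains the $r$-neighbourhood of every connected group of at most $k$ centres that includes $a$); the index $\mu(a)$ of the isomorphism type of $(N_R^{\DB}(a),a)$ within the 1-centred $R$-types of Lemma~\ref{lemma:isotypes}; and, for every $b\in N_R^{\DB}(a)$, a pointer $p_a(b)$ encoding the position of $b$ in a fixed canonical representative of $\mu(a)$. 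Each of these items has constant size per element, so total space and preprocessing time are linear in $\size{\DB}$. Under an update $\Update\,S(\ov{b})$, only the constantly many elements lying in a suitable $R$-neighbourhood of $\ov{b}$ can have their stored information affected, and for each of them it is recomputed from scratch in constant time, along the lines of the update routine in the proof of Theorem~\ref{thm:AnsweringBooleanQueries}.

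Testing $\ov{a}=(a_1,\ldots,a_k)$ then proceeds as follows. First, for every pair $\mu<\nu$ in $[k]$ decide in constant time, by looking $a_\nu$ up in the stored $R$-neighbourhood of $a_\mu$, whether $\dist^\DB(a_\mu,a_\nu)\leq 2r+1$; by Lemma~\ref{lem:basic_facts}\eqref{eq:Nconn2:lem:basic_facts}, running union-find on the $\binom{k}{2}$ resulting pairs yields the connected components of $\NrD{\ov{a}}$ in $O(k^2)$ time. Second, for each such component with centres $a_{i_1},\ldots,a_{i_{k_j}}$, read the positions $p_{a_{i_1}}(a_{i_2}),\ldots,p_{a_{i_1}}(a_{i_{k_j}})$ and obtain the $k_j$-centred $r$-type index $\nu_j$ by a single lookup in an additional table, precomputed at initialisation, that maps every triple (1-centred $R$-type index, arity $k'\in[k]$, list of $k'-1$ positions inside the canonical representative) to the corresponding entry of $\Typeslistrd{k'}$; this costs $O(k_j)$ per component and $O(k)$ in total. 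Finally, assemble $\sgnDB{\ov{a}}$ from the $(k_j,\nu_j,S_j)_{j\in[c]}$, use a precomputed hash table to locate the unique $i$ with $\sgn{\tau_i}=\sgnDB{\ov{a}}$, read the current $J$ from the Hanf-sentence data structures, and output $A(J,i)$. The overall testing time is $O(k^2)$.

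\textbf{Main obstacle.} The crux is identifying the type of each connected component in only $O(k_j)$ time, rather than via an explicit isomorphism test, which by Lemma~\ref{lem:basic_facts}\eqref{eq:Nisom:lem:basic_facts} would be far too slow for the target $O(k^2)$ bound. This is resolved by the precomputed table from $(R\text{-type},\text{positions})$ to centred $r$-type, together with the per-element pointers $p_a$: both are of constant size per element and are refreshable in constant time under updates, and together they turn each group's type computation into a single look-up. All other ingredients follow from locality of updates and the dynamic maintenance of the Hanf-sentence truth values via Theorem~\ref{thm:AnsweringBooleanQueries}.
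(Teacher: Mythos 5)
Your plan follows essentially the same route as the paper's proof: transform to Hanf normal form, maintain the truth values of the Hanf-sentences dynamically via Theorem~\ref{thm:AnsweringBooleanQueries}, reduce membership in the resulting disjunction of sphere-formulas to a signature comparison, and decompose the input tuple into the connected components of its $r$-neighbourhood in time $\bigOh(k^2)$ via pairwise distance tests. Your device for retrieving the $r$-type of each component in $\bigOh(k_j)$ time differs from the paper's: Lemma~\ref{lem:testing} precomputes and stores, in $k'$-ary arrays, the type index $\nu_{\ov{b}}$ of \emph{every} tuple $\ov{b}$ of arity $\leq k$ with connected $r$-neighbourhood, whereas you store per-element $R$-type indices and position pointers into canonical representatives and combine them through a static table. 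Your variant is sound: by Lemma~\ref{lem:basic_facts}\eqref{eq:Nconnk:lem:basic_facts} the $r$-neighbourhood of a connected group containing $a$ lies inside $N_R^{\DB}(a)$ (and shortest paths of length $\leq r$ from the centres stay inside it), so the group's $r$-type is determined by the $R$-type of $a$ together with the images of the remaining centres under any isomorphism onto the canonical representative; different such isomorphisms differ by an automorphism fixing the centre and therefore yield the same table entry. Two small points to tighten: membership of $a_\nu$ in the stored $R$-ball of $a_\mu$ only certifies $\dist^{\DB}(a_\mu,a_\nu)\leq R$, so for the $2r{+}1$ test you should additionally store distances or run a constant-time BFS inside the stored ball (legitimate, since a path of length $\leq 2r{+}1\leq R$ stays inside it); and the case $k=0$ should simply be delegated to Theorem~\ref{thm:AnsweringBooleanQueries}.

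The one genuine quantitative flaw is the table $A\colon \potenzmengeof{[s]}\times[\ell]\to\set{0,1}$. The number $s$ of Hanf-sentences occurring in $\psi$ can be as large as $\size{\psi}=2^{d^{2^{\bigOh(\size{\phi})}}}$, so precomputing $I_J$ for all $J\subseteq[s]$ requires $2^s$ invocations of Lemma~\ref{lemma:dynamicHanf}, i.e., time and space four-fold exponential in $\size{\phi}$, exceeding the claimed $f(\phi,d)=2^{d^{2^{\bigOh(\size{\phi})}}}$. The repair is exactly what the paper does: store only the set $I$ (equivalently, the set $K$ of admissible signatures, kept in an array for $\bigOh(k^2)$ membership tests) for the \emph{current} $J$, and recompute it via Lemma~\ref{lemma:dynamicHanf} at the end of every update; this costs $\poly(\size{\psi})\cdot 2^{(kd^{r+1})^{\bigOh(\size{\schema})}}$ per update and hence stays three-fold exponential. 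With that change your argument goes through and yields the stated bounds.
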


\newcommand{\testingupdatetime}{\text{\texttt{\upshape\small testingupdatetime}}}
\newcommand{\booleanupdatetime}{\text{\texttt{\upshape\small
      booleanupdatetime}}}
\newcommand{\hanftime}{\text{\texttt{\upshape\small hanftime}}}

\newcommand{\hanfform}{\psi}
\newcommand{\hanfsentenceset}{\Psi_{\text{\upshape Bool}}}
\newcommand{\hanfsentence}{\psi_{\text{\upshape Bool}}}
\newcommand{\hanfsubset}{\Phi}
\newcommand{\connectedtuples}{C}
\newcommand{\relF}{F}

For the proof, we use the lemmas provided in
Section~\ref{section:TypesAndSpheres} and the following lemma.

\begin{lemma}\label{lem:testing}
There is a dynamic algorithm that receives a schema $\schema$, a
degree bound $d\geq 2$, numbers $r\geq 0$ and $k\geq 1$, 
an $r$-type $\tau$ with $k$ centres (over $\schema$),
and a
$\schema$-db $\DBstart$ of degree $\leq d$, and computes within 
$\preprocessingtime=  2^{(kd^{r+1})^{\bigOh(\size{\schema})}} \cdot\size{\DBstart}$
preprocessing time a data structure that can be updated in time
$\updatetime=  2^{(kd^{r+1})^{\bigOh(\size{\schema})}}$ and allows to
test for any input tuple 
$\ov{a}\in\Dom^k$ whether $\ov{a}\in\sphere{\tau}{\DB}$ within testing time
$\testingtime= \bigOh(k^2)$.
\end{lemma}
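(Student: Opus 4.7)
The plan is to use Lemma~\ref{lem:decomposition} during $\INIT$ to decompose $\sphere{\tau}{\x}$ into $\connsphere{\tau}{\x}$, obtaining the parameters $c, (k_j,\nu_j,\ov{x}_j,\tau_{j,\nu_j})_{j\in[c]}$. With this decomposition, testing $\ov{a}\in\sphere{\tau}{\DB}$ reduces to checking (i) that the partition of $[k]$ induced by the relation ``$\dist^{\DB}(a_i,a_{i'})\leq 2r{+}1$'' coincides with the partition $(C_1,\ldots,C_c)$ prescribed by $\tau$, and (ii) that for each component $j$, the restricted sub-tuple $\ov{a}_j$ satisfies $(\NrD{\ov{a}_j},\ov{a}_j)\isom\tau_{j,\nu_j}$. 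Observe that (i) automatically subsumes all inter-component distance constraints present in $\connsphere{\tau}{\x}$.

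Set $R\deff r+(k{-}1)(2r{+}1)$. By Lemma~\ref{lem:basic_facts}(\ref{eq:Nconnk:lem:basic_facts}), every connected sub-tuple of $\ov{a}$ of length $\leq k$ lies inside $\neighb{R}{\DB}{a}$ for every one of its entries $a$. The data structure will maintain, for every $a\in\adom{\DB}$, the hash sets $\neighb{2r+1}{\DB}{a}$ and $\neighb{R}{\DB}{a}$ (enabling constant-time pair-distance tests), and, for every $j\in[c]$, a hash set $S_j(a)$ consisting of exactly those $(k_j{-}1)$-tuples $\ov{b}\in\neighb{R}{\DB}{a}^{k_j-1}$ for which $\big(\NrD{a,\ov{b}},(a,\ov{b})\big)\isom\tau_{j,\nu_j}$. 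Thus $S_j(a)$ records precisely the connected $\tau_{j,\nu_j}$-matches in which $a$ occurs as the leading centre.

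The $\TEST$ routine will first verify $a_i\in\adom{\DB}$ for every $i$, then fill the $k\times k$ matrix $M$ with $M[i,i']=1$ iff $a_{i'}\in\neighb{2r+1}{\DB}{a_i}$ in $\bigOh(k^2)$ time, and run union-find on $M$ to compute the induced partition of $[k]$. Comparing this partition with the fixed $(C_1,\ldots,C_c)$ settles (i). For (ii), for each $j\in[c]$ the routine identifies the canonically-first entry $a_{j,1}$ of $\ov{a}$ in $C_j$ (using the ordering from Lemma~\ref{lem:decomposition}) and performs a single hash-set query $(a_{j,2},\ldots,a_{j,k_j})\in S_j(a_{j,1})$. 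Since $c\leq k$ such queries are made, we arrive at $\testingtime=\bigOh(k^2)$.

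The $\UPDATE$ routine handles $\Insert$/$\Delete$ of a tuple over $\ov{c}$ by first enumerating the $\bigOh(d^{R+1})$ elements within Gaifman distance $R$ of $\ov{c}$ via Lemma~\ref{lem:basic_facts}(\ref{eq:Nsize:lem:basic_facts},\ref{eq:Ncompute:lem:basic_facts}); these are the only elements whose stored data can change. For each such $a$ it recomputes the two balls and, for every $j\in[c]$, rebuilds $S_j(a)$ by iterating over the $\leq d^{(R+1)(k-1)}$ candidate tuples inside $\neighb{R}{\DB}{a}$ and invoking the isomorphism test from Lemma~\ref{lem:basic_facts}(\ref{eq:Nisom:lem:basic_facts}). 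Substituting $R=\bigOh(kr)$ and collecting the factors keeps everything within $\updatetime=2^{(kd^{r+1})^{\bigOh(\size{\sigma})}}$, and the preprocessing bound follows from $\PREPROCESS=\INIT+\card{\DBstart}{\cdot}\UPDATE$. The main obstacle is organising the precomputation so that the $r$-type of any connected sub-tuple of an arbitrary input can be certified by a single constant-time lookup during $\TEST$; the choice of $R$ combined with the anchoring-at-leading-vertex convention is exactly what makes this feasible.
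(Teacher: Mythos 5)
Your proposal is correct and follows essentially the same strategy as the paper's proof: decompose $\sphere{\tau}{\x}$ via Lemma~\ref{lem:decomposition}, precompute connectivity and per-component type information for the constantly many tuples affected by each update, and test by recovering the component partition of the input tuple in $\bigOh(k^2)$ time and then looking up the component types. The only differences are organisational (you anchor the precomputed $\tau_{j,\nu_j}$-matches at their leading vertex in per-vertex sets, whereas the paper stores one global collection $\Gamma$ of all connected tuples of arity $\leq k$ together with their type indices), plus two details to make explicit: replace hash sets by the constant-access arrays provided by the paper's RAM model to obtain worst-case rather than expected bounds, and compute the set of affected elements in the version of the database ($\DBold$ for deletions, $\DBnew$ for insertions) that actually contains the updated tuple.
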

\begin{proof}
The preprocessing routine starts by using Lemma~\ref{lem:decomposition} to
compute the formula $\connsphere{\tau}{\x}$, along with the according 
parameters $c$ and $k_{j}$, $\nu_{j}$, $\ov{x}_j$, $\tau_{j,\nu_j}$
for each $j\in[c]$. 
This is done in time $2^{(kd^{r+1})^{\bigOh(\size{\schema})}}$.
We let $\sgn{\tau}$ be the signature of $\tau$ (defined directly after Lemma~\ref{lem:decomposition}).
Recall that \ $\connsphere{\tau}{\x}\ \equivd\ \sphere{\tau}{\x}$, \
and recall from equation~\eqref{eq:decomposition} the precise definition of the
formula $\connsphere{\tau}{\x}$.
Our data structure will store the following information on the
database $\DB$:
\begin{itemize}
 \item the set $\Gamma$ of all tuples $\ov{b}\in\adom{\DB}^{k'}$ where
   $k'\leq k$ and $\NrD{\ov{b}}$ \emph{is connected}, and
 \item 
   for every $j\in[c]$ and every tuple $\ov{b}\in \Gamma$ of arity
   $k_j$, the unique number $\nu_{\ov{b}}$ such that 
   $\rho_{\ov{b}}\deff \big(\NrD{\ov{b}},\ov{b}\big)$ is isomorphic to
   the $\nu_{\ov{b}}$-th element in the list
   $\Typeslistrd{k_j}$.
\end{itemize}

\noindent
We want to store this information in such a way that for any
given tuple $\ov{b}\in\Dom^{k'}$ it can be
checked in time $\bigOh(k)$ 
whether $\ov{b}\in\Gamma$.
To ensure this, we use a $k'$-ary array $\mathtt{\Gamma}_{k'}$ that is initialised
to 0, and where during update operations the entry
$\mathtt{\Gamma}_{k'}[\ov{b}]$ is set to 1 for all $\ov{b}\in
\Gamma$ of arity $k'$. 
In a similar way we can ensure that for any given $j\in[c]$ and any
$\ov{b}\in\Gamma$ of arity $k_j$, the number $\nu_{\ov{b}}$ can be
looked up in time $\bigOh(k)$.

The $\TEST$ routine upon input of a tuple $\ov{a}=(a_1,\ldots,a_k)$
proceeds as follows.

First, we partition $\ov{a}$ into $\ov{a}_1,\ldots,\ov{a}_{c'}$ (for
$c'\leq k$) such that 
$C_j\deff\nrD{\ov{a}_j}$ for $j\in[c']$ are the connected components
of $\NrD{\ov{a}}$. As in the definition of the formula
$\connsphere{\tau}{\x}$, we make sure that this list is sorted in such
a way that for all $j<j'$ we have $i<i'$ for the smallest $i$ with
$a_i\in C_j$ and the smallest $i'$ with $a_{i'}\in C_{j'}$.
All of this can be done in time $\bigOh(k^2)$ by first constructing the graph $H$
with vertex set $[k]$ and where there is an edge between vertices $i$ and $j$ iff
the tuple $(a_i,a_j)$ belongs to $\Gamma$, and then computing the
connected components of $H$.

Afterwards, for each $j\in[c']$ we use time $\bigOh(k)$ to look up the
number $\nu_{\ov{a}_j}$.
We then let $\sgnDB{\ov{a}}$ be the tuple built from $c'$ and 
$\big(|\ov{a}_j|,\nu_{\ov{a}_j},\setc{\mu\in[k]}{a_\mu\text{ belongs
    to }\ov{a}_j}\big)_{j\in[c']}$.
It is straightforward to see that $\ov{a}\in\connsphere{\tau}{\DB}$
iff $\sgnDB{\ov{a}}=\sgn{\tau}$. Therefore, the $\TEST$ routine checks
whether  $\sgnDB{\ov{a}}=\sgn{\tau}$ and outputs ``yes'' if this is
the case and ``no'' otherwise.
The entire time used by the $\TEST$ routine is
$\testingtime=\bigOh(k^2)$.

To finish the proof of Lemma~\ref{lem:testing}, we have to give
further details on the $\PREPROCESS$ routine and the $\UPDATE$
routine.
The $\PREPROCESS$ routine initialises $\Gamma$ as the empty set
$\emptyset$ and then performs $\card{\DBstart}$ update operations to
insert all the tuples of $\DBstart$ into the data structure.
The $\UPDATE$ routine proceeds as follows.

Let $\DBold$ be the database before the update is received and
let $\DBnew$ be the database after the update has been performed.
Let the update command be of the form $\Update\,R(a_1,\ldots,a_{\ar(R)})$.
We let $r'\deff r+(\ar(R){-}1)(2r{+}1)$.
All elements whose $r'$-neighbour\-hood might have changed belong to
the set 
\ $\UpdateSet\deff\neighb{r'}{\DBStrich}{\ov a}$, \ where
$\DBStrich\deff\DBnew$ if the update command is $\Insert\,R(\ov{a})$,
and $\DBStrich\deff\DBold$ if the update command is $\Delete\,R(\ov{a})$.

According to
Lemma~\ref{lem:basic_facts}(\ref{eq:Nconnk:lem:basic_facts}), all
tuples $\ov{b}$ that have to be inserted into or deleted from $\Gamma$
are built from elements in $\UpdateSet$. To update the information
stored in our data structure, we loop
through all tuples of arity $\leq k$ that are built from elements in $\UpdateSet$.

Using Lemma~\ref{lem:basic_facts}, we obtain that $|U|\leq
\ar(R){\cdot}d^{r'+1}$. The number of candidate tuples $\ov{b}$ built
from elements in $\UpdateSet$ is at most $ \big(\ar(R){\cdot}d^{r'+1}\big)^{k+1}$.
Using the Lemmas~\ref{lem:basic_facts} and \ref{lemma:isotypes},
it is not difficult to see that the entire update time is at most $\updatetime
= 2^{(kd^{r+1})^{\bigOh(\size{\schema})}}$.
The initialisation time $\inittime$ is of the same form, and hence the
preprocessing time is as claimed in the lemma.
This completes the proof of Lemma~\ref{lem:testing}.
\end{proof}

\medskip

Theorem~\ref{thm:testing} is now obtained by combining
Theorem~\ref{thm:HNF}, Lemma~\ref{lem:testing}, Theorem~\ref{thm:AnsweringBooleanQueries}, and
Lemma~\ref{lemma:dynamicHanf}.

\begin{proof}[Proof of Theorem~\ref{thm:testing}]
For $k=0$, the theorem immediately follows from
Theorem~\ref{thm:AnsweringBooleanQueries}.
Consider the case where $k\geq 1$. As in the proof of 
Theorem~\ref{thm:AnsweringBooleanQueries}, we assume w.l.o.g.\ that
all the symbols of $\schema$ occur in $\phi$. We start the
preprocessing routine by using Theorem~\ref{thm:HNF} to
transform $\phi(\x)$ into a $d$-equivalent query $\psi(\x)$ in Hanf normal
form; this takes time $2^{d^{2^{\bigOh(\size{\phi})}}}$.
The formula $\psi$ is a Boolean combination of $d$-bounded Hanf-sentences 
and sphere-formulas (over $\schema$) of locality radius
at most $r\deff 4^{\qr(\phi)}$, and each sphere-formula is of arity at most $k$. 
Let $\chi_1,\ldots,\chi_s$ be the list of all Hanf-sentences that
occur in $\psi$.

We use Lemma~\ref{lemma:isotypes} to compute the list $\Typeslistrdk=
\tau_1,\ldots,\tau_\ell$.
In parallel for each $i\in[\ell]$, we use the 
algorithm
provided by Lemma~\ref{lem:testing} for $\tau\deff \tau_i$.
Furthermore, for each $j\in[s]$, we use the 
algorithm provided
by Theorem~\ref{thm:AnsweringBooleanQueries} upon input of the
Hanf-sentence $\phi\deff\chi_j$.
In addition to the components used by these dynamic algorithms, our data
structure also stores 
\begin{itemize}
 \item the set $J\deff\setc{j\in[s]}{\DB\models\chi_j}$, 
 \item the particular set $I\subseteq [\ell]$ provided by
   Lemma~\ref{lemma:dynamicHanf} for $\psi(\ov{x})$ and $J$, and
 \item the set $K=\setc{\sgn{\tau_i}}{i\in I}$, where for each
   type $\tau$, \,$\sgn{\tau}$ is the signature of $\tau$ defined
   directly after Lemma~\ref{lem:decomposition}.
\end{itemize}
 
The $\TEST$ routine upon input of a tuple $\ov{a}=(a_1,\ldots,a_k)$
proceeds in the same way as in the proof of Lemma~\ref{lem:testing} to
compute 
in time $\bigOh(k^2)$ the signature $\sgnDB{\ov{a}}$ of the tuple
$\ov{a}$.
For every $i\in[\ell]$ we have 
$\ov{a}\in\sphere{\tau_i}{\DB}\iff\sgnDB{\ov{a}}=\sgn{\tau_i}$.
Thus, $\ov{a}\in\phi(\DB) \iff \sgnDB{\ov{a}}\in K$.
Therefore, the $\TEST$ routine checks
whether  $\sgnDB{\ov{a}}\in K$ and outputs ``yes'' if this is
the case and ``no'' otherwise.
To ensure that this test can be done in
time $\bigOh(k^2)$, we use an array construction for storing $K$
(similar to the one for storing $\Gamma$ in the proof of
Lemma~\ref{lem:testing}).

The $\UPDATE$ routine runs in parallel the update routines for all the
used dynamic data structures. Afterwards, it recomputes $J$ by calling
the $\ANSWER$ routine for $\chi_j$ for all $j\in[s]$. Then, it uses 
Lemma~\ref{lemma:dynamicHanf} to recompute $I$.
The set $K$ is then recomputed by applying Lemma~\ref{lem:decomposition}
for $\tau\deff\tau_i$ for all $i\in I$.
It is straightforward to see that the overall runtime of the $\UPDATE$
routine is $\updatetime = 2^{d^{2^{\bigOh(\size{\schema})}}}$.
This completes the proof of Theorem~\ref{thm:testing}.
\end{proof}

\makeatletter{}%

\section{Representing Databases by Coloured Graphs}
\label{section:dbtograph}

\newcommand{\upsteps}{\ensuremath{d^{\bigoh(k^2r + k\size{\schema})}}\xspace}
\newcommand{\upstepstime}{\ensuremath{2^{\bigOh(\size{\sigma}k^2d^{2r+2})}}\xspace}
\newcommand{\aritys}{s}

To obtain dynamic algorithms for counting and enumerating query
results, it will be convenient to work with a representation of
databases by coloured graphs that is similar to the representation 
used in \cite{DBLP:conf/pods/DurandSS14}.
For defining this representation, let us consider a fixed $d$-bounded $r$-type
$\tau$ with $k$ centres (over a schema $\schema$).
Use Lemma~\ref{lem:decomposition} to compute the formula
$\connsphere{\tau}{\ov{x}}$ (for $\ov{x}=(x_1,\ldots,x_k)$) 
and the according parameters $c$ and
$k_j,\nu_j,\ov{x}_j,\tau_{j,\nu_j}$, and let $\sgn{\tau}$ be the
signature of $\tau$.
To keep the notation simple, we assume w.l.o.g.\ that
$\ov{x}_1=x_1,\ldots,x_{k_1}$,
$\ov{x}_2=x_{k_1+1},\ldots,x_{k_1+k_2}$ etc. 

Recall that $\sphere{\tau}{\x}$ is $d$-equivalent to the formula 
\[
 \connsphere{\tau}{\x}
  \quad \deff \quad
  \Und_{j\in [c]} \sphere{\tau_{j,\nu_j}}{\x_j}
  \ \und \ 
  \Und_{j\neq j'} \nicht \, \dist^{k_j,k_{j'}}_{\leq
    2r+1}(\x_j,\x_{j'}).
\]
To count or enumerate the results of the formula $\sphere{\tau}{\x}$
we represent the database $\DB$ by a \emph{$c$-coloured graph}
$\GofD$.
Here, a \emph{$c$-coloured graph} $\graph$ is a database of the particular schema 
\[
  \schema_c\quad \deff\quad \set{\relE,\relT_1,\ldots,\relT_c},
\]
where $\relE$ is a binary
relation symbol and $\relT_1,\ldots,\relT_c$ are unary relation
symbols.
We define $\GofD$ in such a way that
the task of counting or enumerating the results of the query
$\sphere{\tau}{\ov{x}}$ on the database $\DB$ can be reduced to
counting or enumerating the results of the query 
  \begin{equation}
    \label{eq:phi-ell}
    \phi_c(z_1,\ldots,z_c) \quad\deff\quad
    \Und_{j\in[c]} \relT_j(z_j)
    \ \land \  
    \Und_{j\neq j'} \neg \,\relE(z_j,z_{j'})
  \end{equation}
on the $c$-coloured graph $\GofD$.
The vertices of $\GofD$ correspond to tuples
over $\adom{\DB}$ whose $r$-neighbourhood is connected;
a vertex has colour $\relT_j$ if its associated tuple $\ov{a}$ is in 
$\sphere{\tau_{j,\nu_j}}{\DB}$; and an edge
between two vertices indicates that 
$\dist^{\DB}(\ov{a};\ov{b})\leq 2r{+}1$, for
their associated tuples $\ov{a}$ and $\ov{b}$.
The following lemma allows to translate a dynamic algorithm for counting or
enumerating the results of the query $\phi_c(z_1,\ldots,z_c)$ on
$c$-coloured graphs
into a dynamic algorithm for counting or enumerating
the result of the query $\sphere{\tau}{\x}$ on $\DB$.

\begin{lemma}\label{lem:translationtograph}
Suppose that the counting problem
(the enumeration problem) 
for $\phi_c(\z)$ on $\schema_c$-dbs of degree
at most $d'$ can be solved by a dynamic algorithm 
with initialisation time $\inittime(c,d')$, update time 
$\updatetime(c,d')$, and counting time $\countingtime(c,d')$ (delay
$\delaytime(c,d')$).
Then for every schema $\schema$ and every $d\geq 2$ the following holds.
\begin{enumerate}[(1)]
\item\label{eq:lem:db2graph:part1} 
Let $r\geq 0$, $k\geq 1$, $\tau$ a
$d$-bounded $r$-type  with $k$ centres, and
fix $d'\deff d^{2k^2(2r+1)}$ and
$\widetilde{t_x}\deff\max^k_{c=1}t_x(c,d')$ for
$t_x\in\{\inittime,\updatetime,\countingtime,\delaytime\}$.
The counting problem (the enumeration problem) for
$\sphere{\tau}{\x}$ on $\schema$-dbs of degree 
at most $d$ can be solved by a dynamic algorithm 
with counting time $\widetilde{\countingtime}$ (delay $\bigOh(\widetilde{\delaytime}
k)$),  update time 
 $\updatetimeStrich\leq \widetilde{\updatetime}\upsteps + \upstepstime$, and 
 initialisation time $\widetilde{\inittime}$.

\item\label{eq:lem:db2graph:part2}
The counting problem (the enumeration problem) for $k$-ary $\FOmod$-queries
$\phi(\x)$ on $\schema$-dbs of degree 
at most $d$ can be solved 
with counting time $\bigOh(1)$ (delay $\bigOh(\widehat{\delaytime}
k)$),  update time 
 $(\widehat{\updatetime}+\widehat{\countingtime})2^{d^{2^{\bigOh(\size{\phi})}}}$, and 
 initialisation time
 $\widehat{\inittime}2^{d^{2^{\bigOh(\size{\phi})}}}$
 where
 $\widehat{t_x}=\max^k_{c=1}t_x\bigl(c,d^{2^{\bigOh(\size{\phi})}}\bigr)$ for $t_x\in\{\inittime,\updatetime,\countingtime,\delaytime\}$.
\end{enumerate}
\end{lemma}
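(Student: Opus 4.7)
The plan is to encode the database as a bounded-degree coloured graph whose ``independent set''-style query $\phi_c$ has result in bijection with $\sphere{\tau}{\DB}$, and then to lift this bijection to general $\FOmod$-queries via the Hanf normal form machinery of Sections~\ref{section:HNF}--\ref{section:TypesAndSpheres}. For part~(\ref{eq:lem:db2graph:part1}), I would define $\GofD$ to have vertex set $\setc{(j,\ov{b})}{j\in[c],\; \ov{b}\in\sphere{\tau_{j,\nu_j}}{\DB}}$, give $(j,\ov{b})$ the colour $\relT_j$, and place an edge between two distinct vertices $(j,\ov{b}),(j',\ov{b}')$ exactly when $\dist^{\DB}(\ov{b};\ov{b}')\leq 2r{+}1$ in the sense of equation~\eqref{eq:dist-of-tuples}. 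Using $\sphere{\tau}{\x}\equivd\connsphere{\tau}{\x}$ from equation~\eqref{eq:decomposition}, one checks that $\ov{a}=(\ov{a}_1,\ldots,\ov{a}_c)\in\sphere{\tau}{\DB}$ if and only if $((1,\ov{a}_1),\ldots,(c,\ov{a}_c))\in\phi_c(\GofD)$, yielding a bijection between the two query results that is computable componentwise in time $\bigOh(k)$.

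To bound the degree of $\GofD$, I note that each vertex $(j,\ov{b})$ has connected $r$-neighbourhood, so Lemma~\ref{lem:basic_facts}(\ref{eq:Nconnk:lem:basic_facts}) places all its components within distance $k(2r{+}1)$ of $b_1$; any adjacent vertex $(j',\ov{b}')$ must then have all its components within distance $2k(2r{+}1)$ of $b_1$, leaving at most $c\cdot k\cdot d^{2k^2(2r+1)}\leq d^{2k^2(2r+1)}=d'$ choices. To maintain $\GofD$ under updates, I would instantiate the testing data structure of Lemma~\ref{lem:testing} for each $\tau_{j,\nu_j}$ in parallel. Upon a command $\Update\ R(\ov{a})$, all tuples whose sphere-type could change lie in a ball of radius $\bigOh(kr+\ar(R))$ around $\ov a$, so at most $\upsteps$ tuples are affected; for each I recompute the type in time $\upstepstime$ using Lemma~\ref{lem:basic_facts}(\ref{eq:Nisom:lem:basic_facts}) and issue the corresponding insertions or deletions of coloured vertices, together with the at most $d'$ edge updates per vertex, to the assumed graph algorithm. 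Summing yields $\updatetimeStrich\leq\widetilde{\updatetime}\upsteps+\upstepstime$, while $\widetilde{\inittime}$ is inherited because initialisation on an empty database produces an empty graph. Counting transfers directly; for enumeration the graph algorithm emits a $c$-tuple which we unpack into a $k$-tuple in time $\bigOh(k)$, giving delay $\bigOh(\widetilde{\delaytime}\,k)$.

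For part~(\ref{eq:lem:db2graph:part2}), I first apply Theorem~\ref{thm:HNF} to convert $\phi(\x)$ into a $d$-equivalent $\hanfform(\x)$ in Hanf normal form, a Boolean combination of Hanf-sentences $\chi_1,\ldots,\chi_s$ and $d$-bounded sphere-formulas of radius $\leq r\deff 4^{\qr(\phi)}$ and arity $\leq k$. I run Theorem~\ref{thm:AnsweringBooleanQueries} in parallel for each $\chi_j$ to maintain $J\deff\setc{j}{\DB\models\chi_j}$, and the algorithm of part~(\ref{eq:lem:db2graph:part1}) in parallel for every $c\in[k]$ and every $\tau\in\Typeslistrdk$ (whose decomposition has $c$ components). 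After each update I recompute $J$ and invoke Lemma~\ref{lemma:dynamicHanf} to obtain $I\subseteq[\ell]$ with $\hanfform(\x)\equivd\Oder_{i\in I}\sphere{\tau_i}{\x}$. Since types in $\Typeslistrdk$ are pairwise non-isomorphic, the sets $\sphere{\tau_i}{\DB}$ are pairwise disjoint; hence I can maintain $|\hanfform(\DB)|$ in a single variable as $\sum_{i\in I}|\sphere{\tau_i}{\DB}|$, refreshed after each update, giving counting time $\bigOh(1)$, and enumerate by iterating over $i\in I$ with delay $\bigOh(\widehat{\delaytime}\,k)$. Since $s,\ell\leq 2^{d^{2^{\bigOh(\size{\phi})}}}$, summing the costs of all parallel instances yields the claimed update and initialisation bounds. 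The main obstacle will be in part~(\ref{eq:lem:db2graph:part1}): carefully bounding the number of edge modifications per database update and confirming that the degree of $\GofD$ stays within $d'$; both rely crucially on Lemma~\ref{lem:basic_facts}(\ref{eq:Nconnk:lem:basic_facts}) to keep tuples with connected neighbourhood confined to a ball of radius linear in $k$.
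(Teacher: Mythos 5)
Your proposal is correct and follows essentially the same route as the paper: the same coloured-graph encoding $\GofD$ of tuples with connected $r$-neighbourhoods, the same use of Lemma~\ref{lem:basic_facts}(\ref{eq:Nconnk:lem:basic_facts}) to bound the degree by $d'$ and to confine affected tuples to a ball around the updated tuple, and for part~(\ref{eq:lem:db2graph:part2}) the same combination of Theorem~\ref{thm:HNF}, Theorem~\ref{thm:AnsweringBooleanQueries}, and Lemma~\ref{lemma:dynamicHanf} with the disjointness of the $\sphere{\tau_i}{\DB}$ ensuring repetition-free enumeration and additive counting. The only deviations are cosmetic (tagging vertices with their colour index, and a harmlessly loose constant in the degree estimate), so there is nothing substantive to add.
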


 \begin{proof}
We prove 
part~\eqref{eq:lem:db2graph:part1}
by a reduction from $\connsphere{\tau}{\x}$
to $\phi_c$. We use the notation introduced at the beginning of
Section~\ref{section:dbtograph},
and we let $\tau_j\deff\tau_{j,\nu_j}$ for every $j\in[c]$.
For a $\schema$-db $\DB$ we let $\GofD$ be the $\schema_c$-db with
  \begin{align*}
   \relT_j^{\GofD} &\ \ := \ \ \setc{\,v_{\a}}{\ov{a}\in \adom{\DB}^{k_j}\text{
     with }\big(\nb{r}{\DB}(\a),\a\big) \isomorph \tau_j\, }, \quad \text{for all
   $j\in[c]$, \ \ and}  \\
   \relE^{\GofD} &\ \ := \ \  \setc{\, (v_{\a},v_{\b})\in V^2}{
     \dist^{\DB}(\ov{a};\ov{b})\leq 2r{+}1 \,},
  \end{align*}
where \ $V \deff \bigcup_{j\in[c]}\relT_j^{\GofD}$.
We will shortly write $\relE$ and $\relT_j$ instead
of $\relE^{\GofD}$ and $\relT_j^{\GofD}$.

Using Lemma~\ref{lem:basic_facts} (and the fact that $\tau_j$ is
connected) we obtain that $(v_{\a},v_{\b})\in E$ iff
$\nb{r}{\DB}(\a,\b)$ is connected. If 
$\nb{r}{\DB}(\a,\b)$ is connected, then 
$\b  \in \bigl(N_{r+(|\a|+|\b|-1)(2r+1)}^{\DB}(a_1)\bigr)^{|\b|}$.
It follows that the degree of $\GofD$ is bounded by $d^{2k^2(2r+1)}$. 
Furthermore, by the definition of $\GofD$ and $\phi_c$ we get that
\ $(\a_1,\ldots,\a_c) \in \sphere{\tau}{\DB}$ $\iff$
$(v_{\a_1},\ldots,v_{\a_c}) \in \phi_c(\GofD)$, \ 
for all tuples $\ov{a}_1,\ldots,\ov{a}_c$ where $\ov{a}_j$ has
arity $k_j$ for each $j\in[c]$.
As a consequence,
\ $
  \setsize{\sphere{\tau}{\DB}} \, = \, \setsize{\phi_c(\GofD)},
$ \ 
and we can
therefore use the $\COUNT$ routine for $\phi_c$ on $\GofD$ to count the
number of tuples in $\sphere{\tau}{\DB}$. 
Furthermore, for each tuple
$(v_{\a_1},\ldots,v_{\a_c})\in \phi_c(\GofD)$ we can compute
$(\a_1,\ldots,\a_c)$ in time $\bigoh(k)$.  Therefore, given an
$\ENUMERATE$ routine for $\phi_c(\GofD)$ with delay $\delaytime$ we can
produce an enumeration of $\sphere{\tau}{\DB}$ with delay
$\bigoh(\delaytime k)$.

It remains to show how to construct 
and maintain $\GofD$ when the database $\DB$ is updated.
As initialisation for the empty database $\DBempty$ we just perform
the $\INIT$ routine of the dynamic algorithm for
$\phi_c(\z)$ on $\schema_c$-dbs of degree at most $d'$.
The $\UPDATE$ routine of the dynamic algorithm for $\sphere{\tau}{\x}$ on
$\schema$-dbs of degree at most $d$ is provided by the following claim.

\begin{claim}\label{claim:update}
 If $\DBnew$ is obtained from $\DBold$ by one update step, then
 $\GofDnew$ can be obtained from $\GofDold$ by $\upsteps$ update steps
 and additional computing time $\upstepstime$.
\end{claim}

\begin{proof}
Let the update command be of the form
$\Update\,R(a_1,\ldots,a_{\ar(R)})$ with $\a=(a_1,\ldots,a_{\ar(R)})$.
Let $r'= r+(k{-}1)(2r{+}1)$. 
Let $\DBStrich\in\set{\DBold,\DBnew}$ be the database whose
relation $R$ contains the tuple $\a$ (either
before deletion or after insertion).
Note that all elements whose $r'$-neighbourhood
might have changed, belong to the set
$\UpdateSet\deff\neighb{r'}{\DBStrich}{\ov a}$.

For every $j\in[c]$ and every tuple $\b$ of arity at most $k$ of elements in $\UpdateSet$,
we check whether the $r$-type $\big(\nb{r}{\DBnew}(\b),\b\big)$ of
$\ov{b}$ is isomorphic to $\tau_j$.
Depending on the outcome of this test, we include or exclude $v_{\b}$
from the relation $\relT_j$.
Note that it indeed suffices to consider the tuples $\ov{b}$ built
from elements in $\UpdateSet$: \ The $r$-type of some
tuple $\b$ is changed by the update command only if
$\neighb{r}{\DBStrich}{\ov{b}}$ contains some element from $\a$. 
Furthermore, we only have to consider tuples $\ov{b}$ whose
$r$-neighbourhood
$\Neighb{r}{\DBStrich}{\ov{b}}$ is connected. Using 
Lemma~\ref{lem:basic_facts}(\ref{eq:Nconnk:lem:basic_facts}), we therefore obtain that each component
of $\ov{b}$ belongs to $\neighb{r'}{\DBStrich}{\ov{a}}=\UpdateSet$.

Afterwards, we update the coloured graph's edge relation $\relE$: \
We consider all tuples $\ov{b}$
and $\ov{b}'$ of arity $\leq k$ built from elements in $\UpdateSet$, and
check whether (1) there is a $j\in[c]$ such that $\ov{b}\in \relT_j$, (2)
there is a $j'\in[c]$ such that $\ov{b}'\in \relT_{j'}$,
and (3)
$\dist^{\DBnew}(\ov{b};\ov{b}')\leq 2r{+}1$. If all three checks
return the result ``yes'', then we insert the tuple
$\big(v_{\ov{b}},v_{\ov{b}'}\big)$ into $\relE$, otherwise we remove it
from $\relE$.

It remains to analyse the runtime of the described update procedure.
By Lemma~\ref{lem:basic_facts}, 
 $|\UpdateSet|\leq \ar(R)d^{r'+1}\leq 
\size{\schema}{d^{k(2r+1)}} \leq d^{\bigOh(kr + \lg\size{\schema})}
\leq d^{\bigOh(kr + \size{\schema})}$.
Furthermore, $\UpdateSet$ can be computed
in time $\big(\ar(R)d^{r'+1}\big)^{\bigOh(\size{\schema})}\ \leq \
d^{\bigOh(kr\size{\schema} + \size{\schema}^2)}$.
The number of tuples $\ov{b}$ that we have to consider is at most
$|\UpdateSet|^{k+1}\leq d^{\bigOh(k^2r+k\size{\sigma})}$. 

For each such $\ov{b}$ we use Lemma~\ref{lem:basic_facts}(\ref{eq:Nisom:lem:basic_facts}) to check in
time $2^{\bigOh(\size{\schema}k^2d^{2r+2})}$ \
whether the $r$-type of
$\ov{b}$ is isomorphic to $\tau_j$, for some $j\in[c]$.
In summary, for updating the sets $\relT_1,\ldots,\relT_c$ we use
at most $c|\UpdateSet|^{k+1}\leq d^{\bigOh(k^2r+k\size{\sigma})} $ calls of the $\UPDATE$ routine of
the dynamic algorithm on coloured graphs, and in addition to that we
use computation time at most $2^{\bigOh(\size{\schema}k^2d^{2r+2})}$.

By a similar reasoning 
we obtain that also the edge relation
$\relE$ can be updated by at most 
$d^{\bigOh(k^2r+k\size{\sigma})} $ calls of the $\UPDATE$ routine of
the dynamic algorithm on coloured graphs and additional
computation time at most $2^{\bigOh(\size{\schema}k^2d^{2r+2})}$.
For this note that we can use and maintain an additional array that
allows us to check, for any $a_i$ and $b_j$, in constant time whether
$\dist^{\DB}(a_i,b_j)\leq 2r{+}1$. 
This completes the proof of Claim~\ref{claim:update}.
\end{proof}

Finally, the $\PREPROCESS$ routine of the dynamic algorithm for
$\sphere{\tau}{\x}$ 
proceeds in the obvious way by first calling the $\INIT$ routine for 
$\DBempty$ and then performing $|\DBstart|$ update steps to
insert all the tuples of $\DBstart$ into the data structure.
This completes the proof of part~\eqref{eq:lem:db2graph:part1} of Lemma~\ref{lem:translationtograph}.

\bigskip

We now turn to the proof of part~\eqref{eq:lem:db2graph:part2} of
Lemma~\ref{lem:translationtograph}. 
For $k=0$, the result follows immediately from
Theorem~\ref{thm:AnsweringBooleanQueries}. 
Consider the case where $k\geq 1$. 
W.l.o.g.\ we assume that
all the symbols of $\schema$ occur in $\phi$ (otherwise, we remove from
$\schema$ all symbols that do not occur in $\phi$). We start the
preprocessing routine by using Theorem~\ref{thm:HNF} to
transform $\phi(\x)$ into a $d$-equivalent query $\psi(\x)$ in Hanf normal
form; this takes time $2^{d^{2^{\bigOh(\size{\phi})}}}$.
The formula $\psi$ is a Boolean combination of $d$-bounded Hanf-sentences 
and sphere-formulas (over $\schema$) of locality radius
at most $r\deff 4^{\qr(\phi)}$, and each sphere-formula is of arity at most $k$. 
Note that for $d'\deff d^{2k^2(2r+1)}$ as used in 
the lemma's part~\eqref{eq:lem:db2graph:part1},
it holds that $d'=d^{2^{\bigOh(\size{\phi})}}$.
Let $\chi_1,\ldots,\chi_s$ be the list of all Hanf-sentences that
occur in $\psi$ (recall that $s\leq 2^{d^{2^{\bigOh(\size{\phi})}}}$).

We use Lemma~\ref{lemma:isotypes} to compute the list $\Typeslistrdk=
\tau_1,\ldots,\tau_\ell$ (note  that $\ell \leq 2^{d^{2^{\bigOh(\size{\phi})}}}$).
In parallel for each $i\in[\ell]$, we use the dynamic algorithm
for $\sphere{\tau_i}{\x}$ provided from the lemma's part~(1).
Furthermore, for each $j\in[s]$, we use the dynamic algorithm provided
by Theorem~\ref{thm:AnsweringBooleanQueries} upon input of the
Hanf-sentence $\phi\deff\chi_j$.
In addition to the components used by these dynamic algorithms, our data
structure also stores 
\begin{itemize}
 \item the set $J\deff\setc{j\in[s]}{\DB\models\chi_j}$, 
 \item the particular set $I\subseteq [\ell]$ provided by
   Lemma~\ref{lemma:dynamicHanf} for $\psi(\ov{x})$ and $J$, and
 \item the cardinality $\mycount = |\phi(\DB)|$ of the query result.
\end{itemize}

The $\COUNT$ routine simply outputs the value $\mycount$ in time
$\bigOh(1)$.
The $\ENUMERATE$ routine runs  the $\ENUMERATE$ routine on
$\sphere{\tau_i}{\DB}$ for every $i\in I$.
Note that this enumerates, without repetition, all tuples
in $\phi(\DB)$, because by Lemma~\ref{lemma:dynamicHanf}, 
$\phi(\DB)$ is the union of the sets $\sphere{\tau_i}{\DB}$
for all $i\in I$, and this is a union of pairwise disjoint sets.
The $\UPDATE$ routine runs in parallel the update routines for all
used dynamic data structures. Afterwards, it recomputes $J$ by calling
the $\ANSWER$ routine for $\chi_j$ for all $j\in[s]$. Then, it uses 
Lemma~\ref{lemma:dynamicHanf} to recompute $I$.
The number  $\mycount$ is then recomputed by letting
$\mycount=\sum_{i\in I}\mycount_i$, where 
$\mycount_i$ is the result of the $\COUNT$ routine for $\tau_i$.
It is straightforward%
to verify that the overall runtime of the $\UPDATE$
routine is bounded by
$(\widehat{\updatetime}+\widehat{\countingtime})2^{d^{2^{\bigOh(\size{\phi})}}}$. 
 \end{proof}

\makeatletter{}%

\section{Counting Results of  \fomodtext Queries Under Updates}
\label{section:counting}

This section is devoted to the proof of the following theorem.

\begin{theorem}\label{thm:counting}
There is a dynamic algorithm that receives a schema $\schema$, a
degree bound $d\geq 2$, a $k$-ary $\FOmod[\schema]$-query $\phi(\ov{x})$ (for
some $k\in\NN$), and a
$\schema$-db $\DBstart$ of degree $\leq d$, and computes within 
$\preprocessingtime= f(\phi,d)\cdot\size{\DBstart}$
preprocessing time a data structure that can be updated in time
$\updatetime= f(\phi,d)$ and allows to
return the cardinality $|\phi(\DB)|$ of the query result within
time $\bigOh(1)$.
\\
The function $f(\phi,d)$ is of the form 
$2^{d^{2^{\bigOh(\size{\phi})}}}$.
\end{theorem}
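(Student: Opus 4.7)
The plan is to obtain Theorem~\ref{thm:counting} as an immediate consequence of Lemma~\ref{lem:translationtograph}\eqref{eq:lem:db2graph:part2}, provided we can supply a dynamic algorithm for the specific fixed formula
\[
\phi_c(z_1,\ldots,z_c) \;=\; \bigwedge_{j\in[c]}\relT_j(z_j)\;\land\;\bigwedge_{j\neq j'}\neg\relE(z_j,z_{j'})
\]
on $\schema_c$-dbs of degree at most $d'$, whose initialisation time $\inittime(c,d')$, update time $\updatetime(c,d')$, and counting time $\countingtime(c,d')$ are all bounded by some function of $c$ and $d'$ only (i.e.\ constant in data complexity). Plugging such bounds into Lemma~\ref{lem:translationtograph}\eqref{eq:lem:db2graph:part2} yields exactly the claimed $\countingtime=\bigOh(1)$, $\updatetime=2^{d^{2^{\bigOh(\size{\phi})}}}$, and consequently $\preprocessingtime=f(\phi,d)\cdot\size{\DBstart}$.

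To build the dynamic algorithm for $\phi_c$, I would store the count $\mycount=\card{\phi_c(\graph)}$ explicitly so that $\COUNT$ returns in $\bigOh(1)$. The maintenance strategy relies on inclusion--exclusion over which of the forbidden edge constraints are violated:
\[
\card{\phi_c(\graph)}\;=\;\sum_{S\subseteq\binom{[c]}{2}}(-1)^{|S|}\,N(S),
\qquad
N(S)\;=\;\Setsize{\setc{(v_1,\ldots,v_c)}{v_j\in\relT_j^\graph\ \forall j,\ \relE^\graph(v_j,v_{j'})\ \forall\{j,j'\}\in S}}.
\]
Decomposing $S$ into the connected components $C_1,\ldots,C_p$ of the auxiliary graph $([c],S)$ and letting $S_q$ be the restriction of $S$ to $C_q$, each $N(S)$ factorises as $N(S)=\prod_{q\in[p]}M(C_q,S_q)$, where $M(C_q,S_q)$ counts tuples $(v_j)_{j\in C_q}$ of appropriately coloured vertices that realise the local pattern $([C_q],S_q)$ in $\graph$.

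The key observation is that each factor $M(C_q,S_q)$ counts occurrences in $\graph$ of a fixed coloured connected graph of size at most $c$. Since $\graph$ has degree $\leq d'$, any such pattern involving a given vertex $v$ lies inside the $c$-neighbourhood of $v$, of which there are at most $(d')^{\bigOh(c)}$. An update (the insertion or deletion of an $\relE$-edge, or of a vertex in some $\relT_j$) therefore affects only $(d')^{\bigOh(c)}$ of these pattern occurrences, and we can recompute their contributions to each $M(C_q,S_q)$ in constant time by direct enumeration of the local neighbourhood. After refreshing all $M(C_q,S_q)$ we recompute $\mycount=\sum_S(-1)^{|S|}\prod_q M(C_q,S_q)$ in $2^{\bigOh(c^2)}$ time, again constant. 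Initialisation on $\DBempty$ sets every $M(C_q,S_q)=0$ (and $\mycount=0$) in constant time.

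The main obstacle I expect is the careful bookkeeping needed so that no tuple is over- or under-counted when an edge or a colour membership toggles, in particular making sure that the ``before'' and ``after'' contributions of every affected pattern are correctly subtracted and added to each of the exponentially (in $c$) many factors $M(C_q,S_q)$. Once this accounting is in place, combining with Theorem~\ref{thm:HNF} and Lemma~\ref{lem:translationtograph}\eqref{eq:lem:db2graph:part2} immediately yields Theorem~\ref{thm:counting} with the stated bounds.
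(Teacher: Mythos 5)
Your proposal is correct and follows essentially the same route as the paper: reduce to a dynamic counting algorithm for $\phi_c$ on bounded-degree coloured graphs via Lemma~\ref{lem:translationtograph}\eqref{eq:lem:db2graph:part2}, apply inclusion--exclusion over the violated edge constraints (your sum over all $S$ including $S=\emptyset$ is just a reindexing of the paper's split into $|\alpha(\graph)|$ minus the union of the $\theta_{j,j'}$), factor each term over the connected components of the constraint graph, and maintain each connected-pattern count locally using per-vertex counters in the bounded-degree graph. The only detail to pin down is that the forbidden-edge events should be taken over the symmetric closure of $\relE$ (the paper uses ordered pairs $(j,j')$ for this reason), but this does not affect the argument.
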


The theorem follows immediately from
Lemma~\ref{lem:translationtograph}\eqref{eq:lem:db2graph:part2} and
the following dynamic counting algorithm for the query $\phi_c(\ov{z})$.

\nc{\updatetimeCountPhi}{\Odc}
\begin{lemma}\label{lem:countPhic}
There is a dynamic algorithm that receives a number $c\geq 1$,
a degree bound $d\geq 2$, and a $\schema_c$-db $\graphStart$ of degree
$\leq d$, and computes $|\phi_c(\graph)|$ with
$\updatetimeCountPhi$ initialisation time, $\bigOh(1)$ counting time,
and $\updatetimeCountPhi$ update time. 
\end{lemma}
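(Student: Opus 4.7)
The plan is to reduce counting the satisfying tuples of $\phi_c$ to maintaining colored-homomorphism counts for small patterns on $[c]$. For every $F\subseteq\binom{[c]}{2}$ let $H_F\deff([c],F)$; inclusion--exclusion over which forbidden pair constraints are violated yields
\[
  |\phi_c(\graph)|
  \;=\;\sum_{F\subseteq \binom{[c]}{2}}(-1)^{|F|}\,\prod_{C}M_{H_F[C]},
\]
where the inner product runs over the connected components $C$ of $H_F$ and, for every connected labeled graph $H$ on a subset of $[c]$ (with the label of a vertex $\ell$ singling out the color class $T_\ell$), the count $M_H$ is the number of maps $h\colon V(H)\to V(\graph)$ satisfying $h(\ell)\in T_\ell^\graph$ for all $\ell\in V(H)$ and $E^\graph(h(\ell),h(\ell'))$ for all $\{\ell,\ell'\}\in E(H)$. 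The number of such patterns $H$ as well as of subsets $F$ is bounded by $2^{\bigoh(c^2)}\leq d^{\bigoh(c^2)}$, so the data structure stores every $M_H$ together with a running value $N=|\phi_c(\graph)|$; $\INIT$ on the empty database sets all of these to $0$ in time $d^{\bigoh(c^2)}$, and $\COUNT$ returns the stored $N$ in constant time.

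Each update first updates every $M_H$ and then recomputes $N$ by plugging the new values into the inclusion--exclusion sum, which is an additional $d^{\bigoh(c^2)}$-time step. A color update (inserting or deleting $a$ into/from some $T_j$) affects only those patterns $H$ with $j\in V(H)$: for each such $H$, $\Delta M_H$ equals $\pm 1$ times the number of colored homomorphisms with $h(j)=a$, which I would compute by a DFS inside $H$ rooted at $(j,a)$, extending at each step by elements of $T_\ell^\graph$ lying in the $\graph$-neighborhood of an already placed $H$-neighbor. Bounded degree of $\graph$ and $|V(H)|\leq c$ bound this search tree by $d^{\bigoh(c)}$ branches, so the per-pattern cost is $d^{\bigoh(c)}$ and the total is $d^{\bigoh(c^2)}$. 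For an edge update involving the pair $\{a,b\}$ I fix an arbitrary anchor $\ell_0\in V(H)$ per pattern and write $M_H=\sum_v m_H(\ell_0,v)$, where $m_H(\ell_0,v)$ counts colored homomorphisms with $h(\ell_0)=v$. Since $H$ is connected of order at most $c$ and homomorphisms cannot stretch distances, any hom whose image uses the changed edge must have $h(\ell_0)$ within $\graph$-distance $c-1$ of $\{a,b\}$; bounded degree restricts such $v$ to a set of size $d^{\bigoh(c)}$ that can be enumerated by a local BFS in $\graph$. For each such $v$ I then recompute $m_H(\ell_0,v)$ in both the old and the new graph by the same DFS-in-$H$ technique anchored at $(\ell_0,v)$, each in time $d^{\bigoh(c)}$, and accumulate the difference into $\Delta M_H$. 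Summed over the $d^{\bigoh(c^2)}$ patterns, the per-update cost is $d^{\bigoh(c^2)}$ as required.

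The main subtlety is the edge-update locality claim: only $d^{\bigoh(c)}$ anchor values $v$ can contribute a nonzero change to $m_H(\ell_0,v)$, because any witnessing homomorphism must send some vertex of $H$ to an endpoint of the updated edge, and consequently its anchor's image cannot be far from $\{a,b\}$. This hinges on $H$ being connected with at most $c$ vertices and on $\graph$ having maximum degree $\leq d$. Once this is in place the remainder is straightforward bookkeeping over the $d^{\bigoh(c^2)}$ labeled patterns, and the preprocessing time follows from $\preprocessingtime=\inittime+\card{\graphStart}\cdot\updatetime$.
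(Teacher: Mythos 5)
Your proposal is correct and follows essentially the same route as the paper: inclusion--exclusion over the set of violated edge constraints, factorization of each resulting conjunctive pattern into connected components, and maintenance of per-component counts using the locality that a connected pattern on $\leq c$ vertices forces all witnesses into a $d^{\bigoh(c)}$-size neighbourhood of the updated tuple. The only (immaterial) differences are that the paper indexes the inclusion--exclusion by ordered pairs and explicitly stores the per-anchor counts $\setsize{S_i^v}$ for every vertex $v$, whereas you work with unordered pairs (which implicitly requires symmetrizing $E$) and recompute the affected anchored counts in the old and new graph at update time.
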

\begin{proof}
Recall that 
\ $
    \phi_c(z_1,\ldots,z_c) \ = \
    \Und_{i\in[c]} \relT_i(z_i)
    \ \land \  
    \Und_{j\neq j'} \neg \,\relE(z_{j},z_{j'})
$. \
For all $j,j'\in[c]$ with $j\neq j'$ consider the formula
\ $\theta_{j,j'}(z_1,\ldots,z_c) \ \deff \
 \relE(z_j,z_{j'}) \ \und \ \Und_{i\in[c]}\relT_i(z_i)$.
Furthermore, let \ 
$\alpha(z_1,\ldots,z_c) \ \deff \ \Und_{i\in[c]}\relT_i(z_i)$.
Clearly, for every $\schema_c$-db $\graph$ we have
\begin{eqnarray*}
  \alpha(\graph)
& =
& \relT_1^\graph \times \cdots \times \relT_c^\graph, \qquad
\medskip\\
  \phi_c(\graph) 
& = 
& \alpha(\graph) \ \setminus \ \Big( \bigcup_{j\neq j'} \theta_{j,j'}(\graph)\Big),
\quad\text{and hence,}\quad
  |\phi_c(\graph)| 
 \ \; = \ \;
 |\alpha(\graph)| \  - \ 
  \Big| \bigcup_{j\neq j'} \theta_{j,j'}({\graph})\Big|.
\end{eqnarray*}
By the \emph{inclusion-exclusion principle} we obtain for
$J\deff\setc{(j,j')}{j,j'\in[c],\ j\neq j'}$ that
\[
  \Big| \bigcup_{j\neq j'} \theta_{j,j'}({\graph})\Big|
  \ \ \ = \ \ \
  \sum_{\emptyset\neq K\subseteq J} 
   (-1)^{|K|-1}\ \Big|\!\bigcap_{(j,j')\in K}\theta_{j,j'}({\graph})
   \Big|
 \ \ \ = \ \ \
 \sum_{\emptyset\neq K\subseteq J} 
   (-1)^{|K|-1}\ \big| \phi_K(\graph)\big|
\]
for the formula \ 
$\phi_K(z_1,\ldots,z_c) \ \deff \ \Und_{i\in[c]}\relT_i(z_i) \ \und \ 
  \Und_{(j,j')\in K} E(z_j,z_{j'})$. 
\\
Our data structure stores the following values:
\begin{itemize}
 \item $|\relT_i^\graph|$, \ for each $i\in[c]$, \quad and \quad
  $\mycount_1 \ \deff \ |\alpha(\graph)| \ = \ \prod_{i\in[c]}|\relT_i^{\graph}|$,
 \item $|\phi_K(\graph)|$, \ for each $K\subseteq J$ with
   $K\neq\emptyset$, \quad and
 \item $\mycount_2 \ \deff \
  \sum_{\emptyset\neq K\subseteq J} 
   (-1)^{|K|-1}\ \big| \phi_K(\graph)\big|$ \quad and \quad 
  $\mycount_3\ \deff\ \mycount_1 - \mycount_2$.
\end{itemize}

Note that $\mycount_3=|\phi_c(\graph)|$ is the desired size of
the query result. Therefore, the $\COUNT$ routine can answer in
time $\bigOh(1)$ by just outputting the number $\mycount_3$.

It remains to show how these values can be initialised and updated
during updates of $\graph$.
The initialisation for the empty graph initialises all the
values to 0.
In the $\UPDATE$ routine, the values for $|\relT_i^\graph|$ and
$\mycount_1$ can be updated in a straightforward way (using time
$\bigOh(c)$).
For each $K\subseteq J$, the update of $|\phi_K(\graph)|$ is
provided within time $\Odc$ by the following
Claim~\ref{claim:counting_positive_formulas}.
\begin{claim}\label{claim:counting_positive_formulas}
  For every $K \subseteq J$, the cardinality
  $\setsize{\phi_K(\graph)}$ of a $\schema_c$-db $\graph$ of degree at
  most $d$ can be updated within time $\Odc$ after
  $\Odc\cdot\card{\graphStart}$ preprocessing time. 
\end{claim}
\begin{proof}
Consider the directed graph $H\deff(V,K)$ with vertex set $V\deff[c]$
and edge set 
$K$.
Decompose the Gaifman graph of $H$ into its connected components. 
Let $V_1,\ldots,V_s$ be the connected components (for a suitable
$s\leq c$).
For each $i\in[s]$ let $H_i\deff H[V_i]$ be the induced subgraph of
$H$ on $V_i$. We write $K_i$ to denote the set of edges of $H_i$.
For every $i\in[s]$ let $\ell_i=|V_i|$, and let $t(i,1)<t(i,2)<\cdots <
t(i,\ell_i)$ be the ordered list of the vertices in $V_i$.
Consider the query
\begin{equation}\label{eq:2}
  \phi_{K_i}(z_{t(i,1)},\ldots,z_{t(i,\ell_i)})
  \quad \deff \quad 
  \bigwedge_{j\in V_i} \relT_j(z_j)
    \ \land \bigwedge_{(j,j')\in K_i} \relE(z_{j},z_{j'}).
\end{equation}
Note that $\phi_K$ is the conjunction of the formulas $\phi_{K_i}$ for
all $i\in[s]$.
Since the variables of the formulas $\phi_{K_i}$ for $i\in[s]$ are
pairwise disjoint, we have \
$\phi_K(\graph) = \phi_{K_1}(\graph) \times \cdots
\times \phi_{K_s}(\graph)$ (modulo permutations of the tuples), and
thus \ $|\phi_K(\graph)|=\prod_{i\in[s]}|\phi_{K_i}(\graph)|$.

For each $i\in[s]$, the value $|\phi_{K_i}(\graph)|$ can be computed as
follows.
For every $v\in \adom{\graph}$ we consider the set
  $S_i^v\deff \setc{(w_{t(i,1)},\ldots,w_{t(i,\ell_i)})\in
    \phi_{K_i}(\graph)}{w_{t(i,1)}=v}$.
Since the Gaifman graph of $H_i$ is connected and has $\ell_i$ nodes,
it follows that each 
component of every tuple
  in $S_i^v$ is contained in the $\ell_i$-neighbourhood of $v$ in
  $\graph$, and this neighbourhood contains 
   at most $d^{\ell_i+1}$ elements.
  Therefore, $\setsize{S_i^v}\leq (d^{\ell_i+1})^{\ell_i}$, and  using
  breadth-first search starting from $v$, the set $S_i^v$ can be
  computed in time \Odc. 
  Note that  $\phi_{K_i}(\graph)$ is 
the disjoint union of the sets $S_i^v$ for all $v\in\adom{\graph}$.
Therefore, $\setsize{\phi_{K_i}(\graph)}=\sum_{v\in\adom{\graph}}\setsize{S_i^v}$.

In our data structure we store for every $i\in[s]$ and every $v\in
\adom{\graph}$ the number $\mycount_{i,v} = \setsize{S_i^v}$.
Moreover, for every $i\in[s]$ we store the sum $\mycount_i = \sum_{v\in
  \adom{\graph}}\mycount_{i,v}=\setsize{\phi_{\relK_i}(\graph)}$.

The initialisation for the empty $\schema_c$-db $\graphStart$ sets all
these values to 0. Whenever 
the colour of a vertex of $\graph$ is updated
or an edge is inserted or deleted, we update all affected numbers accordingly.
Note that a number $\mycount_{i,v}$ changes only if $v$ is in the
$c$-neighbourhood around the updated edge or vertex
in the graph $\graph$. 
Hence, for at most $2d^{c+1}$ vertices $v$, the numbers
$\mycount_{i,v}$ are affected by an update, and
each of them can be updated in time \Odc. 
Moreover, for each $i\in[s]$, the sum $\mycount_i$ can be updated in
time $\bigoh(d^{c+1})$ by subtracting the old value of $\mycount_{i,v}$
and adding the new value of $\mycount_{i,v}$ for each of the at most
$2d^{c+1}$ relevant vertices $v$.
Finally, it takes time $\bigOh(c)$ to compute the updated value
$|\phi_K(\graph)|=\prod_{i\in[s]}\mycount_i$. 
The overall time used to produce the update is \Odc.
\end{proof}

Once we have available the updated numbers $|\phi_K(\graph)|$ for all
$K\subseteq J$, the value $\mycount_2$ can be computed in time 
$\bigOh(|2^{J}|)\leq 2^{\bigOh(c^2)}$. And $\mycount_3$ is then
obtained in time $\bigOh(1)$.
Altogether, performing the $\UPDATE$ routine takes time at most
$\updatetimeCountPhi$.
The $\PREPROCESS$ routine initialises all values for the
empty graph and then uses $|\graphStart|$ update steps to insert all
the tuples of $\graphStart$ into the data structure.
This completes the proof of Lemma~\ref{lem:countPhic}.
\end{proof}

\makeatletter{}%
\section{Enumerating Results of \fomodtext Queries Under Updates}
\label{section:enumerating}

\newcommand{\smallc}{c}%
\newcommand{\setJ}{J}
\renewcommand{\indexSetSmall}{{I}}
\renewcommand{\colourG}[1]{\ensuremath{\relT_{#1}^{\structG}}}
\newcommand{\tupleSetSmallnoIndex}{\tupleSetSmall{}}

\newcommand{\delaysmall}{\ensuremath{c^2}}
\newcommand{\delayEnumSphere}{\ensuremath{k^3}} %
\newcommand{\delaylarge}{\ensuremath{c^3d}}
\renewcommand{\rtdelayd}{\delaylarge}
\newcommand{\updatetimeEnumPhi}{d^{\poly(c)}}
\newcommand{\updatetimeEnumSphere}{\ensuremath{2^{\poly(k)\size{\schema}d^{2r+2}}}}

In this section we prove (and afterwards, improve) the following theorem.

\begin{theorem}\label{thm:enumeration}
There is a dynamic algorithm that receives a schema $\schema$, a
degree bound $d\geq 2$, a $k$-ary $\FOmod[\schema]$-query $\phi(\ov{x})$ (for
some $k\in\NN$), and a
$\schema$-db $\DBstart$ of degree $\leq d$, and computes within 
$\preprocessingtime= f(\phi,d)\cdot\size{\DBstart}$
preprocessing time a data structure that can be updated in time
$\updatetime= f(\phi,d)$ and allows to
enumerate $\phi(\DB)$ with  $d^{2^{\bigOh(\size{\phi})}}$ delay.
\\
The function $f(\phi,d)$ is of the form
$2^{d^{2^{\bigOh(\size{\phi})}}}$.
\end{theorem}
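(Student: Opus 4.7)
The plan is to reduce the problem to enumerating the fixed query $\phi_c$ on $c$-coloured bounded-degree graphs, and then to design such an algorithm. Concretely, I would invoke Lemma~\ref{lem:translationtograph}\eqref{eq:lem:db2graph:part2}, which turns any dynamic algorithm for $\phi_c(\ov{z})$ on $\schema_c$-dbs of degree $\leq d'$ into a dynamic algorithm for an arbitrary $k$-ary $\FOmod[\schema]$-query on $\schema$-dbs of degree $\leq d$, with $d'=d^{2^{\bigOh(\size{\phi})}}$ and delay blown up by a factor $\bigOh(k)$. Hence it suffices to prove: for every $c\geq 1$ and $d\geq 2$ there is a dynamic algorithm for $\phi_c$ on $\schema_c$-dbs of degree $\leq d$ with initialisation and update time $d^{\poly(c)}$ and delay $\bigOh(c^3 d)$. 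Plugging this in with $c\leq k\leq \size{\phi}$ yields the bounds stated in the theorem.

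To enumerate $\phi_c(\graph)$ I would fix a linear order on $\adom{\graph}$ and output the tuples in lexicographic order by a rightmost-increment depth-first walk. Starting from the last emitted tuple $(v_1,\ldots,v_c)$, one attempts to find the smallest $v_c^\star > v_c$ with $v_c^\star \in \relT_c \setminus N(\{v_1,\ldots,v_{c-1}\})$ that forms a \emph{live} prefix; on failure, back up and advance $v_{c-1}$ to its next live successor in $\relT_{c-1}\setminus N(\{v_1,\ldots,v_{c-2}\})$, then re-descend, and so on. At each of the $\leq c$ levels the sorted colour list $\relT_j$ is scanned, with at most $(j{-}1)d\leq cd$ vertices rejected for being in the forbidden neighbourhood (a check done in $\bigOh(c)$ time via an indicator array refreshed along the descent); combined with the $\bigOh(c)$ levels of backtracking and a constant-time liveness test, the delay is $\bigOh(c^3 d)$.

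The main obstacle is the constant-time \emph{liveness oracle}: given a prefix $(v_1,\ldots,v_j)$, decide whether it extends to a full solution of $\phi_c$. The plan is to adapt the inclusion--exclusion counters of Claim~\ref{claim:counting_positive_formulas} to the residual instance obtained from $\graph$ by restricting each remaining colour class to $\relT_i' = \relT_i \setminus N(\{v_1,\ldots,v_j\})$ for $i>j$: by the inclusion--exclusion expansion of Lemma~\ref{lem:countPhic}, the count of valid completions is a polynomial-size signed sum over edge-patterns whose individual summands can be read off the precomputed counters on $\graph$ in $d^{\poly(c)}$ time. I would recompute this count once per descent and cache it at the current DFS node, so that all subsequent liveness queries at the same prefix are $\bigOh(1)$. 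Since $\graph$ has degree $\leq d$, an edge or colour update affects only the $d^{\bigOh(c)}$ vertices within the relevant $\bigOh(c)$-neighbourhood of the modification, and each affected counter can be refreshed in time $d^{\poly(c)}$; this gives total update and initialisation time $d^{\poly(c)}$, and feeding the resulting dynamic algorithm into Lemma~\ref{lem:translationtograph}\eqref{eq:lem:db2graph:part2} then yields Theorem~\ref{thm:enumeration}.
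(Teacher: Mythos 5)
Your top-level route is the paper's: reduce via Lemma~\ref{lem:translationtograph}\eqref{eq:lem:db2graph:part2} to enumerating the fixed query $\phi_c(\ov{z})$ on $c$-coloured graphs of degree $\leq d'$, and supply a dynamic enumeration algorithm for $\phi_c$ with $d^{\poly(c)}$ update time. Where you genuinely diverge is in how dead-end prefixes are avoided. The paper (Lemma~\ref{lem:enumPhicWeak}) splits the colour classes into \emph{small} ones (size $\leq cd$) and \emph{large} ones: it precomputes and maintains by brute force the set of all pairwise non-adjacent tuples over the small classes (there are at most $(cd)^c$ of them), and observes that for a large class every prefix automatically has a non-forbidden extension, so no liveness test is ever needed there --- every non-forbidden candidate at a large level leads to an output. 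You instead build a general liveness oracle for arbitrary prefixes by inclusion--exclusion counting of completions on the residual instance. This can be made to work: the residual colour classes differ from the global ones by at most $cd$ explicitly known vertices, so each residual summand reduces (by a further inclusion--exclusion over which coordinates hit removed vertices) to precomputed global counters plus local breadth-first counts anchored at explicit vertices, all in $d^{\poly(c)}$ time, and these counters are maintainable under updates exactly as in Claim~\ref{claim:counting_positive_formulas}. The paper's dichotomy is simpler and avoids the oracle entirely.

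One piece of your accounting does not hold up: the claimed $\bigOh(c^3 d)$ delay with a ``constant-time liveness test''. Caching the completion count at the current DFS node does not make subsequent tests $\bigOh(1)$, because advancing to the next live child requires testing a fresh prefix $(v_1,\ldots,v_{j-1},v)$ for each scanned candidate $v$, and each such test costs a new $d^{\poly(c)}$ oracle evaluation. (The number of rejected candidates per advance is bounded by $\bigOh(cd)$ --- forbidden ones by the degree bound, and non-forbidden dead ones because each must lie in the neighbourhood of any fixed completion of the parent prefix --- but each rejection still pays the oracle cost.) So your coloured-graph subroutine honestly achieves delay $d^{\poly(c)}$, not $\bigOh(c^3 d)$. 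This is harmless for Theorem~\ref{thm:enumeration}, whose delay bound $d^{2^{\bigOh(\size{\phi})}}$ absorbs $(d')^{\poly(k)}$ with $d'=d^{2^{\bigOh(\size{\phi})}}$ and $c\leq k\leq\size{\phi}$; but it would not suffice for the sharpened delay of Theorem~\ref{thm:enumeration-improved}, which is what the paper's small/large decomposition and the skip pointers of Lemma~\ref{lem:enumPhic} are designed to deliver.
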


The theorem follows immediately from
Lemma~\ref{lem:translationtograph}\eqref{eq:lem:db2graph:part2} 
and the following dynamic enumeration
algorithm for the query $\phi_c(\ov{z})$.

\begin{lemma}\label{lem:enumPhicWeak}
There is a dynamic algorithm that receives a number $c\geq 1$,
a degree bound $d\geq 2$, and a $\schema_c$-db $\graphStart$ of degree
$\leq d$, and computes
within $\preprocessingtime = \updatetimeEnumPhi \cdot
\card{\graphStart}$ preprocessing 
time a data structure
that can be updated in time $\updatetimeEnumPhi$  and allows to
enumerate the query result $\phi_c(\graph)$ with
$\bigOh(\delaylarge)$ delay.
\end{lemma}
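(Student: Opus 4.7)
The plan is to enumerate the tuples $(v_1,\ldots,v_c) \in \phi_c(\graph)$ in lexicographic order by a depth-first backtracking search through the colour classes $\relT_1^\graph,\ldots,\relT_c^\graph$, supported by data structures that allow efficient ``next valid candidate'' queries at each level. For each $j \in [c]$ I would store $\relT_j^\graph$ as a sorted doubly-linked list with $\bigOh(1)$-time cross-pointers from every vertex to its list-nodes, and for each $v \in \adom{\graph}$ a counter $\mathtt{blocked}[v]$ that, during an enumeration call, counts the already-chosen $v_i$'s with $(v,v_i) \in \relE^\graph$; thus $v$ is a legal choice at the current level iff $\mathtt{blocked}[v]=0$. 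Insertions or deletions of a tuple in $\relE$ or any $\relT_j$ touch one list-node and only the $\bigOh(d)$ neighbours of the affected vertex, so $\UPDATE$ fits easily into the generous $\updatetimeEnumPhi$ budget, and $\PREPROCESS$ is carried out as usual by initialising on the empty graph and performing $\card{\graphStart}$ updates.

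At level $j$ the $\ENUMERATE$ routine walks the list of $\relT_j^\graph$ starting from the successor of the last value tried at this level (or from the head on a fresh entry), skipping all entries whose $\mathtt{blocked}$-counter is positive. At any moment at most $(j-1)d \le cd$ vertices are blocked, so each walk performs $\bigOh(cd)$ list-skips before either returning a valid $v_j$ or exhausting the list; picking or unpicking $v_j$ costs $\bigOh(d)$ for updating the counters on its neighbours in $\graph$. Whenever level $c$ is successfully filled, the tuple $(v_1,\ldots,v_c)$ is output and the search resumes by trying to advance the value at level $c$.

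The main obstacle will be proving the delay bound $\bigOh(\rtdelayd) = \bigOh(c^3 d)$ in spite of potentially long chains of failed descents between two consecutive outputs. I plan to resolve this by attaching to each level of the search a lookahead mechanism, computed on the fly by a single greedy trial completion of cost $\bigOh(c^2 d)$, that certifies, before each descent, whether the current partial tuple can be completed at all; a descent is committed only if the lookahead succeeds, so each committed descent leads to an output. Since $\graph$ has degree at most $d$, a database update invalidates only a bounded local portion of the lookahead and the repair fits within $\updatetimeEnumPhi$. With this mechanism, between two outputs the search performs $\bigOh(c)$ level transitions, each paying $\bigOh(c^2 d)$ for the list-walk plus the certification; correctness follows because the lex-order depth-first search over certified prefixes enumerates every tuple in $\phi_c(\graph)$ exactly once.
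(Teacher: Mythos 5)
Your overall architecture (depth-first backtracking through the colour classes, skip-lists, and blocked-counters) matches the paper's greedy strategy, and your delay accounting would be fine \emph{if} every committed descent were guaranteed to lead to an output. The gap is in the step that is supposed to guarantee this: a ``single greedy trial completion'' is not a correct decision procedure for whether a partial tuple $(v_1,\ldots,v_j)$ extends to a tuple of $\phi_c(\graph)$. Take $c=3$, $\relT_2^{\graph}=\set{a,b}$, $\relT_3^{\graph}=\set{x}$, with $(a,x)\in\relE^{\graph}$ the only edge incident to these vertices. A greedy completion of the prefix $(v_1)$ that tries $a$ first gets stuck at level $3$ and reports failure, although $(v_1,b,x)\in\phi_3(\graph)$. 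So if a failed lookahead prunes the descent you lose completeness, and if it does not you are back to unboundedly many failed descents between two outputs. Deciding completability of a prefix is an independent-transversal problem over the remaining classes, which one greedy pass cannot solve; and the alternative of precomputing completability for all prefixes is neither of bounded size nor locally repairable under updates (inserting one edge incident to the unique vertex of a singleton colour class flips the answer for $\Omega(n^{c-2})$ prefixes), so the claim that ``a database update invalidates only a bounded local portion of the lookahead'' does not hold.

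The missing idea is the dichotomy the paper draws between \emph{small} colour classes ($|\relT_i^{\graph}|\leq cd$) and \emph{large} ones ($|\relT_i^{\graph}|> cd$). Since any partial tuple blocks at most $cd$ vertices, a descent into a large class \emph{always} succeeds, so no lookahead is needed there. The small classes are handled by explicitly maintaining, in time $d^{\poly(c)}$ per update, the at most $(cd)^{c}$ pairwise-non-adjacent tuples over them; the enumeration iterates over this precomputed set first and only then extends greedily through the large classes. With the classes reordered so that the small ones come first, every recursive call produces at least one output, which is exactly the invariant your delay analysis needs. Without this (or some other mechanism that correctly certifies completability in the presence of small classes), the proof does not go through.
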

\begin{proof}
For a $\schema_c$-db $\graph$ and a vertex $v\in \adom{\graph}$ we let
$\Ngraph(v)$ be the set of all neighbours of $v$ 
in $\graph$. I.e.,
$\Ngraph(v)$ is the set of all $w\in\adom{\graph}$ such that
$(v,w)$ or $(w,v)$ belongs to $\relE^{\graph}$.

The underlying idea of the enumeration procedure is the following
greedy strategy.  We cycle through all vertices 
$\varu_1\in\colourG{1}$,
$\varu_2\in\colourG{2}\setminus \Ngraph(\varu_1)$,
$\varu_3\in\colourG{3}\setminus \bigl(\Ngraph(\varu_1)\cup
\Ngraph(\varu_2)\bigr)$, \ldots,
$\varu_c\in\colourG{c}\setminus \bigcup_{i\leq c-1}\Ngraph(\varu_i)$ and
output $(\varu_1,\ldots,\varu_c)$.
This strategy does not yet lead to a constant delay enumeration, as there might be vertex tuples
$(\varu_1,\ldots,\varu_i)$ (for $i<c$) that do extend to an output tuple
$(\varu_1,\ldots,\varu_c)$, but where many possible extensions are
checked before this output tuple is encountered.  
We now show
how to overcome this problem and describe an enumeration
procedure with $\bigoh(\delaylarge)$ delay and update time
$\updatetimeEnumPhi$. 
  
Note that for every $\setJ\subseteq[c]$ we have
$\Setsize{\bigcup_{j\in \setJ}\Ngraph(\varu_j)}\leq cd$. Hence, if a set
$\colourG{i}$ contains more than $cd$ elements, we know that \emph{every} considered
tuple has an extension
$\varu_i\in \colourG{i}$ that is not a neighbour of any
vertex in the tuple.
Let $\indexSetSmall \deff \setc{i\in[c]}{\setsize{\colourG{i}}\leq cd}$
be the set of \emph{small} colour classes in $\graph$ and to simplify
the presentation we assume without loss of generality that
$\indexSetSmall=\set{1,\ldots,\ellEnum}$.
In our data structure we store the current index set
$\indexSetSmall$ and the set 
\begin{equation}
  \label{eq:9}
  \tupleSetSmallnoIndex \quad\deff\quad \Setc{\
    (\varu_1,\ldots,\varu_\ellEnum)\; \in\; 
   \colourG{1}\times\cdots\times\colourG{\ellEnum}\ }{\ (\varu_j,\varu_{j'})
   \notin \edgeG, \text{ \;for all }j\neq j'\ }
\end{equation}
of tuples on the small colours. Note that a tuple
$(\varu_1,\ldots,\varu_\ellEnum)\in   \colourG{1}\times\cdots\times\colourG{\ellEnum} $ extends to an output tuple
$(\varu_1,\ldots,\varu_c)\in\phi_c(\graph)$ if and only if
it is contained in $\tupleSetSmallnoIndex$.
We store the current sizes of all colours and this enables us to keep the set
$\indexSetSmall$
of small colours updated.
Moreover, as $\setsize{\tupleSetSmallnoIndex}\leq (cd)^c$,  
we can update the set
$\tupleSetSmallnoIndex$ in time $d^{\poly(c)}$ after every update by a
brute-force approach.
The enumeration procedure is given in Algorithm~\ref{alg:enumeration}.
\begin{algorithm} 
  \caption{Enumeration procedure with delay $\bigoh(c^3d)$}\label{alg:enumeration}
\begin{algorithmic}[1]
\ForAll{$(\varu_1,\ldots,\varu_\ellEnum)\in \tupleSetSmallnoIndex$}
\textsc{Enum}$(\varu_1,\ldots,\varu_\ellEnum)$.
\EndFor
\State{Output the end-of-enumeration message $\EOE$.}
\State{}
  \Function{Enum}{$\varu_1,\ldots,\varu_i$}
   \If{$i = \smallc$} output the tuple $(\varu_1,\ldots,\varu_\smallc)$.
   \Else
   \ForAll{$\varu_{i+1}\in\colourG{i+1}$} \label{line:forloop}
     \If{$\varu_{i+1}\notin \bigcup^i_{j=1}
       \Ngraph(\varu_j)$} \label{line:neighbourhoodtest}
     \textsc{Enum}$(\varu_1,\ldots,\varu_i,\varu_{i+1})$. \label{line:recursionenum}
     \EndIf
   \EndFor
   \EndIf
   \EndFunction
\end{algorithmic}
\end{algorithm}

It is straightforward to see that this procedure enumerates
$\phi_c(\graph)$. Let us analyse the delay. Since for all $i>\ellEnum$ we
have $\Setsize{\colourG{i}}>cd$, it follows that every call of
\textsc{Enum}$(\varu_1,\ldots,\varu_i)$ leads to at least one recursive
call of \textsc{Enum}$(\varu_1,\ldots,\varu_i,\varu_{i+1})$.
Furthermore, there are at most $cd$ iterations of the loop in line
\ref{line:forloop} that do \emph{not} lead to a recursive call. As every test
in line~\ref{line:neighbourhoodtest} can be done in time $\bigoh(c)$,
it follows that the time spans until the first recursive call, between
the calls, and after the last call are bounded by $\bigoh(c^2d)$.  As
the recursion depth is $c$, the overall delay between two output
tuples is bounded by $\bigoh(c^3d)$.
\end{proof}

By using similar techniques as in
\cite{DBLP:conf/pods/DurandSS14},
we obtain the following improved version of
Lemma~\ref{lem:enumPhicWeak} where the delay is independent of the
degree bound $d$.

\begin{lemma}\label{lem:enumPhic}
There is a dynamic algorithm that receives a number $c\geq 1$,
a degree bound $d\geq 2$, and a $\schema_c$-db $\graphStart$ of degree
$\leq d$, and computes
within $\preprocessingtime = \updatetimeEnumPhi \cdot
\card{\graphStart}$ preprocessing 
time a data structure
that can be updated in time $\updatetimeEnumPhi$  and allows to
enumerate the query result $\phi_c(\graph)$ with
$\bigOh(\delaysmall)$ delay.
\end{lemma}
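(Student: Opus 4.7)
The plan is to modify the enumeration procedure of Lemma~\ref{lem:enumPhicWeak} so that the delay becomes independent of $d$. The bottleneck of Algorithm~\ref{alg:enumeration} is the inner loop at line~\ref{line:forloop}, which may iterate through $\bigoh(\smallc d)$ forbidden vertices at each large colour level before finding a good one. I replace it with a constant-in-$d$ \emph{skip operation}, adapting the strategy of \cite{DBLP:conf/pods/DurandSS14} to the coloured-graph setting.

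First, I would retain the small-colour tuple set $\tupleSetSmallnoIndex$ and the set $\indexSetSmall$ of small colour classes exactly as in Lemma~\ref{lem:enumPhicWeak}. For each large colour class $\colourG{i}$ with $i\in[\smallc]\setminus\indexSetSmall$, I fix a linear order on its vertices, represented by a doubly-linked list with array-based random access so that successor queries and membership tests take time $\bigoh(1)$. The central new ingredient is a function $\skippp{i}{S}{v}{\graph}$ that, given a large index $i$, a forbidden set $S\subseteq\adom{\graph}$ of size at most $\smallc$, and a position $v\in\colourG{i}$, returns in time $\bigoh(\smallc)$ the first vertex $v'\geq v$ in $\colourG{i}$ satisfying $v'\notin\bigcup_{u\in S}\Ngraph(u)$, or the sentinel $\void$ if no such $v'$ exists.

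To realize this skip, I precompute, for every vertex $w\in\adom{\graph}$ and every large colour class $\colourG{i}$, single-vertex skip pointers $\arskip{i}(w,v)$ giving the first non-neighbor of $w$ in $\colourG{i}$ at position $\geq v$; since $w$ has at most $d$ neighbors, only $\bigoh(d)$ non-trivial pointers per $w$ need to be stored. To handle a set $S$ of size up to $\smallc$ in $\bigoh(\smallc)$ total time, rather than $\bigoh(\smallc d)$ by iterating single-vertex skips, I would additionally precompute composite tables indexed by the local adjacency pattern of $v$ to the at most $\smallc$ forbidden vertices, so that a single lookup combines the effect of all relevant single-vertex skips. Because a database update on a $\schema_c$-db affects only a bounded-radius (in $\smallc$) neighborhood of the touched edge, all affected single-vertex and composite skip structures can be recomputed within the stated budget $\updatetimeEnumPhi=d^{\poly(\smallc)}$, essentially by a local recomputation from scratch.

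With this skip operation in place, the enumeration procedure is Algorithm~\ref{alg:enumeration} with the inner loop of line~\ref{line:forloop} replaced by successive calls $v\gets\skippp{i+1}{\{\varu_1,\ldots,\varu_i\}}{v}{\graph}$ that directly yield the next candidate $\varu_{i+1}$ in time $\bigoh(\smallc)$. Since $|\colourG{i+1}|>\smallc d$ whenever $i+1\notin\indexSetSmall$ and the forbidden neighborhood has size at most $\smallc d$, every call at a large level still leads to at least one recursive descent, so no dead-end backtracking occurs at large levels; at small levels the enumeration is driven by $\tupleSetSmallnoIndex$ exactly as before. Combining the $\bigoh(\smallc)$-time skip with recursion depth $\smallc$ yields delay $\bigoh(\delaysmall)$. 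The main obstacle will be designing the composite skip tables so that they truly provide $\bigoh(\smallc)$-time multi-vertex skips while keeping the maintenance cost within $d^{\poly(\smallc)}$; this is precisely the skip-function machinery of \cite{DBLP:conf/pods/DurandSS14} that must be lifted to the coloured-graph encoding used here.
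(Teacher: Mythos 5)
Your high-level plan coincides with the paper's: keep the greedy enumeration of Lemma~\ref{lem:enumPhicWeak} and replace the inner loop over a large colour class by a skip function $\skipp{i}(\vary,\setV)$ that returns the next vertex of $\colourG{i}$ at position $\geq^i \vary$ which is non-adjacent to every vertex of the forbidden set $\setV$. However, the one step you defer --- ``designing the composite skip tables so that they truly provide $\bigOh(\smallc)$-time multi-vertex skips'' --- is exactly the mathematical core of the proof, and the substitute you sketch does not work as stated. A table ``indexed by the local adjacency pattern of $v$ to the at most $\smallc$ forbidden vertices'' cannot determine the skip destination: whether a candidate $\varz$ further along the list must also be skipped depends on the adjacencies of $\varz$ (and of all intermediate skipped vertices) to $\setV$, not on the adjacencies of the starting position. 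Since up to $\smallc d$ consecutive list elements may have to be skipped, the information needed to answer one skip query is a priori spread over an unbounded portion of the graph, and the domain of $\skipp{i}$ has size $\Omega(|\adom{\graph}|^{\smallc})$, so neither full tabulation nor iterating your single-vertex pointers $\bigOh(\smallc)$ times yields the required access time; iterating them naively costs $\bigOh(\smallc d)$ again.

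The paper closes this gap by defining, inductively for $j=1,\dots,\smallc$, auxiliary edge sets $\edgeE{i}{1}\subseteq\cdots\subseteq\edgeE{i}{\smallc}$ that connect each $\vary\in\colourG{i}$ to the $\relEtilde$-neighbours of vertices that could be skipped when starting at $\vary$, and setting $\setS_i^\vary\deff\setc{\varu}{(\vary,\varu)\in\edgeE{i}{\smallc}}$, a set of size $\bigOh(d^{2\smallc})$. The key statement (Claim~\ref{claim:succinct_skip}) is that $\skipp{i}(\vary,\setV)=\skipp{i}(\vary,\setV\cap\setS_i^\vary)$ for every $\setV$ with $\setsize{\setV}\leq\smallc-1$; its proof is a short but genuine induction that crucially exploits the cardinality bound on $\setV$. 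Only with this claim in hand can one tabulate $\skipp{i}(\vary,\setS')$ for the $\bigOh(d^{3\smallc})$ subsets $\setS'\subseteq\setS_i^\vary$ of size at most $\smallc-1$, answer a query in time $\bigOh(\smallc)$ by intersecting $\setV$ with $\setS_i^\vary$ and performing one lookup, and maintain all of this in time $\updatetimeEnumPhi$ per update (with extra care for $\Delete\;C_i(v)$, where the skip values of the at most $(\smallc-1)d$ predecessors pointing to $v$ must be recomputed, and for $\Insert\;C_i(v)$, where $v$ is placed at the front of the list so that existing skip values are unaffected). You correctly identify where the difficulty lies and which reference resolves it, but as written your argument has a hole precisely at its load-bearing point.
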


Before proving Lemma~\ref{lem:enumPhic}, let us first point out that
Lemma~\ref{lem:enumPhic} in combination with
Lemma~\ref{lem:translationtograph}\eqref{eq:lem:db2graph:part2} directly improves the
delay in Theorem~\ref{thm:enumeration} from $d^{2^{\bigOh(\size{\phi})}}$ to 
$\bigOh(\delayEnumSphere)$, immediately leading
to the following theorem.

\begin{theorem}\label{thm:enumeration-improved}
There is a dynamic algorithm that receives a schema $\schema$, a
degree bound $d\geq 2$, a $k$-ary $\FOmod[\schema]$-query $\phi(\ov{x})$ (for
some $k\in\NN$), and a
$\schema$-db $\DBstart$ of degree $\leq d$, and computes within 
$\preprocessingtime= f(\phi,d)\cdot\size{\DBstart}$
preprocessing time a data structure that can be updated in time
$\updatetime= f(\phi,d)$ and allows to
enumerate $\phi(\DB)$ with  $\bigOh(\delayEnumSphere)$ delay.
\\
The function $f(\phi,d)$ is of the form 
$2^{d^{2^{\bigOh(\size{\phi})}}}$.
\end{theorem}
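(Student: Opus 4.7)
The plan is to obtain Theorem~\ref{thm:enumeration-improved} as an immediate corollary of the two previously established lemmas, exactly as foreshadowed in the paragraph preceding the statement. Concretely, I would invoke Lemma~\ref{lem:translationtograph}\eqref{eq:lem:db2graph:part2}, which reduces the enumeration problem for an arbitrary $k$-ary $\FOmod[\schema]$-query $\phi(\ov{x})$ on $\schema$-databases of degree at most $d$ to the enumeration problem for the fixed queries $\phi_c(z_1,\ldots,z_c)$ (with $1\leq c\leq k$) on $\schema_c$-coloured graphs of degree at most $d'\deff d^{2^{\bigOh(\size{\phi})}}$. According to that lemma the resulting dynamic algorithm has delay $\bigOh(\widehat{\delaytime}\,k)$, update time $(\widehat{\updatetime}+\widehat{\countingtime})\cdot 2^{d^{2^{\bigOh(\size{\phi})}}}$, and initialisation time $\widehat{\inittime}\cdot 2^{d^{2^{\bigOh(\size{\phi})}}}$, where each $\widehat{t_x}$ is the maximum over $c\in[k]$ of the corresponding time bound of the underlying dynamic algorithm for $\phi_c$.

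Next I would instantiate this underlying dynamic algorithm with the one provided by Lemma~\ref{lem:enumPhic}, which enumerates $\phi_c(\graph)$ with delay $\bigOh(c^2)$, update time $\updatetimeEnumPhi = d^{\poly(c)}$, and preprocessing time linear in $\card{\graphStart}$ (with the same $d^{\poly(c)}$ factor). Since $c\leq k$, substituting $d'=d^{2^{\bigOh(\size{\phi})}}$ gives $\widehat{\delaytime}=\bigOh(k^2)$ and $\widehat{\updatetime},\widehat{\inittime}\leq (d')^{\poly(k)} = d^{2^{\bigOh(\size{\phi})}}$. For counting we may reuse Lemma~\ref{lem:countPhic} (whose counting time is $\bigOh(1)$ and whose update time is also of the form $d^{\poly(c)}$), so $\widehat{\countingtime}=\bigOh(1)$ as well.

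Plugging these values into the bounds from Lemma~\ref{lem:translationtograph}\eqref{eq:lem:db2graph:part2} yields delay $\bigOh(\widehat{\delaytime}\,k)=\bigOh(k^3)$, update time $(\widehat{\updatetime}+\widehat{\countingtime})\cdot 2^{d^{2^{\bigOh(\size{\phi})}}} = 2^{d^{2^{\bigOh(\size{\phi})}}}$, and preprocessing time $f(\phi,d)\cdot\size{\DBstart}$ for $f(\phi,d)=2^{d^{2^{\bigOh(\size{\phi})}}}$, matching precisely the bounds claimed in Theorem~\ref{thm:enumeration-improved}. The case $k=0$ is Boolean and already handled by Theorem~\ref{thm:AnsweringBooleanQueries}, so we may assume $k\geq 1$. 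Since all heavy lifting has been done in Lemmas~\ref{lem:translationtograph} and~\ref{lem:enumPhic}, I do not anticipate any technical obstacle; the only point requiring a brief remark is the bookkeeping that the degree bound $d'$ for the coloured graph remains $d^{2^{\bigOh(\size{\phi})}}$ and that the delay bound from Lemma~\ref{lem:enumPhic} is independent of $d'$, which is exactly what makes the final delay depend only on $k$ rather than on $d$.
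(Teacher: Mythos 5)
Your proposal is correct and follows exactly the paper's own route: Theorem~\ref{thm:enumeration-improved} is obtained by plugging the $\bigOh(c^2)$-delay dynamic algorithm of Lemma~\ref{lem:enumPhic} (together with the counting algorithm of Lemma~\ref{lem:countPhic}) into the reduction of Lemma~\ref{lem:translationtograph}\eqref{eq:lem:db2graph:part2}, which yields delay $\bigOh(k^2\cdot k)=\bigOh(k^3)$ and the stated 3-fold exponential update and preprocessing bounds. Your bookkeeping of $d'=d^{2^{\bigOh(\size{\phi})}}$ and the observation that the delay of Lemma~\ref{lem:enumPhic} is independent of the degree are exactly the points the paper relies on.
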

The rest of the section is devoted to the proof of Lemma~\ref{lem:enumPhic}.
\makeatletter{}%

\begin{proof}[Proof of Lemma~\ref{lem:enumPhic}]

Consider Algorithm~\ref{alg:enumeration}, which enumerates $\phi_c(\graph)$ with
$\bigOh(\delaylarge)$ delay.
To enumerate the tuples with only
$\bigoh(\delaysmall)$ delay, we replace the loop in lines
\ref{line:forloop}--\ref{line:recursionenum} by a precomputed
``skip'' function that allows to iterate through all elements in
$\colourG{i+1}\setminus \bigcup^i_{j=1} \Ngraph(\varu_j)$ with $\bigoh(c)$
delay.

For every $i \in [c]$ we store all elements of $\colourG{i}$ in a
doubly linked list and let $\void$ be an auxiliary element that
appears at the end of the list. We let $\firstelement_i$ be the first element
in the list and $\successor_i(\varu)$ the successor of
$\varu\in\colourG{i}$.
We denote by $\leq^i$ the linear order induced by this list. 
We let $\relEtilde^\graph$ be the symmetric closure of $\relE^\graph$,
i.e., $\relEtilde^\graph=\relE^\graph\cup\setc{(v,u)}{(u,v)\in\relE^\graph}$.
For every $i\in[c]$ we define the function 
\[
 \skipp{i}(\vary,\setV)\quad :=\quad 
 \min\left\{ 
   \varz\in\colourG{i}\cup\set{\void}
   \ : \ 
   \vary \leq^i \varz 
   \text{ \ and \ } 
   \text{for all } \varv \in \setV, \  
   (\varv,\varz) \notin \relEtilde^\graph 
 \right\},
\]
which assigns to every $\setV\subseteq \adom{\graph}$ with
$\setsize{\setV}\leq c{-}1$, and every $\vary\in\colourG{i}$ the next node that
is not adjacent to any vertex in $\setV$.

Using these functions, our improved enumeration
algorithm is given in Algorithm~\ref{alg:enumerationconstNEW}.
Below, we show that we can access the values
$\skipp{i}(\vary,\setV)$ in time $\bigOh(c)$. By the same analysis as given
in the proof of Lemma~\ref{lem:enumPhicWeak} it then follows that 
Algorithm~\ref{alg:enumerationconstNEW} enumerates $\phi_c(\graph)$ with $\bigOh(\smallc^2)$ delay.

\begin{algorithm}[h!] 
  \caption{Enumeration procedure with delay $\bigOh(\smallc^2)$}\label{alg:enumerationconstNEW}
\begin{algorithmic}[1]
\ForAll{$(\varu_1,\ldots,\varu_\ellEnum)\in \tupleSetSmallnoIndex$}\label{line:forelementsS}
\State \textsc{Enum}$(\varu_1,\ldots,\varu_\ellEnum)$.
\EndFor
\State{Output the end-of-enumeration message $\EOE$.}
\State{}
  \Function{Enum}{$\varu_1,\ldots,\varu_i$}
  \If{$i = \smallc$}
   \State output the tuple $(\varu_1,\ldots,\varu_\smallc)$.
   \Else
   \State $\vary \gets \skipp{i+1}(\firstelement_{i+1},\set{\varu_1,\ldots,\varu_i})$  \label{line:ysmallest} 
   \While{$\vary \neq \void$} \label{line:loopwhile}
   \State \textsc{Enum}$(\varu_1,\ldots,\varu_{i},\vary)$. 
   \State $\vary \gets \skipp{i+1}(\successor_{i+1}(\vary),\set{\varu_1,\ldots,\varu_i})$. \label{line:???}
   \EndWhile
   \EndIf \label{line:endlastif}
   \EndFunction
\end{algorithmic}
\end{algorithm}

What remains to show is that we can access the values
$\skipp{i}(\vary,\setV)$ for all $i,\vary,V$ in time $\bigOh(c)$ and
maintain them with $\updatetimeEnumPhi$ update time.
At first sight, this is not clear at all, because the domain of
$\skipp{i}$ has size $\Omega(|\adom{\graph}|^c)$.
In what follows, we show that for every $\vary$, the number of distinct
values that $\skipp{i}(\vary,\setV)$ can take is bounded by
$\updatetimeEnumPhi$, and that we can store them in a look-up
table with update time $\updatetimeEnumPhi$.

To illustrate the main idea, let us start with a simple example. 
We want to enumerate $\phi_4$ on a coloured graph $\graphH$ with four
vertex colours blue, red, yellow, and 
green (in this order) and analyse the call of 
\textsc{Enum}$(b,r,y)$, which is supposed to enumerate all green nodes
$g_i$ that are not adjacent to any of the nodes $b$, $r$, and $y$.
The relevant part of $\graphH$ is depicted in Figure \ref{fig:graphh}.

\begin{figure}[htbp]
 \begin{tikzpicture}
 \node[draw=green,fill=green!20,circle] (g1) at (1,1) {$g_1$};
 \node[draw=green,fill=green!20,circle] (g2) at (2,1) {$g_2$};
 \node[draw=green,fill=green!20,circle] (g3) at (3,1) {$g_3$};
 \node[draw=green,fill=green!20,circle] (g4) at (4,1) {$g_4$};
 \node[draw=green,fill=green!20,circle] (g5) at (5,1) {$g_5$};
 \node[draw=green,fill=green!20,circle] (g6) at (6,1) {$g_6$};
 \node[draw=green,fill=green!20,circle] (g6) at (6,1) {$g_6$};
 \node[draw=blue,fill=blue!20,circle] (b) at (2,-1) {$b$};
 \node[draw=red,fill=red!20,circle] (r) at (4,-1) {$r$};
 \node[draw=yellow,fill=yellow!20,circle] (y) at (6,-1) {$\vary$};
 \draw[thick] (b) -- (g1);
 \draw[thick] (b) -- (g2);
 \draw[thick] (b) -- (g3);
 \draw[thick] (r) -- (g4);
 \draw[thick] (y) -- (g6);
\end{tikzpicture}
\centering
\caption{Illustration of the relevant part of graph $\graphH$}\label{fig:graphh}
\end{figure}
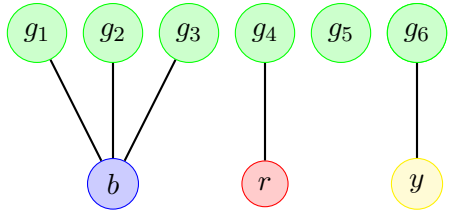

The enumeration procedure starts by considering the first element $g_1$ in the
list of green vertices, but the first element in the actual
output is $g_5=\skipp{i}(g_1,\set{b,r,y})$. 
Therefore, we have to skip the irrelevant vertices $g_1,\ldots,g_4$.

To do this, we want to know the neighbours of the vertices
that we skip ($b$ and $r$ in our example) when looking at $g_1$. For
this purpose, we define inductively new sorts of edges
$\edgeE{4}{1}\subseteq\edgeE{4}{2}\subseteq \cdots$ that connect
green vertices $g_i$ with $\relEtilde$-neighbours of skipped vertices.
In our example, we first have to skip $g_1$, because it is $\relEtilde$-connected to $b$ and
we indicate this by letting $\edgeE{4}{1}$ be the set of tuples $(g_i,\varv)\in \relEtilde^\graphH$ (see Figure~\ref{fig:graphh-eone}).

\begin{figure}[htbp]
 \begin{tikzpicture}
 \node[draw=green,fill=green!20,circle] (g1) at (-1,1) {$g_1$};
 \node[draw=green,fill=green!20,circle] (g2) at (1,1) {$g_2$};
 \node[draw=green,fill=green!20,circle] (g3) at (3,1) {$g_3$};
 \node[draw=green,fill=green!20,circle] (g4) at (4,1) {$g_4$};
 \node[draw=green,fill=green!20,circle] (g5) at (5,1) {$g_5$};
 \node[draw=green,fill=green!20,circle] (g6) at (6,1) {$g_6$};
 \node[draw=green,fill=green!20,circle] (g6) at (6,1) {$g_6$};
 \node[draw=blue,fill=blue!20,circle] (b) at (1,-1) {$b$};
 \node[draw=red,fill=red!20,circle] (r) at (4,-1) {$r$};
 \node[draw=yellow,fill=yellow!20,circle] (y) at (6,-1) {$\vary$};
 \draw[thick] (b) -- (g1);
 \draw[thick] (b) -- (g2);
 \draw[thick] (b) -- (g3);
 \draw[thick] (r) -- (g4);
 \draw[thick] (y) -- (g6);
 \tikzstyle{eoneedges}=[red,thick,bend left,bend angle=45,<-,dashed]
 \draw[eoneedges] (b) to (g1);
 \draw[eoneedges] (b) to (g2);
 \draw[eoneedges] (b) to (g3);
 \draw[eoneedges] (r) to (g4);
 \draw[eoneedges] (y) to (g6);
 \draw[thick] (8,1) -- (10,1) node[right] {$\relEtilde$};
 \draw[eoneedges] (10,0) node[right] {$\edgeE{4}{1}$}-- (8,0) ;
\end{tikzpicture}
\centering
\caption{$\relEtilde$-edges and $\edgeE{4}{1}$-edges in our example}\label{fig:graphh-eone}
\end{figure}
After realising that even more vertices ($g_2$ and $g_3$) are excluded
by $b$, the next try would be $g_4$.
However, this vertex is excluded by its $\relEtilde$-neighbour $r$, so we have to
take $r$ into account when computing the skip value for $g_1$ and
indicate this by the $\edgeE{4}{2}$-edge $(g_1,r)$ (see Figure~\ref{fig:graphheoneetwo}).
This immediately leads to an inductive definition: $\edgeE{4}{2}$
contains all pairs of vertices that are already in $\edgeE{4}{1}$ or
connected by a path as shown in Figure~\ref{fig:pathetwo}.

\begin{figure}[htbp]
 \begin{tikzpicture}
 \node[draw=green,fill=green!20,circle] (g1) at (-1,1) {$g_1$};
 \node[draw=green,fill=green!20,circle] (g2) at (1,1) {$g_2$};
 \node[draw=green,fill=green!20,circle] (g3) at (3,1) {$g_3$};
 \node[draw=green,fill=green!20,circle] (g4) at (4,1) {$g_4$};
 \node[draw=green,fill=green!20,circle] (g5) at (5,1) {$g_5$};
 \node[draw=green,fill=green!20,circle] (g6) at (6,1) {$g_6$};
 \node[draw=green,fill=green!20,circle] (g6) at (6,1) {$g_6$};
 \node[draw=blue,fill=blue!20,circle] (b) at (1,-1) {$b$};
 \node[draw=red,fill=red!20,circle] (r) at (4,-1) {$r$};
 \node[draw=yellow,fill=yellow!20,circle] (y) at (6,-1) {$\vary$};
 \draw[thick] (b) -- (g1);
 \draw[thick] (b) -- (g2);
 \draw[thick] (b) -- (g3);
 \draw[thick] (r) -- (g4);
 \draw[thick] (y) -- (g6);
 \tikzstyle{eoneedges}=[red,thick,bend left,bend angle=45,<-,dashed]
 \tikzstyle{eoneedgesr}=[red,thick,bend right,bend angle=45,<-,dashed]
 \draw[eoneedges] (b) to (g1);
 \draw[eoneedges] (b) to (g2);
 \draw[eoneedges] (b) to (g3);
 \draw[eoneedgesr] (r) to (g4);
 \draw[eoneedges] (y) to (g6);
 \tikzstyle{etwoedges}=[blue,thick,<-,dotted]
 \draw[etwoedges] (r) to (g1);
 \draw[etwoedges] (r) to (g2);
 \draw[etwoedges] (r) to (g3); 
 \draw[thick] (8,1) -- (10,1) node[right] {$\relEtilde$};
 \draw[eoneedges] (10,0) node[right] {$\edgeE{4}{1}$}-- (8,0) ;
 \draw[etwoedges] (10,-1) node[right] {$\edgeE{4}{2} \setminus \edgeE{4}{1}$}-- (8,-1) ;
\end{tikzpicture}
\centering
\caption{$\relEtilde$-edges, $\edgeE{4}{1}$-edges and $\edgeE{4}{2}$-edges in our example}\label{fig:graphheoneetwo}
\end{figure}

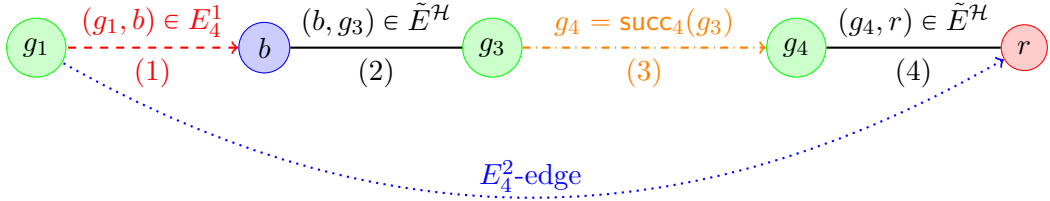
\begin{figure}[htbp]
\begin{tikzpicture}
 \node[draw=green,fill=green!20,circle] (g1) at (0,0) {$g_1$};
 \node[draw=blue,fill=blue!20,circle] (b) at (3,0) {$b$};
 \node[draw=green,fill=green!20,circle] (g3) at (6,0) {$g_3$};
 \node[draw=green,fill=green!20,circle] (g4) at (10,0) {$g_4$};
 \node[draw=red,fill=red!20,circle] (r) at (13,0) {$r$};
 \tikzstyle{eoneedges}=[red,thick,<-,dashed]
 \draw[eoneedges] (b) -- (g1) node[midway,above] {$(g_1,b) \in \edgeE{4}{1}$} node[midway,below] {$(1)$};
 \draw[thick] (b) -- (g3) node[midway,above] {$(b,g_3) \in \relEtilde^\graphH$} node[midway,below] {$(2)$};
 \draw[orange,dashdotted,->,thick] (g3) -- (g4) node[midway,above] {$g_4 = \successor_4(g_3)$} node[midway,below] {$(3)$};
 \draw[thick] (r) -- (g4) node[midway,above] {$(g_4,r) \in \relEtilde^\graphH$} node[midway,below] {$(4)$}; 
 \tikzstyle{etwoedges}=[blue,thick,<-,dotted]
 \draw[etwoedges,bend left, bend angle=45] (r) to (g1) ;
 \node[blue] at (6.5,-1.7) {$\edgeE{4}{2}$-edge};
\end{tikzpicture}
\centering
 \caption{Introducing an $\edgeE{4}{2}$-edge between $g_1$ and $r$}\label{fig:pathetwo}
\end{figure}

The idea outlined above can be formalised as follows.
For $i,j \in [c]$, we define inductively the
auxiliary edge sets $\edgeE{i}{j}$:
\begin{align*}
 \edgeE{i}{1} &\ \ := \ \ \setc{\,(\vary,\varu)}{\vary\in
                              C_i^\graph \text{ and } (\vary,\varu)
                              \in \relEtilde^\graph\,} \quad\text{and}
                            \\
 \edgeE{i}{j+1} &\ \ := \ \ \edgeE{i}{j}\,\cup\,
     \Setc{\,(\vary,\varu)}{
              \text{there are } \varv, \varz \text{ with }
              (\vary,\varv)\in\edgeE{i}{j}, \
              (\varv,\varz)\in \relEtilde^\graph, \
              (\successor_i(\varz),\varu)\in\relEtilde^\graph
      \,}
\end{align*}
Now we define for every $\vary\in C_i^\graph$ the set 
\[
  \setS_i^\vary \ \ := \ \ \setc{\,\varu}{(y,u)\in \edgeE{i}{c}\,}\,.
\]
Note that $\setsize{\setS_i^\vary}= \bigOh(d^{2c})$.
The following claim states that $\setS_i^\vary$ are the only 
vertices we need to take into account when computing
$\skipp{i}(\vary,\setV)$.

\begin{claim}\label{claim:succinct_skip}
For all $i\leq c$, $\vary\in C_i^\graph\cup\set{\void}$, and
$\setV\subseteq\adom{\graph}$ with $\setsize{V}\leq c{-}1$ it holds that
\begin{equation}
\skipp{i}(\vary,\setV)
\ \ = \ \
\skipp{i}(\vary,\setV\cap\setS_i^\vary)\,.  
\end{equation}
\end{claim}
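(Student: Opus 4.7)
The plan is to prove the two inequalities $\skipp{i}(\vary, \setV \cap \setS_i^\vary) \leq^i \skipp{i}(\vary, \setV)$ and its reverse separately. The first is immediate from the definitions, since enlarging the set of \emph{blockers} (from $\setV \cap \setS_i^\vary$ to $\setV$) can only push the skip target further along $\leq^i$. The case $\vary = \void$ is trivial, since both sides equal $\void$; I henceforth assume $\vary \in C_i^\graph$ and abbreviate $z^\ast := \skipp{i}(\vary, \setV)$ and $z^{\ast\ast} := \skipp{i}(\vary, \setV \cap \setS_i^\vary)$.

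For the nontrivial direction I argue by contradiction: suppose $z^{\ast\ast} <^i z^\ast$. Walking along the $\successor_i$-chain starting at $\vary$, set $z_1 := \vary$ and $z_{k+1} := \successor_i(z_k)$, and let $K$ be the smallest index with $z_K = z^\ast$; then $z^{\ast\ast} = z_m$ for some $m < K$, and each $z_k$ with $k < K$ is blocked by at least one vertex of $\setV$. The key reduction is to show that every such blocker already lies in $\setS_i^\vary$; this forces the blocker of $z^{\ast\ast}$ to lie in $\setV \cap \setS_i^\vary$, contradicting $z^{\ast\ast}$ passing the test against $\setV \cap \setS_i^\vary$.

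The crux is a bookkeeping by \emph{distinct} blockers. List them in order of first appearance as $w_1, \ldots, w_t$, where $w_s$ first blocks $z_{m_s}$ and $1 = m_1 < m_2 < \cdots < m_t \leq K-1$; since each $w_s$ is a distinct element of $\setV$, we have $t \leq |\setV| \leq c-1$. I then prove by induction on $s$ that $(\vary, w_s) \in \edgeE{i}{s}$. The base case $s = 1$ follows from $(\vary, w_1) \in \relEtilde^\graph$ (using symmetry of $\relEtilde^\graph$) together with the definition of $\edgeE{i}{1}$. For the inductive step, I note that $z_{m_s - 1}$ (which exists since $m_s \geq 2$) must be blocked by some $w_{s'}$ with $s' < s$, because $w_s$ has not yet appeared as a blocker at index $m_s - 1$. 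Combining $(\vary, w_{s'}) \in \edgeE{i}{s'} \subseteq \edgeE{i}{s-1}$ (by the induction hypothesis) with the $\relEtilde^\graph$-edges $(w_{s'}, z_{m_s - 1})$ and $(z_{m_s}, w_s)$ and the identity $\successor_i(z_{m_s - 1}) = z_{m_s}$ instantiates the inductive clause in the definition of $\edgeE{i}{s}$, yielding $(\vary, w_s) \in \edgeE{i}{s}$.

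Since $t \leq c-1$, every $w_s$ lies in $\setS_i^\vary$, which closes the contradiction. The main obstacle, and the reason for the refined accounting, is that the total number of skipped vertices between $\vary$ and $z^\ast$ can be as large as $(c-1)d$ (a single blocker in $\setV$ may eliminate many consecutive successors), whereas $\setS_i^\vary$ unfolds $\edgeE{i}{j}$ only to depth $c$. Indexing the induction by first appearance of distinct blockers rather than by individual skip steps keeps the required inductive depth bounded by $c-1$, which is exactly what makes the argument go through.
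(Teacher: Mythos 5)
Your proof is correct, and it takes a genuinely different route through the hard direction than the paper does. The paper isolates the single step from the predecessor $\varz'$ of $\varz:=\skipp{i}(\vary,\setV\cap\setS_i^\vary)$ to $\varz$ itself: it picks a blocker $\varv\in\setV\cap\setS_i^\vary$ of $\varz'$, reduces the whole claim to showing $(\vary,\varv)\in\edgeE{i}{c-1}$, and establishes that membership by a pigeonhole on the monotone chain $\setV_1\subseteq\cdots\subseteq\setV_c$ of traces of the relations $\edgeE{i}{j}$ on $\setV$, using $\setsize{\setV\cap\setS_i^\vary}\leq c-2$ to force the chain to stabilise by level $c-1$. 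You instead walk forward along the $\successor_i$-chain from $\vary$, fix one blocker per skipped vertex, and induct on the order of first appearance of the (at most $c-1$) distinct blockers, proving $(\vary,w_s)\in\edgeE{i}{s}$ directly; the inductive step correctly instantiates the clause defining $\edgeE{i}{s}$ with the blocker of $z_{m_s-1}$ as the intermediate witness. Both arguments ultimately rest on the same counting fact -- only $\setsize{\setV}\leq c-1$ distinct blockers can occur, so depth $c$ in the $\edgeE{i}{j}$ hierarchy suffices even though up to $(c{-}1)d$ consecutive list elements may be skipped -- and your closing remark correctly identifies this as the crux. A concrete advantage of your version is that it is fully self-contained: the paper's stabilisation step (asserting $\setV_j=\setV_{j+1}$ implies $\setV_j=\cdots=\setV_c$) is stated without proof and is not immediate from the definition of $\edgeE{i}{j+1}$, since the intermediate witness $\varv$ there is not required to lie in $\setV$; your forward induction never needs any such stabilisation. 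Two minor wording points: you announce that \emph{every} blocker lies in $\setS_i^\vary$, whereas what you prove (and all you need) is that the blockers you \emph{chose}, one per skipped vertex, lie in $\setS_i^\vary$ -- in particular the chosen blocker of $\skipp{i}(\vary,\setV\cap\setS_i^\vary)$, which yields the contradiction; and it would be worth stating explicitly at the outset that you fix such an assignment $k\mapsto b_k$ of blockers before extracting the list $w_1,\ldots,w_t$ of its distinct values.
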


\begin{proof}
The proof is identical to the proof of Claim~1 in \cite{DBLP:conf/pods/DurandSS14}.
For the reader's convenience, we include a proof here.

If $\vary=\void$, the lemma is trivial. Hence assume that
$\vary\neq\void$ and let $\varz := \skipp{i}(\vary,\setV\cap\setS_i^\vary)$. By definition we
have $\vary\leq^i\varz \leq^i \skipp{i}(\vary,\setV)$ and therefore we have to
show $\varz \geq^i \skipp{i}(\vary,\setV)$, which holds if and only if
$(\varu,\varz)\notin \relEtilde^\graph$ for all $\varu\in \setV\setminus
\setS_i^\vary$.
If $\varz=\vary$, the claim clearly 
holds as all $\relEtilde^\graph$-neighbours of $\vary$ are
contained in $\setS_i^\vary$.
Hence we have $\varz >^i \vary$ and let $\varz'\geq^i\vary$ be the
predecessor of $\varz$, i.e., $\varz=\successor_i(\varz')$.
Now assume for contradiction that there is an $\varu\in \setV\setminus
\setS_i^\vary$ such that (${\ast}$)~$(\varu,\varz)\in \relEtilde^\graph$.
Note that since $\varz'<^i\varz=\skipp{i}(\vary,\setV\cap\setS_i^\vary)$, there is
a $\varv\in \setV\cap\setS_i^\vary$ such that (${\ast}{\ast}$)~$(\varv,\varz')\in \relEtilde^\graph$.
In the following we show that 
(${\ast}{\ast}{\ast}$)~$(\vary,\varv)\in\edgeE{i}{c-1}$. Note that this finishes the proof of
the claim, as by the definition of
$\edgeE{i}{c}$, the statements (${\ast}$), (${\ast}{\ast}$), and (${\ast}{\ast}{\ast}$) imply
that $\varu\in\setS_i^\vary$, contradicting the assumption that $\varu\in \setV\setminus
\setS_i^\vary$.

To show that $(\vary,\varv)\in\edgeE{i}{c-1}$, let
\begin{equation}
  \label{eq:8}
  \setV_j \ \ := \ \ \setc{\,\varv'\in \setV}{(\vary,\varv')\in \edgeE{i}{j}\,}
\end{equation}
for all $j\in[c]$.
Note that $\setV_c=\setV\cap \setS_i^{\vary}$.
Furthermore,
if there is a $j<c$ with $\setV_j=\setV_{j+1}$, then we have
\begin{equation}
  \label{eq:10}
  \setV_j \  = \ 
  \setV_{j+1}  \ = \ 
  \cdots \  = \ 
  \setV_{c} \ =  \ \setV\cap\setS_i^{\vary}\,.
\end{equation}
Since $\setsize{V}\leq c{-}1$ and $\varu\in \setV\setminus
\setS_i^\vary$, we have $\setsize{\setV\cap\setS_i^{\vary}}\leq c-2$.
In particular, it holds that $\setV_{c-1}=\setV\cap\setS_i^{\vary}$.
Since $\varv\in\setV\cap\setS_i^{\vary}$, it holds that
$\varv\in\setV_{c-1}$ and thus $(\vary,\varv)\in\edgeE{i}{c-1}$.
\end{proof}

In our dynamic algorithm we maintain an array that allows random
access to the values $\skipp{i}(\vary,\setS')$ for all $\vary\in
C_i^\graph$ and all $\setS'\subseteq \setS_i^\vary$ of size at most $c{-}1$.
By Claim~\ref{claim:succinct_skip} we can then compute
$\skipp{i}(\vary,V)$ by first computing $\setS'=\setV\cap\setS_i^\vary$ and
then looking up $\skipp{i}(\vary,\setS')$. This can be done in time $\bigOh(c)$. 
The next claim states that we can efficiently maintain the sets $\setS_i^\vary$.

\begin{claim}\label{claim:updateS}
  There is a data structure that
 \begin{enumerate}
 \item stores the elements from the sets $\setS_i^{\vary}$ and
 all subsets $\setS'\subseteq\setS_i^{\vary}$ of cardinality at most
 $c{-}1$,
 \item 
   allows to test membership in these sets in time $\bigOh(1)$, and
 \item can be updated in time $d^{\poly(c)}$ after every update of
     the form $\Insert\;C_i(v)$, $\Delete\;C_i(v)$, $\Insert\;E(u,v)$,
     and $\Delete\;E(u,v)$.
 \end{enumerate}
\end{claim}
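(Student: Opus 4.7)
The plan is to exploit the locality of the inductive definition of $\edgeE{i}{j}$ given just before the claim. Since $\relEtilde^\graph$ is bounded-degree (inheriting the bound of $\graph$), and each iteration in the definition performs at most two $\relEtilde^\graph$-traversals interleaved with a single $\successor_i$-step, a straightforward induction shows that $\setS_i^\vary$ can be computed from scratch in time $d^{\poly(c)}$ by evaluating $\edgeE{i}{1},\dots,\edgeE{i}{c}$ for $c$ rounds, and that $|\setS_i^\vary|\leq d^{\bigOh(c)}$ (as already noted before the claim). For storage, for every pair $(i,\vary)$ with $\vary\in C_i^\graph$ I would keep $\setS_i^\vary$ as an explicit list together with a characteristic bit indexed by vertex identifier (using the RAM-model array assumption from Section~\ref{section:Preliminaries}), so that $\varu\in\setS_i^\vary$ can be tested in $\bigOh(1)$. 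I would additionally enumerate and store all $d^{\poly(c)}$ subsets $\setS'\subseteq\setS_i^\vary$ of cardinality at most $c{-}1$, each with its own characteristic bit-structure indexed by vertex identifier, supporting $\bigOh(1)$ membership queries for these as well.

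For the update routine, the key observation is that each update touches only $d^{\poly(c)}$ pairs $(i,\vary)$. Indeed, every pair $(\vary,\varu)\in\edgeE{i}{c}$ is certified by a witness consisting of at most $2c-1$ $\relEtilde^\graph$-hops interleaved with at most $c-1$ $\successor_i$-steps; since $\relEtilde^\graph$ has bounded degree and the predecessor function $\successor_i^{-1}$ in the doubly-linked list for colour $i$ is single-valued, a reverse traversal of depth $c$ started from the vertices or edge-endpoints involved in the update collects all potentially affected $\vary$'s within $d^{\poly(c)}$ candidates. The update routine then iterates over these candidate pairs $(i,\vary)$ and recomputes $\setS_i^\vary$ together with all of its size-$\leq c{-}1$ subsets from scratch, in total time $d^{\poly(c)}$; simultaneously it refreshes the associated characteristic arrays so that the $\bigOh(1)$ membership invariant is restored.

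The main obstacle is the correct handling of updates $\Insert\;C_i(v)$ and $\Delete\;C_i(v)$, since these not only modify the colour relation $C_i^\graph$ but also rewire the doubly-linked list for colour $i$ and thereby change the function $\successor_i$, which is used implicitly in the definition of $\edgeE{i}{j+1}$. Fortunately, such an operation changes $\successor_i$ at only one position (the predecessor of $v$ in the list), so the set of witnesses that are newly created or destroyed is still captured by a bounded-depth reverse traversal, and the preceding analysis applies uniformly. Overall, the update cost is dominated by the recomputation of at most $d^{\poly(c)}$ sets, each in $d^{\poly(c)}$ time, yielding the claimed $d^{\poly(c)}$ update time.
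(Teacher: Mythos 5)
Your proposal is correct and follows essentially the same route as the paper: both arguments rest on the observation that membership in $\edgeE{i}{c}$ (equivalently, in $\setS_i^\vary$) is certified by a witness path of length $\bigOh(c)$ built from bounded-degree $\relEtilde^\graph$-hops and single-valued $\successor_i$-steps, so a bounded-depth (reverse) traversal from the updated edge or vertex locates the $d^{\poly(c)}$ affected pairs $(i,\vary)$, whose sets and size-$({\leq}c{-}1)$ subsets are then recomputed and stored with constant-time-access arrays. If anything, you are more explicit than the paper (which leaves the details to the reader) about the subtlety that $\Insert\;C_i(v)$ and $\Delete\;C_i(v)$ rewire $\successor_i$ at one position and must be tracked through the witness paths.
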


\begin{proof}
 Note that $u\in \setS_i^{\vary}$ $\iff$ $(y,u)\in \edgeE{i}{c}$.
 We store the edge sets $\edgeE{i}{j}$ for all $i,j\in[c]$ in adjacency lists and
 additionally maintain arrays to allow constant-time access to the list
 entries. This allows us to store a list of elements from
 $\setS_i^{\vary}$ and access the elements in $\setS_i^{\vary}$ in
 constant time. 
 Moreover, as the size of $\setS_i^{\vary}$ is bounded
 by $\bigOh(d^{2c})$, 
 the number of subsets $\setS'\subseteq\setS_i^{\vary}$ of cardinality at most
 $c{-}1$ is bounded by $\bigOh(d^{3c})$. Consequently, 
 we can provide constant-time access to all these subsets $\setS'$.

 On every insertion or deletion of an edge in $E^{\graph}$, as well as every insertion or
 deletion of a vertex in $C_i^\graph$, at most $\bigOh(d)$ pairs in the
 relation $\edgeE{i}{1}$ change and the relation can be updated in
 time $\bigOh(d)$.
 Afterwards we update the edge sets $\edgeE{i}{j}$ according
 to their inductive definition. To do this efficiently, we use a
 breadth-first search starting from $\varu$ and $\varv$, for every
 tuple $(\varu,\varv)$ that has changed in relation $\edgeE{i}{1}$,  up to depth
 $3c$ to identify the relevant nodes that are affected by the change.
 By using the adjacency lists, this can
 be done in time $d^{\poly(c)}$ as the degree of the edge sets is
 bounded by $d^{\poly(c)}$.
 We leave the details to the reader.
 \end{proof}

 In our data structure we store the values
 $\skipp{i}(\vary,\setS')$
 for every $i \in [\smallc]$, $\vary \in \colourG{i}$ and for all sets
 $\setS'\subseteq\setS_i^{\vary}$ of cardinality at most $c{-}1$.
 On every insertion or deletion of an edge, we update the sets $\setS_i^{\vary}$ and
 their subsets $\setS'$ of cardinality at most $c{-}1$ and update
 affected values of $\skipp{i}(\vary,\setS')$.
 According to Claim~\ref{claim:updateS} this can be done in time $d^{\poly(\varphi)}$.

 We do the same on updates of the form $\Insert\;C_i(v)$ and
 $\Delete\;C_i(v)$, but have to do some additional work, as $v$ might
 occur in the image of skip-functions.
 Upon $\Insert\;C_i(v)$, we insert $v$ at the beginning of the list
 $C_i$. This ensures that existing skip values will not be
 affected. Afterwards, we compute the set $S_i^v$ and the values
 $\skipp{i}(v,\setS')$ for all $S'\subseteq S_i^v$ of
 cardinality at most $c{-}1$.
 Again, this can be done in time $d^{\poly(\varphi)}$.

 If we receive the update $\Delete\;C_i(v)$, then we have to recompute
 all skip values $\skipp{i}(y,\setS')$ that point to $v$. 
 Note that (since $\graph$ has degree $\leq d$) this is only the case
 for nodes $y\leq^iv$ whose distance from $v$ w.r.t.\ $\successor_i$ is at most $(c{-}1)d$.
 Hence, it suffices to recompute $\skipp{i}(y,\setS')$ for
 at most $(c{-}1)d$ vertices $y$ and all $S'\subseteq  S_i^y$  of cardinality at most $c{-}1$. 
 This can be done in time $d^{\poly(\varphi)}$.
By Claim~\ref{claim:succinct_skip}, all this suffices to access the value
for $\skipp{i}(\vary,\setV)$ in time $\bigOh(c)$. This concludes the
proof of Lemma~\ref{lem:enumPhic}.
\end{proof}

\makeatletter{}%
\section{Conclusion}\label{sec:conclusion}

Our main results show that in the dynamic setting (i.e., allowing
database updates), the results of
$k$-ary $\FOmod$-queries on bounded degree databases can be
tested and counted in constant time and
enumerated with constant delay,
after linear time preprocessing and with
constant update time. Here, ``constant time'' refers to data
complexity and is of size $\poly(k)$ concerning the delay and
the time for testing and counting. The time for performing a database
update is 3-fold exponential in the size of the query and the degree
bound, and is worst-case optimal.

The starting point of our algorithms is to decompose the given query
into a query in Hanf normal form, using a recent result of
\cite{HKS_Hanf_LICS16}. This normal form is only available for the
setting with a fixed maximum degree bound $d$, i.e., the setting
considered in this paper. 

Recently, Kuske and Schweikardt \cite{KuskeSchweikardt-FOCN}
introduced a new kind of Hanf normal form for a variant of
\emph{first-order logic with counting} that contains and extends Libkin's logic
$\LogicFont{FO(Cnt)}$ \cite{Lib04} and Grohe's logic
$\LogicFont{FO{+}C}$ \cite{Gro13}. As an application it is shown in
\cite{KuskeSchweikardt-FOCN} that the present paper's techniques can
be lifted from $\FOmod$ to 
first-order logic with counting.

An obvious future task is to investigate to
which extent further query evaluation results that are known for the static
setting can be lifted to the dynamic setting. More specifically: Are there efficient
dynamic algorithms for evaluating (i.e., answering, testing, counting, or
enumerating) results of first-order queries on other sparse classes
of databases (e.g.\ planar, bounded treewidth, bounded expansion, nowhere dense)
or databases of low degree, 
lifting the ``static'' results accumulated in
\cite{Kazana.2013,Grohe.2014,DBLP:conf/pods/DurandSS14} to the dynamic
setting?

\bibliography{literature}

\end{document}